\documentclass[11pt,a4paper]{article}
\usepackage{mathtools}
\usepackage{authblk} 
\usepackage{algpseudocode,algorithmicx,algorithm}

\usepackage{mathrsfs}
\usepackage{latexsym,bm}
\usepackage{amsmath,amsfonts,amssymb,amsthm}
\usepackage{extarrows}

\usepackage{graphicx,subfigure,epstopdf,float}
\usepackage{enumerate,cases,multirow}
\usepackage{makecell}
\usepackage{caption}

\usepackage{xcolor}

\usepackage{longtable,colortbl,arydshln,threeparttable}
\definecolor{mygray}{gray}{.9}

\usepackage{indentfirst}
\usepackage[top=25mm,bottom=20mm,left=25mm,right=20mm]{geometry}
\usepackage{cite}

\usepackage{listings}

\usepackage{makeidx}        
\usepackage{booktabs}
\usepackage[bookmarks,bookmarksnumbered,colorlinks,citecolor=blue,linkcolor=red,hyperindex,linktocpage=true]{hyperref}

\newcommand{\ket}[1]{| #1 \rangle} 
\newcommand{\bra}[1]{\langle #1 |} 

\newcommand{\bb}{\boldsymbol}

\def \d {\mathrm{d}}
\def \e {\mathrm{e}}
\def \i {\mathrm{i}}
\def \P {\mathrm{P}}

\DeclareMathOperator{\diag}{diag}

\newcounter{parentalgorithm}

\makeatother

\newtheorem{theorem}{Theorem}[section]
\newtheorem{lemma}{Lemma}[section]

\newtheorem{definition}{Definition}[section]

\theoremstyle{remark}
\newtheorem{remark}{\bf Remark}[section]

\numberwithin{equation}{section}

\hypersetup{CJKbookmarks=true}
\usepackage[qm]{qcircuit}

\begin{document}

\title{\bfseries A unifying framework for quantum algorithms for time-dependent non-unitary dynamics}

	\author[1,2,3]{Xiaojing Dong\thanks{dongxiaojing99@xtu.edu.cn}}
    \author[1,2,3]{Yizhe Peng\thanks{pengyizhe@smail.xtu.edu.cn}}
	\author[1,2,3]{Yue Yu\thanks{terenceyuyue@xtu.edu.cn} }

\affil[1]{School of Mathematics and Computational Science, Xiangtan University, Xiangtan, Hunan 411105, China}
\affil[2]{Hunan Research Center of the Basic Discipline Fundamental Algorithmic Theory and Novel Computational Methods, Xiangtan, Hunan 411105, China}
\affil[3]{National Center for Applied Mathematics in Hunan, Xiangtan, Hunan 411105, China}

	\maketitle

\begin{abstract}
Quantum algorithms for simulating linear differential equations have attracted growing interest, driven by applications ranging from Hamiltonian dynamics to general non-unitary dynamics. While time-independent cases are well studied, time-dependent non-unitary dynamics remains considerably less explored, and it is unclear how to systematically adapt existing solvers for time-independent systems to such problems. In this work, we address this gap by introducing an autonomization framework based on the clock-variable formulation, a technique originally developed for time-dependent Hamiltonian systems in~\cite{CJL23TimeSchr}. By lifting the original non-autonomous system to an autonomous transport-type equation on an extended space and applying the Fourier spectral discretization in the clock variable, we obtain an explicit time-independent linear system, together with a suitable initial state and a recovery map for the target solution. Crucially, this formulation decouples the treatment of time dependence from the choice of the quantum ODE solver, thereby enabling the direct application of existing solvers designed for time-independent systems to the resulting autonomous problem. We combine this framework with Schr\"odingerization and a Taylor-expansion-based quantum ODE solver. In the Schr\"odingerization-based combination, our complexity analysis shows that the precision dependence can scale as $\log^{5/4}(1/\varepsilon)$,  improving upon the $\log^2(1/\varepsilon)$ scaling found in existing approaches. Numerical experiments validate the autonomization formulation and confirm the successful recovery of the target solution.
\end{abstract}



\section{Introduction}
Quantum algorithms provide a potential approach to high-dimensional linear dynamics, for which classical algorithms often suffer from the curse of dimensionality. Quantum computers are naturally suited for simulating unitary dynamics, and Hamiltonian simulation is a central algorithmic task in quantum computation~\cite{Berry-Childs-Kothari-2015,Low2019Interaction,BerryChilds2020TimeHamiltonian}. However, many problems in scientific computing give rise to non-unitary dynamics. For instance, such dynamics arise from dissipation, source terms, boundary conditions, and spatial discretizations of time-dependent partial differential equations (PDEs)~\cite{Cao2013Poisson,Berry2014Highorder,qFEM-2016,Costa2019Wave,Engel2019qVlasov,Childs-Liu-2020,Linden2020heat,Childs2021high,JLY2022multiscale,JLY22nonlinear,JLLY23ABC,JLY22SchrLong,JLY22SchrShort,JLLY2024boundary,JinLiu2022nonlinear}. In general, these non-unitary dynamics cannot be implemented directly as unitary circuits. They must instead be encoded in a form compatible with quantum linear-algebra primitives or Hamiltonian simulation.

In this paper, we consider the time-dependent or non-autonomous linear non-unitary dynamic system:
\begin{equation}\label{ODElinear}
		\begin{cases}
			\dfrac{\d \bb{x}(t)}{\d t } = A(t) \bb{x}(t) + \bb{b}(t),  \quad t\in (0, T),\\
			\bb{x}(0) = \bb{x}_0,
		\end{cases}
\end{equation}
where $A(t)\in\mathbb C^{N\times N}$, $\bb{b}(t)\in\mathbb C^N$, and $N=2^n$. The task is to prepare a quantum state proportional to $\bb{x}(T)$, within a prescribed precision and with $\Omega(1)$ success probability. System~\eqref{ODElinear} contains time-dependent Hamiltonian simulation as a special case. Indeed, when $\bb b(t)=0$ and $A(t)$ is anti-Hermitian, the evolution is unitary and can be simulated by time-dependent Hamiltonian simulation methods~\cite{Berry-Childs-Kothari-2015,Low2019Interaction,BerryChilds2020TimeHamiltonian}. For a general linear differential equation, the evolution is non-unitary and cannot be implemented directly as a unitary circuit. Existing quantum algorithms for such dynamics can be viewed through two settings, depending on whether the coefficients are time-independent or time-dependent.

For time-independent or autonomous cases, where $A(t)\equiv A$ and $\bb b(t) \equiv \bb b$ are time-independent, many quantum ODE solvers have been developed. One approach discretizes the time variable and applies a Quantum Linear System Algorithm (QLSA) to the resulting linear system~\cite{Berry2014Highorder,BerryChilds2017ODE,Childs-Liu-2020,JLY2022multiscale,WL24,KroviODE,Dong2025Pade}. A representative result is the algorithm introduced in~\cite{BerryChilds2017ODE}, which we call the BCOW algorithm. It is the first quantum algorithm for linear differential equations that achieves an exponential improvement in the precision dependence. Other approaches write the non-unitary evolution as a linear combination of unitary evolutions and implement the combination by linear combination of unitaries (LCU)~\cite{ALL2023LCH,ACL2023LCH2,low2025optimal,wang2025quantum,JinMaZuazua2026Transmutation}. This includes the Linear Combination of Hamiltonian Simulation (LCHS) \cite{ALL2023LCH,ACL2023LCH2} and its improved construction with optimal cost \cite{low2025optimal}, the transmutation method based on the Kannai transform~\cite{JinMaZuazua2026Transmutation}, and the linear-combination representation for non-Hermitian matrix functions \cite{wang2025quantum}. A third approach embeds non-unitary dynamics into a larger unitary or Hamiltonian dynamics. This includes Schr\"odingerization methods~\cite{JLY22SchrShort,JLY22SchrLong,JLM24SchrInhom,JLMPY2025schr} and moment-matching dilation~\cite{Li25}. These approaches provide different routes to encoding time-independent non-unitary linear dynamics into quantum algorithms.

For time-dependent (non-autonomous) systems, the design of quantum algorithms becomes considerably more challenging. The primary difficulty stems from the time-ordering requirement: since $A(t_i)$ and $A(t_j)$ generally do not commute for $t_i \neq t_j$, the homogeneous propagator can no longer be expressed as the ordinary exponential $\exp(\int_s^t A(\tau)\,\mathrm{d}\tau)$. Instead, it must be represented as a time-ordered exponential, $U(t,s)=\mathcal T\exp(\int_s^t A(\tau)\,\mathrm{d}\tau)$. Consequently, nearly all existing approaches~---~including those in \cite{BerryCosta2024timeODE, Lin2023timeODE, ACL2023LCH2, wang2025quantum}~---~rely on the Dyson series expansion, which introduces significantly more complexity than in the time-independent setting.

\begin{figure}[!htb]
  \centering
  \includegraphics[width=0.7\textwidth]{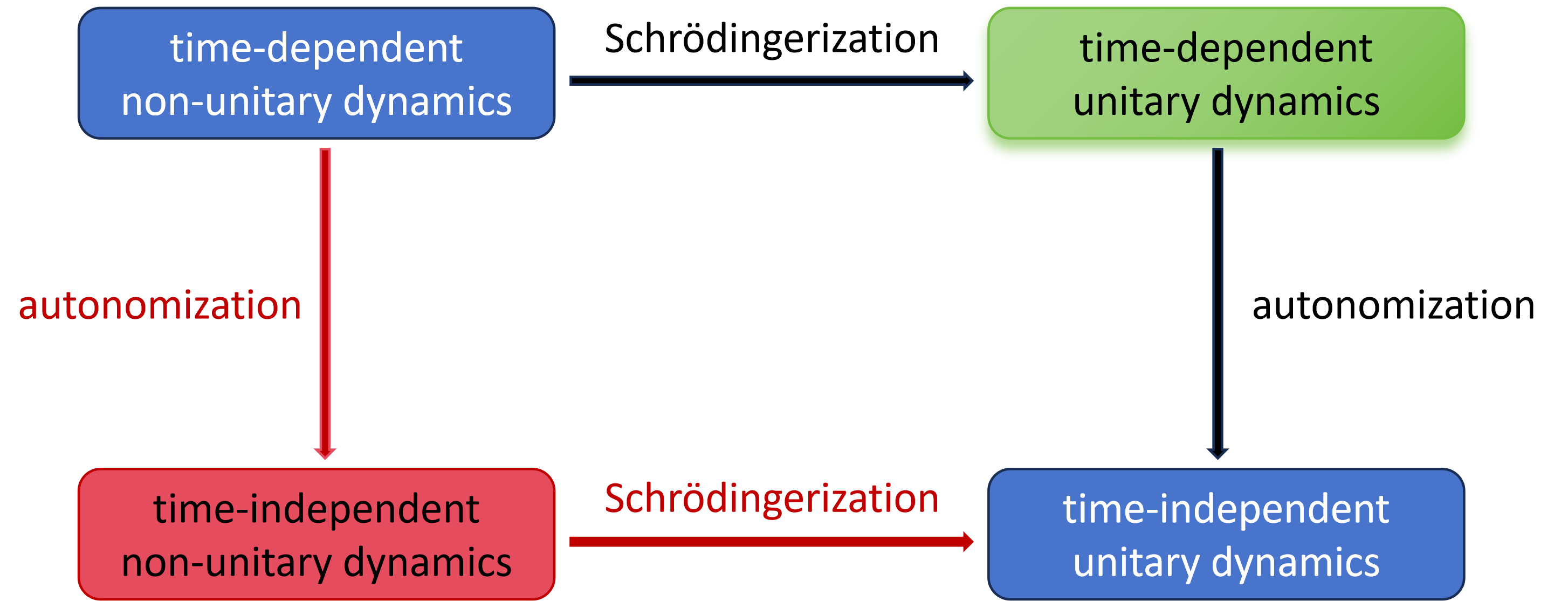}\\
  \caption{Commutative diagram illustrating two routes from time-dependent non-unitary dynamics to time-independent unitary dynamics. The upper path applies Schr\"odingerization followed by autonomization; the lower path applies autonomization first, followed by Schr\"odingerization.}
  \label{fig:autonomization-relations}
\end{figure}

This motivates us to seek a time-independent reformulation of the original non-autonomous system, a procedure we refer to as {\it autonomization}.

\begin{itemize}
  \item For time-dependent Hamiltonian systems, Ref.~\cite{CJL23TimeSchr} proposed such an alternative formalism that transforms non-autonomous Hamiltonian systems into autonomous ones in one higher dimension. Combined with the Schr\"odingerization approach, this dilation technique can be applied to the quantum simulation of time-dependent non-unitary dynamics, as represented by the upper path in Fig.~\ref{fig:autonomization-relations}: one first applies Schr\"odingerization to convert the time-dependent non-unitary dynamics into time-dependent unitary dynamics, and then applies autonomization to convert the latter into time-independent unitary dynamics.
  \item A natural question then arises: can we instead take the lower path in Fig.~\ref{fig:autonomization-relations} and thereby realize the commutative diagram from ``time-dependent non-unitary dynamics'' to ``time-independent unitary dynamics''? Compared with the first route, this alternative offers greater flexibility, as it allows us to replace the Schr\"odingerization step with any existing quantum algorithm (or quantum ODE solver) designed for time-independent non-unitary dynamics (see Fig.~\ref{fig:autonomization-solvers}).
\end{itemize}

In this work, we establish the feasibility of the second route. To this end, we first rewrite system~\eqref{ODElinear} as an enlarged homogeneous system and show that the autonomization technique of \cite{CJL23TimeSchr} remains valid for this homogeneous non-unitary setting. Specifically, this technique dilates the time-dependent homogeneous system into a convection equation in the auxiliary variable $s$, whose solution is obtained by projecting onto $s = T$, where $T$ denotes the total time duration. By choosing an appropriate initial function, we can impose periodic boundary conditions in the $s$-direction and subsequently discretize the problem using a Fourier spectral method. This procedure transforms the time-dependent non-unitary dynamics into a time-independent non-unitary system, to which any quantum ODE solver can be applied.

To complete the commutative diagram in Fig.~\ref{fig:autonomization-relations}, we adopt the Schr\"odingerization approach developed in \cite{JLY22SchrShort, JLY22SchrLong, JLMPY2025schr} for the time-independent non-unitary dynamics. Our complexity analysis shows that, under the analyticity assumption on the original system, the precision dependence scales as $\log^{5/4}(1/\varepsilon)$, where the exponent $5/4$ improves upon the exponent $2$ appearing in the most advanced existing methods, e.g., \cite{low2025optimal}.

\begin{figure}[!htb]
  \centering
  \includegraphics[width=0.3\textwidth]{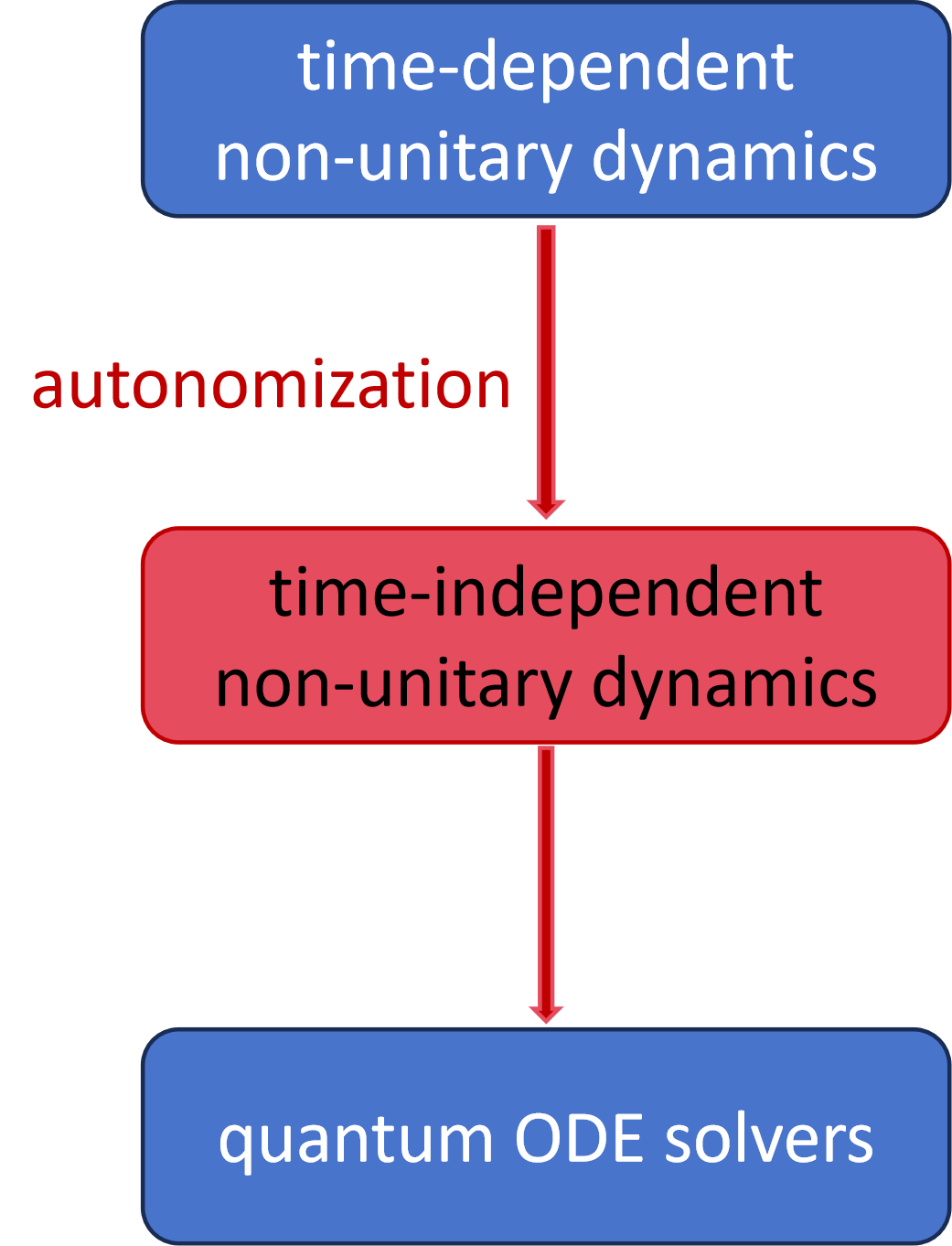}\\
  \caption{Our autonomization route: the time-dependent non-unitary system is first converted into a time-independent one, which is then solved by a generic quantum ODE solver. The final block is modular, allowing Schr\"odingerization to be replaced by other solvers.}\label{fig:autonomization-solvers}
\end{figure}

As shown in Fig.~\ref{fig:autonomization-solvers}, our autonomization route offers greater flexibility, as it allows the Schr\"odingerization step to be replaced by a generic quantum ODE solver. As an application, we combine our autonomization route with the BCOW algorithm of~\cite{BerryChilds2017ODE}. This approach encodes a truncated Taylor series of the propagator into a linear system and was the first quantum algorithm for linear differential equations to achieve an exponential improvement in precision dependence. The BCOW algorithm was originally developed for diagonalizable matrices, but Krovi \cite{KroviODE} later extended it to the non-diagonalizable case, demonstrating exponential speedup over earlier bounds for certain classes. In fact, however, the algorithm is applicable to non-diagonalizable matrices without modification; diagonalization serves primarily as a theoretical tool for establishing bounds on the condition number and solution error. With the aid of basic estimates from \cite{KroviODE}, we obtain these bounds with complexity comparable to that of the Krovi algorithm, thereby recovering the inherent advantages of the BCOW approach in quantum differential equation solving.

The paper is organized as follows. Section~\ref{sec:auto} constructs the autonomization system, establishes the Fourier discretization error estimate, and presents the block-encoding of the autonomization matrix. The autonomization framework is applied to Schr\"odingerization in Section~\ref{sec:Sch-auto-complexity}, where we derive the corresponding query complexity. A second application to the Taylor-expansion-based BCOW algorithm is considered in Section~\ref{subsec:autotibcow}, together with a complexity bound governed by the growth factor. Finally, Sections~\ref{sec:numerical} and \ref{sec:conclusion} present numerical experiments and conclusions, respectively.

\section{Autonomization method for time-dependent linear differential equations}\label{sec:auto}
For the time-dependent case \eqref{ODElinear}, we employ an autonomization technique to transform the non-autonomous system into an autonomous one in an extended space~\cite{CJL23TimeSchr}. This converts the original problem into a time-independent form, allowing us to directly apply standard representations and existing analytical tools, thereby simplifying both the analysis and the algorithm design.

\subsection{The autonomization method}\label{subsec:auto}

Let $F(t) = \diag(b_0(t)/\gamma_0, \cdots, b_{N-1}(t)/\gamma_{N-1})$ be a diagonal matrix, and define $\bb{r}(t) = [\gamma_0, \cdots, \gamma_{N-1}]^\top$, where
\begin{equation*}
	\gamma_i = T \sup_{t\in [0,T]} |b_i(t)|, \qquad i = 0,1,\cdots,N-1.
\end{equation*}
Here, we set $b_i(t)/\gamma_i = 0$ if $b_i(t) \equiv 0$.  Then one can rewrite \eqref{ODElinear} as an enlarged system
\begin{equation}\label{enlargeTime}
    \frac{\d }{\d t} \bb{u}(t)
    = \begin{bmatrix}
    A(t)  &  F(t) \\
    O     &  O
    \end{bmatrix}\bb{u}(t)=: B(t)\bb{u}(t), \qquad \bb{u}(t) = \begin{bmatrix}
    \bb{x}(t) \\
    \bb{r}(t)
    \end{bmatrix}  , \qquad
    \bb{u}(0) = \begin{bmatrix}
    \bb{x}(0) \\
    \bb{r}(0)
    \end{bmatrix}.
\end{equation}

 For the development of our method, we adopt the following assumptions:
\begin{itemize}
  \item $A(t)+A^\dag(t) \preceq O$, i.e., $A(t)+A^\dag(t)$ is negative semi-definite.
  \item $A(t)$ and $\bb{b}(t)$ are real analytic on $\mathbb{R}$.
\end{itemize}
For a fixed order $r$, the analyticity assumption can be replaced by bounded $s$-derivatives up to order $r$. We use analyticity to simplify the high-precision analysis and leave the finite-regularity case to future work.

\begin{theorem}\label{thm:autonomizationTheorem}
For the non-autonomous system in \eqref{enlargeTime},  introduce the following initial-value problem of an autonomous PDE
\begin{align}\label{eq:autonomous_pde}
    & \frac{\partial \bb{z}}{\partial t} + \frac{\partial \bb{z}}{\partial s} = B(s) \bb{z}(t,s),\\
    & \bb{z}(0,s) = G(s)\bb{u}_0,
\end{align}
    where $G(s)$ is a single-variable function on $\mathbb{R}$.
The solution to \eqref{enlargeTime} can be expressed in terms of $\bb{u}$ as
\begin{equation}\label{retriveu}
    \bb{z}(t,s = t) = G(0) \bb{u}(t), \qquad 0\le t\le T.
\end{equation}
\end{theorem}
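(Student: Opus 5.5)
We will prove the identity by the method of characteristics. The transport operator $\partial_t+\partial_s$ is constant along the lines $s=t+c$, so the PDE reduces to a family of ODEs indexed by $c$. Fix $c\in\mathbb{R}$ and define $\bb{w}_c(t):=\bb{z}(t,t+c)$. By the chain rule and \eqref{eq:autonomous_pde},
\begin{equation*}
\frac{\d}{\d t}\bb{w}_c(t)=\partial_t\bb{z}(t,t+c)+\partial_s\bb{z}(t,t+c)=B(t+c)\,\bb{w}_c(t),\qquad \bb{w}_c(0)=G(c)\bb{u}_0 .
\end{equation*}
Thus $\bb{w}_c$ solves a linear ODE with the shifted coefficient $B(\cdot+c)$. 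Since $B$ is analytic, and in particular continuous, on $\mathbb{R}$, this ODE has a unique solution. We can write it as $\bb{w}_c(t)=U_c(t,0)G(c)\bb{u}_0$, where $U_c$ is the time-ordered propagator generated by $B(\cdot+c)$.

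Next we specialize to $c=0$, which is the characteristic through the origin. This gives $\frac{\d}{\d t}\bb{w}_0=B(t)\bb{w}_0$ with $\bb{w}_0(0)=G(0)\bb{u}_0$. The function $G(0)\bb{u}(t)$ satisfies the same ODE and the same initial value: $G(0)$ is a scalar, so it commutes with $B(t)$, and $\bb{u}$ solves \eqref{enlargeTime}. Uniqueness for linear ODEs with continuous coefficients then gives $\bb{z}(t,t)=\bb{w}_0(t)=G(0)\bb{u}(t)$ for $0\le t\le T$, which is \eqref{retriveu}. If $G(s)$ is matrix-valued instead of scalar, we would read $G(0)$ as a scalar multiple of the identity, or assume it commutes with $B$; I would state the proof for scalar $G$, which is what the statement says.

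The main obstacle is making sure that $\bb{z}$ is a genuine classical solution, so that the chain-rule step is valid, and that the solution is unique. To settle this I would construct the solution explicitly as $\bb{z}(t,s):=U_{s-t}(t,0)\,G(s-t)\bb{u}_0$ and check by direct differentiation that it satisfies \eqref{eq:autonomous_pde} when $G$ is $C^1$. Uniqueness follows from the characteristic argument above: any $C^1$ solution must coincide with $\bb{w}_c$ on each line $s=t+c$. For less regular $G$, such as the periodic initial function chosen later, I would interpret $\bb{z}$ as the mild solution defined by this same formula. In that case \eqref{retriveu} holds by construction, because $U_0(t,0)=U(t,0)$ is exactly the propagator of \eqref{enlargeTime}.
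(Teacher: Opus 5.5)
Your proposal is correct and is essentially the paper's proof. The paper likewise sets $\bb{v}(t;s')=\bb{z}(t,s'+t)$, derives $\frac{\d}{\d t}\bb{v}=B(s'+t)\bb{v}$ with $\bb{v}(0;s')=G(s')\bb{u}_0$, writes the solution as a time-ordered exponential and evaluates it at $s'=0$, which matches your uniqueness argument on the characteristic $c=0$; your extra discussion of classical versus mild solutions is not needed, since the profiles $G$ used later are smooth and compactly supported or rapidly decaying (not periodic as you say).
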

\begin{proof}
This result is demonstrated in~\cite{CJL23TimeSchr} for Hamiltonian systems by verifying the given result. Here we provide an alternative proof for {\it non-unitary} dynamics. Define $\bb{v}(t; s') = \bb{z}(t, s'+t)$, where $s'\in \mathbb{R}$ is a parameter. It is straightforward to show that $\bb{v}(t; s')$ satisfies
 \begin{align*}
    & \frac{\d \bb{v}(t; s')}{\d t}  = \frac{\partial \bb{z}}{\partial t}(t, s'+t) + \frac{\partial \bb{z}}{\partial s}(t, s'+t) =  B(s'+t) \bb{v}(t; s'),\\
    & \bb{v}(0; s') = G(s')\bb{u}_0.
\end{align*}
The solution can be expressed using the time-ordered exponential as
\[
    \bb{v}(t; s') = \exp_{\mathcal{T}}\Big( \int_0^t B(s'+\tau) \d \tau \Big)G(s')\bb{u}_0.
\]
Setting $s' = s-t$, we immediately obtain
\begin{equation}\label{solutionForm}
\bb{z}(t,s) = \bb{v}(t; s-t) = \exp_{\mathcal{T}}\Big( \int_0^t B(s-t+\tau) \d \tau \Big)G(s-t)\bb{u}_0,
\end{equation}
which gives
\[\bb{z}(t,s=t) = \bb{v}(t; 0) = G(0)\exp_{\mathcal{T}}\Big( \int_0^t B(\tau) \d \tau \Big)\bb{u}_0 = G(0) \bb{u}(t)\]
by the definition of the time-ordering exponential.
\end{proof}


 For the autonomization variable $s$, we discretize it using the Fourier spectral method, which requires periodic boundary conditions in the $s$-direction. However, the explicit form \eqref{solutionForm} shows that $G(s)$ cannot be chosen as a constant, as this would not introduce (approximate) periodic boundaries.  We instead take $G(s)$ to be either compactly supported or rapidly decaying as $|s| \to \infty$. Here we list three choices:

\paragraph{(1) $G$ is exactly supported on $[-1,1]$.}

Following the smooth initialization analysis of the Schr\"odingerization method in \cite{JLMPY2025schr}, we can take
\begin{equation}\label{Gm}
G(s) = G_{\text{m}}(s) = \e \eta(s),
\end{equation}
where $\eta(s)$ is a mollifier defined by
\[
	\eta(s) = \begin{cases}\e^{1/(s^2-1)}, \qquad & |s|<1, \\0, \qquad & \mbox{otherwise}.\end{cases}
\]
Then, $G_{\mathrm{m}}(0)=1$ and $\text{supp}\{G_{\mathrm{m}}\} = [-1,1]$.

\paragraph{(2) $G$ is approximately supported on $[-1,1]$.}

A better approach is to construct a function $ G(s) $ that approaches 0 at $ s = \pm 1 $ at a super-exponential rate using the error function, while being \textit{approximately} supported on $[-1,1]$. To this end, as in \cite{JLMPY2025schr} we define
\begin{equation}\label{phia}
\phi_a(s) = \frac{\text{erf}(a s) + 1}{2},
\end{equation}
where $ a $ is a constant to be determined, and $ \text{erf}(s) $ is the error function given by
\[
\text{erf}(s) = \frac{2}{\sqrt{\pi}} \int_0^s \e^{-t^2} \d t.
\]
The function $ \phi_a(s) $ tends to zero on the negative real axis and to one on the positive real axis, with the same super-exponential rate in both directions. Choosing $ a = 2 \log^{1/2} (1/\delta_s) $ yields
\[
|\phi_a(s)-0| \le \delta_s \quad \text{for } s \le -\frac{1}{2}, \qquad
|\phi_a(s)-1| \le \delta_s \quad \text{for } s \ge \frac{1}{2}.
\]
For details, please refer to Theorem \ref{thm:profile-function} given later.
We can then set
\begin{equation}\label{Gw}
G(s) = G_{\text{w}}(s) = \frac{\phi_a(s+\frac{1}2) - \phi_a(s-\frac{1}2)}{\phi_a(\frac{1}2) - \phi_a(-\frac{1}2)}.
\end{equation}
The resulting function is shown in Fig.~\ref{fig:Gerf}a.

\begin{figure}[!htb]
  \centering
  \subfigure[]{\includegraphics[width=0.4\textwidth]{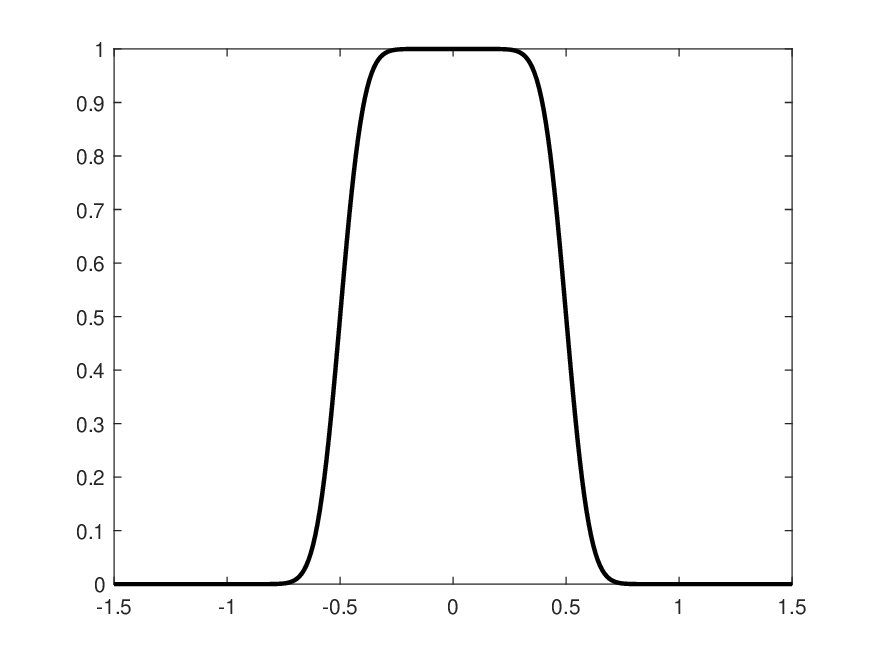}}
  \subfigure[]{\includegraphics[width=0.4\textwidth]{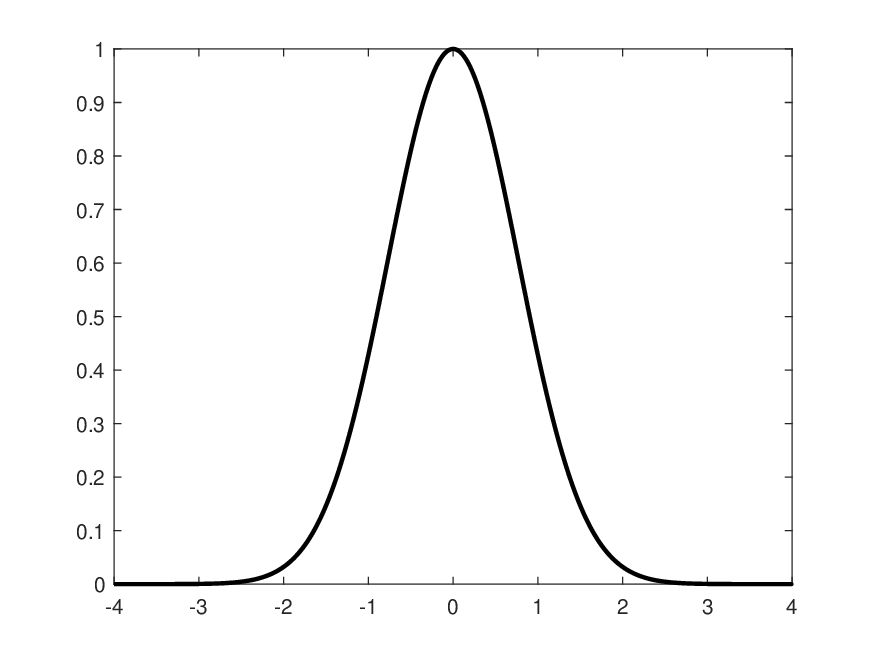}}  \\
  \caption{Construction of $G(s)$ via error function.}\label{fig:Gerf}
\end{figure}

\paragraph{(3) $G$ decays to zero at a super-exponential rate.} One may also set $a=1$ in \eqref{Gw}, and the resulting function is denoted by $G_{\text{f}}(s)$:
\begin{equation}\label{Gf}
G(s) = G_{\text{f}}(s) = \frac{\phi_1(s+\frac{1}2) - \phi_1(s-\frac{1}2)}{\phi_1(\frac{1}2) - \phi_1(-\frac{1}2)}.
\end{equation}
In this case, $G(s) \to 0$ as $|s| \to +\infty$ at a super-exponential rate, as shown in Fig.~\ref{fig:Gerf}b.

It is evident that all these profiles satisfy $G(0)=1$.

\begin{theorem}\label{thm:profile-function}
Let $0<\delta_s\le\e^{-1}$ , and let $G$ denote one of the profiles in \eqref{Gm}, \eqref{Gw} and \eqref{Gf}.
\begin{enumerate}[(1)]
  \item For $a=2\log^{1/2}(1/\delta_s)$, the function $\phi_a$ satisfies
  \[
  |\phi_a(s)| \le \delta_s \quad \text{for } s \le -\frac{1}{2},
  \qquad
  |\phi_a(s)-1| \le \delta_s \quad \text{for } s \ge \frac{1}{2}.
  \]
 \item For the selected profile $G$, one has
\begin{equation}\label{eq:profile-derivative-bound}
    \|G^{(r)}\|_{L^2(\mathbb R)}^{1/r}\lesssim r^\beta,
\end{equation}
where
\[
\beta=
\begin{cases}
3 & \text{for } G_{\mathrm{m}} \text{ in \eqref{Gm}},\\
1 & \text{for } G_{\mathrm{w}} \text{ in \eqref{Gw}},\\
\frac{1}{2} & \text{for } G_{\mathrm{f}} \text{ in \eqref{Gf}}.
\end{cases}
\]
For $G_{\mathrm{w}}$, the estimate holds for $r\simeq \log(1/\delta_s)$. For the other two profiles, it holds for every integer $r\ge1$. The hidden constants are independent of $\delta_s$ and $r$.
\end{enumerate}
\end{theorem}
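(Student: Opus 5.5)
Part (1) reduces to the complementary error function tail. For $s\le -1/2$ we have $\phi_a(s)=\tfrac12\operatorname{erfc}(a|s|)\le \tfrac12\operatorname{erfc}(a/2)\le \tfrac12 \e^{-a^2/4}$, using the standard bound $\operatorname{erfc}(x)\le \e^{-x^2}$ for $x\ge0$. With $a=2\log^{1/2}(1/\delta_s)$ this gives $\tfrac12\delta_s\le\delta_s$. The case $s\ge 1/2$ follows from the symmetry $1-\phi_a(s)=\phi_a(-s)$. The assumption $\delta_s\le \e^{-1}$ only ensures $a\ge 2$, so that the normalizing denominator $\phi_a(1/2)-\phi_a(-1/2)\ge 1-2\delta_s$ is bounded below by a constant. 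This fact is used again in part (2).

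For part (2) I treat the three profiles separately, in each case bounding $\|G^{(r)}\|_{L^2}$ by $C^r r^{\beta r}$.

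\textbf{Case $G_{\mathrm f}$.} Differentiate once: $\phi_1'(s)=\pi^{-1/2}\e^{-s^2}$. Then $G_{\mathrm f}^{(r)}$ is, up to a constant normalization, the difference of two shifts of $\frac{d^{r-1}}{ds^{r-1}}\e^{-s^2}=(-1)^{r-1}H_{r-1}(s)\e^{-s^2}$. The weighted orthogonality $\int H_k^2 \e^{-s^2}\,ds=2^k k!\sqrt\pi$ gives
\[
\int H_k^2\e^{-2s^2}\,ds\le 2^k k!\sqrt\pi .
\]
Hence $\|G_{\mathrm f}^{(r)}\|_{L^2}\lesssim (2^{r}r!)^{1/2}$, and by Stirling its $r$-th root is $\lesssim r^{1/2}$.

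\textbf{Case $G_{\mathrm w}$.} Rescaling gives $\phi_a^{(r)}(s)=a^{r}\,\phi_1^{(r)}(as)$, so
\[
\|\phi_a^{(r)}\|_{L^2}=a^{r-1/2}\|\phi_1^{(r)}\|_{L^2}\lesssim a^{r}(2^r r!)^{1/2}.
\]
Taking $r$-th roots yields $\lesssim a\,r^{1/2}$. With $a\simeq\log^{1/2}(1/\delta_s)\simeq r^{1/2}$ under the hypothesis $r\simeq\log(1/\delta_s)$, this becomes $\lesssim r$, i.e.\ $\beta=1$. The denominator is bounded below by part (1).

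\textbf{Case $G_{\mathrm m}$.} I would use the known Gevrey-type estimate for the bump $\e^{1/(s^2-1)}$. Its derivatives satisfy $\sup|\eta^{(r)}|\le C^r r!\,\cdot$(polynomial factors), i.e.\ Gevrey class $2$, giving roughly $r^{2}$. This is better than the stated $r^3$, so a cruder argument suffices.

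The concrete route I plan for $G_{\mathrm m}$ is as follows. Write $\eta^{(r)}(s)=P_r(s)(1-s^2)^{-2r}\e^{1/(s^2-1)}$, where $P_r$ is a polynomial of degree at most $3r$ whose coefficients are bounded by $C^r r!$; this follows by induction via $P_{r+1}=P_r'(1-s^2)+P_r\cdot(\text{quadratic in } s)$. Next bound $\sup_{0<y}y^{-2r}\e^{-1/y}$ with $y=1-s^2\in(0,1]$. The maximum is at $y=1/(2r)$, giving $(2r)^{2r}\e^{-2r}$. Together these give $|\eta^{(r)}|\le C^r r!\,r^{2r}\lesssim (Cr^3)^r$. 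Integrating over $[-1,1]$ preserves the bound, so $\beta=3$.

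The main obstacle I expect is the coefficient growth of $P_r$, which requires careful bookkeeping in the induction. If the induction becomes messy, the fallback is Cauchy's integral formula on a disc of radius $\rho\simeq 1/r$ around each real point. Near the singular endpoints, the disc must shrink proportionally to the distance to $\pm1$. Balancing this radius against the growth of $\e^{\operatorname{Re}(1/(z^2-1))}$ gives the same $r^{3}$-type bound without tracking polynomials.
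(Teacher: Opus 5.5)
Your proposal is correct and follows the paper's route. For (1) you use the erfc tail bound together with the symmetry $1-\phi_a(s)=\phi_a(-s)$, and for (2) you use the same two key estimates: $\|\phi_a^{(r)}\|_{L^2}\lesssim a^{r}(2^r r!)^{1/2}$ for the error-function profiles, and $|\eta^{(r)}|\lesssim C^r r!\,(2r)^{2r}\e^{-2r}$ for the mollifier. The paper simply imports these estimates from \cite{JLMPY2025schr}, whereas you derive them. Your Rodrigues/Hermite-orthogonality argument is a clean self-contained substitute, and your global bound $\operatorname{erfc}(x)\le\e^{-x^2}$ for $x\ge 0$ avoids the paper's appeal to the asymptotic series.

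Two small bookkeeping fixes. With the normalization $(1-s^2)^{-2r}$, the recursion is $P_{r+1}=(1-s^2)^2P_r'+\bigl(4rs(1-s^2)-2s\bigr)P_r$, not $P_r'(1-s^2)+P_r\cdot(\text{quadratic})$; the factor $r$ there, together with $\deg P_r\le 3r$, is what produces the $C^r r!$ coefficient growth. Also, Gevrey class $2$ means $\sup|\eta^{(r)}|\le C^{r}(r!)^2$, not $C^r r!$ times polynomial factors.
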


\begin{proof}
(1) Since $\phi_a(s)$ tends to $0$ on the negative real axis and to $1$ on the positive real axis with the same super-exponential rate in both directions, it suffices to verify
\[
|\phi_a(s)-1| \le \delta_s \qquad \text{for } s \ge \frac{1}{2}.
\]
Let $\text{erfc}(s) = 1 - \text{erf}(s)$ denote the complementary error function. Then
\[
|\phi_a(s) - 1| = \Big| -\frac{1}{2} \text{erfc}(as) \Big|.
\]
For fixed $x > 0$, the complementary error function admits the asymptotic expansion
\[
\text{erfc}(x) = \frac{\e^{-x^2}}{x\sqrt{\pi}}
\Big( 1 - \frac{1}{2x^2} + \frac{3}{4x^4} - \frac{15}{8x^6} + \cdots \Big),
\]
which is an alternating series. Consequently, for $x \ge 1$,
\[
\text{erfc}(x) \le \frac{\e^{-x^2}}{x\sqrt{\pi}} \le \e^{-x^2}.
\]
Therefore, the condition $|\phi_a(s)-1| \le \delta_s$ is satisfied whenever
\begin{equation}\label{as}
x := as \ge \log^{1/2}\frac{1}{\delta_s} \quad \text{and} \quad 0 < \delta_s \le \e^{-1}.
\end{equation}
In particular, choosing $a = 2 \log^{1/2} (1/\delta_s)$ yields $s \ge \frac{1}{2}$.

  (2) First, for the function $G_{\mathrm{m}}$ in \eqref{Gm}, it follows from Lemma 3.1 of \cite{JLMPY2025schr} that the mollifier satisfies
\begin{equation}\label{mollifierbound}
	|\eta^{(r)} (s) | \lesssim 20^r r!\e^{-2r} (2r)^{2r},\quad \forall s\in \mathbb{R},
\end{equation}
which implies
\begin{align*}
\|G_{\text{m}}^{(r)}\|_{L^2(\mathbb R)}^{1/r} = \Big(\int_{-1}^1 |\eta^{(r)}(s)|^2 \d s\Big)^{1/(2r)}
		\lesssim (20^r r!\e^{-2r} (2r)^{2r})^{1/r} \lesssim r^3.
\end{align*}

Next, we turn to the function $G_{\mathrm{w}}$ in \eqref{Gw}. By inspecting the proof of Theorem 4.1 in \cite{JLMPY2025schr}, we see that
\[\phi_a^{(r)}(s) = \frac{1}{2} a^r \text{erf}^{(r)}(a s),\]
and
\[\|\text{erf}^{(r)}(a s) \|_{L^2(\mathbb{R})}^{1/r} \lesssim (r!)^{1/(2r)}\|\e^{-a^2 s^2/2}\|_{L^2(\mathbb{R})}  \lesssim r^{1/2}.\]
Consequently,
\begin{equation}\label{derivr}
\|G_{\text{w}}^{(r)}\|_{L^2(\mathbb R)}^{1/r} \lesssim a r^{1/2}.
\end{equation}
Since $a = 2 \log^{1/2} \frac{1}{\delta_s}$, taking $r \simeq \log \frac{1}{\delta_s}$ yields
\[\|G_{\text{w}}^{(r)}\|_{L^2(\mathbb R)}^{1/r} \lesssim \log \frac{1}{\delta_s}.\]

Finally, for the function $G_{\mathrm{f}}$, the estimate in \eqref{derivr} remains valid, which gives the desired result since $a=1$.
\end{proof}

\subsection{Discretization of the auxiliary variable}

Let $0< \delta_s \le \e^{-1}$ and denote $[-R_s, R_s]$ to be the truncation interval of $G(s)$ in the $s$-direction. According to the preceding discussion, for $G_{\mathrm{m}}$ in \eqref{Gm} and $G_{\mathrm{w}}$ in \eqref{Gw}, we can take $R_s = 1$ to get
\[
|G(s)| \le \delta_s, \quad \forall \; |s| \ge R_s.
\]
For $G_{\mathrm{f}}$ in \eqref{Gf}, we can instead choose $a=1$ in \eqref{as} to obtain
\[
|\phi_a(s)-0| \le \delta_s \quad \text{for } s \le -\log^{1/2}\frac{1}{\delta_s}, \qquad
|\phi_a(s)-1| \le \delta_s \quad \text{for } s \ge \log^{1/2}\frac{1}{\delta_s}.
\]
This implies
\[
|G_{\text{f}}(s)| \le \delta_s , \quad \forall \; |s| \ge \log^{1/2}\frac{1}{\delta_s}+\frac12.
\]
Since $\log^{1/2}(1/\delta_s)\ge 1$ for $0< \delta_s \le \e^{-1}$, we can take $R_s = 2 \log^{1/2}(1/\delta_s)$ for the third choice. Therefore, we have
\[
R_s=
\begin{cases}
1, & \text{for }G_{\mathrm{m}}\text{ in \eqref{Gm} and }
G_{\mathrm{w}}\text{ in \eqref{Gw}},\\
2\log^{1/2}\frac{1}{\delta_s},
& \text{for }G_{\mathrm{f}}\text{ in \eqref{Gf}}.
\end{cases}
\]

The computational domain for \eqref{eq:autonomous_pde} is then chosen as $I_s=[-R_s,R_s+2T]$, with its center fixed at the recovery point $s=T$, so that $s=T$ coincides with a grid point when an even number of grid points is used. A uniform mesh size $\Delta s=2(R_s+T)/N_s$ is used for the autonomization variable, where $N_s=2^{n_s}$ is even. The grid points are denoted by
\[
-R_s=s_0<s_1<\cdots<s_{N_s}=R_s+2T,
\qquad
s_l=-R_s+l\Delta s.
\]
The endpoints $s_0$ and $s_{N_s}$ are identified to impose periodicity. Moreover, since $N_s$ is even,
\[
s_{N_s/2}
=-R_s+\frac{N_s}{2}\Delta s
=T.
\]
Thus the recovery point $s=T$ is represented exactly by the midpoint of the grid. We place the $s$-register in the first position and define
\[
    \bb{w}(t) = \sum_{l=0}^{N_s-1}\sum_{i=0}^{2N-1} z_i(t,s_l) \ket{l,i},
\]
where $z_i$ is the $i$-th entry of $\bb{z}$. By applying the discrete Fourier transformation in the $s$ direction, one arrives at
\begin{equation}\label{autodiscretization}
    \frac{\d}{\d t} \bb{w}(t) = \bar{A} \bb{w}(t), \qquad \bb{w}(0) = [G(s_0),\cdots, G(s_{N_s-1})]^\top\otimes \bb{u}(0).
\end{equation}
Here the evolution matrix is
\begin{equation}\label{autoA}
    \bar{A} = -\i P_s \otimes I^{\otimes (n+1)} + \sum_{l=0}^{N_s-1} \ket{l}\bra{l} \otimes B(s_l), \qquad N = 2^n,
\end{equation}
which is {\it time-independent}, with
\[
    P_s = F_s D_s F_s^\dag,  \qquad D_s = \diag(\mu_0, \cdots, \mu_{N_s-1}),  \qquad \mu_l = \frac{ \pi }{R_s+T} (l - \frac{N_s}{2}),
\]
where $F_s$ is the matrix representation of the discrete Fourier transform and $\mu_k$'s are the Fourier modes. The largest Fourier mode in absolute value satisfies
\begin{equation}\label{eq:mu_s_max_log_square_condition}
	\mu_{s,\max} = \max_{0\le l\le N_s-1}|\mu_l| = \frac{\pi N_s}{2(R_s+T)} = \frac{\pi}{\Delta s}.
\end{equation}

Since $s_{N_s/2}=T$, we recover the discrete solution by
\[
\bb{u}_h(T)
:=
G(0)^{-1}\bb{z}_h(T,s_{N_s/2}).
\]
Let the $q$-register label the two blocks of $\bb{u}$, with $\ket{0}_q$ selecting the $\bb{x}$-block. Define
\begin{equation}\label{eq:Rx-recovery}
R_x
:=
\frac{1}{G(0)}
\big(\bra{N_s/2}_s\otimes\bra{0}_q\otimes I^{\otimes n}\big).
\end{equation}
Then the recovered first block is
\[
\bb{x}_h(T)=R_x\bb{w}(T).
\]
The recovery is evaluated at the single grid point $s=T$. We therefore require a pointwise error estimate in the autonomization variable. Let $P_hG$ denote the Fourier projection of $G|_{I_s}$ onto the Fourier space associated with the $N_s$ grid points. For $G|_{I_s}\in H^r(I_s)$, the standard pointwise Fourier estimate in~\cite{Shenspectral} gives
\begin{equation}\label{errPhG}
	\| G - P_h G \|_{L^\infty(I_s)} \lesssim \Big( \frac{2(R_s+T)}{N_s} \Big)^{r-\frac{1}{2}} \|G^{(r)}\|_{L^2(I_s)} = \big(\frac{\pi}{\mu_{s,\max}}\big)^{r-\frac{1}{2}}\|G^{(r)}\|_{L^2(I_s)}.
\end{equation}
To ensure an $L^\infty(I_s)$ projection error of order $\delta_s$,
we choose the mesh size such that
\[
    \mu_{s,\max}
    = \frac{\pi}{\Delta s}
    \simeq
    \pi \delta_s^{-1/(r-\frac{1}{2})}
    \|G^{(r)}\|_{L^2(I_s)}^{1/(r-\frac{1}{2})}.
\]

\begin{lemma}
\label{lem:s-discretization-error}
Let $\bb{z}(t,s)$ be the solution to \eqref{eq:autonomous_pde} on $I_s=[-R_s,R_s+2T]$. Let $\bb{z}_h(t,s)$ be the Fourier reconstruction of $\bb{w}(t)$ obtained from \eqref{autodiscretization}. Let  $0<\delta_s\le\e^{-1}$ be the prescribed accuracy in the $s$-direction. For the selected profile $G$, choose the mesh size so that
\[
(\Delta s)^{-1}\simeq\mu_{s,\max}
\simeq \pi \delta_s^{-1/(r-1/2)} \|G^{(r)}\|_{L^2(I_s)}^{1/(r-1/2)}.
\]
Then there exists a constant $C>0$, independent of $\delta_s$, $N_s$,
and $d$, such that
\begin{equation}
\label{eq:s_fourier_error_bound}
	\|\bb{u}(T)-\bb{u}_h(T)\| \le C\delta_s\|\bb{u}(0)\|,
\end{equation}
where $\bb{u}_h(T)=G(0)^{-1}\bb{z}_h(T,T)$.
\end{lemma}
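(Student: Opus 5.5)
The key observation is that the semi-discrete system \eqref{autodiscretization} is exactly the Fourier collocation discretization of \eqref{eq:autonomous_pde}. In the Fourier (spectral) variables the convection term is diagonal, so the only error sources are (i) the projection of the initial profile $G$ onto the trigonometric space and (ii) the aliasing created by multiplication with $B(s)$. I would therefore avoid a generic stability-plus-consistency argument in $s$. Instead I would exploit the explicit characteristic structure of Theorem~\ref{thm:autonomizationTheorem}.

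By \eqref{solutionForm}, the exact solution is
\[
\bb z(t,s)=U(s-t,s)\,G(s-t)\bb u_0,
\]
where $U(s-t,s)$ denotes the time-ordered propagator of $B$ from $s-t$ to $s$. The recovery point is $s=T$ at time $t=T$, and there $U(0,T)G(0)\bb u_0=G(0)\bb u(T)$. Since $G(0)=1$, it suffices to bound $\|\bb z(T,T)-\bb z_h(T,T)\|$.

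\textbf{Step 1: comparison problem.} Replace the initial datum by $(P_hG)(s)\bb u_0$, and denote by $\tilde{\bb z}$ the exact PDE solution with this periodic datum on $I_s$. By linearity and the characteristic formula, the difference at $(T,T)$ is
\[
U(0,T)\big(G(0)-P_hG(0)\big)\bb u_0 .
\]
The dissipativity assumption $A+A^\dag\preceq O$ gives a bound on $\|U\|$ for the homogeneous block. The $F$-block contributes at most a bounded factor, because $\gamma_i=T\sup_t|b_i|$ normalizes $\int_0^T\|F\|\,\d t$; I would check that $\|U(0,T)\|\le C$ with $C$ depending only on $T$. Then \eqref{errPhG} together with the prescribed choice of $\mu_{s,\max}$ gives $\|G-P_hG\|_{L^\infty}\lesssim\delta_s$, so this term is $\le C\delta_s\|\bb u_0\|$. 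The tail cutoff $|G|\le\delta_s$ outside $[-R_s,R_s]$, together with the domain $I_s=[-R_s,R_s+2T]$, ensures that over $t\in[0,T]$ the characteristics through $s=T$ never cross the periodic seam up to $O(\delta_s)$. Periodization therefore costs only another $O(\delta_s)$.

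\textbf{Step 2: collocation versus exact periodic solution.} The discrete solution is obtained from the matrix $\bar A$, in which $-\i P_s$ differentiates trigonometric interpolants exactly and $B$ acts pointwise at the nodes. I would write the error $\bb e=\tilde{\bb z}|_{\text{grid}}-\bb w$, which satisfies
\[
\frac{\d}{\d t}\bb e=\bar A\bb e+\bb\tau(t),
\]
where the truncation residual $\bb\tau$ comes from the difference between $\partial_s\tilde{\bb z}$ and the spectral derivative of its grid samples. This residual is controlled by the interpolation error of $\tilde{\bb z}(t,\cdot)$. The matrix $-\i P_s$ is anti-Hermitian and $\sum_l\ket{l}\bra{l}\otimes B(s_l)$ inherits the semi-dissipativity, so Duhamel's formula gives $\|\bb e(T)\|\le C\int_0^T\|\bb\tau\|\,\d t$. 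Each $\tilde{\bb z}(t,\cdot)$ is a propagator applied to a shifted profile. Analyticity of $A$ and $\bb b$ then implies that $\|\partial_s^r\tilde{\bb z}\|$ is bounded by $\|G^{(r)}\|$ times constants. With the stated mesh size, the residual and the resulting pointwise error at the midpoint node are $O(\delta_s)\|\bb u_0\|$.

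\textbf{Main obstacle.} Step 2 is the hardest part. There are two difficulties. First, converting a discrete $\ell^2$ error into a pointwise value at the node $s_{N_s/2}$ needs care with the $\sqrt{N_s}$ normalization; I would avoid losing this factor by doing the analysis in grid-function norms scaled by $\Delta s$ and evaluating directly at the node. Second, I need to show that derivatives of the propagator in $s$ do not spoil the $\|G^{(r)}\|$ bound. For this I would first apply Leibniz's rule to $U(s-t,s)G(s-t)$, using analytic bounds $\|\partial_s^kU\|\lesssim C^kk!$. This yields a combined bound of order $r^{\beta}$ of the same type as in Theorem~\ref{thm:profile-function}, absorbed into the constant $C$.
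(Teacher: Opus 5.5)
Your Step~1 is essentially the paper's proof. The paper imports the nodal bound $\|\bb z(T,s_l)-\bb z_h(T,s_l)\|\le(\Delta s)^{r-1/2}\|G^{(r)}\|_{L^2(I_s)}\|\bb u(0)\|$ from \cite{JLMPY2025schr}. There the auxiliary-variable equation has coefficients independent of that variable, so the only error is the projected initial profile. You rightly note that multiplication by $B(s_l)$ couples the Fourier modes here, which that imported argument does not cover.

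\textbf{Main gap: the absorption at the end of Step~2.} Applying Leibniz to $U(s-t,s)G(s-t)$ with $\|\partial_s^kU\|\lesssim C^kk!$ produces the $k=r$ term, of size $C^r r!\,\|G\|_{L^2}$, whose $r$-th root is $\asymp r$. If $\|G^{(j)}\|_{L^2}\le (Kj^\beta)^j$ with $\beta\ge1$, the whole sum is $\lesssim((C+K)r^\beta)^r$. In that case you only lose a $B$-dependent constant in the mesh condition. For $G_{\mathrm f}$, however, $\|G_{\mathrm f}^{(r)}\|^{1/r}\lesssim r^{1/2}$, so the $k=r$ term dominates and the combined bound is of order $r$, not $r^{\beta}=r^{1/2}$. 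Nothing can then be absorbed into a constant independent of $\delta_s$ once $r\simeq\log(1/\delta_s)$, which is how the lemma is used immediately afterwards.

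This is not just loose bookkeeping. Take the dissipative, analytic but non-entire scalar coefficient $A(t)=-\i/(1+t^2)$ with $\bb b=0$. Then $\bb z(t,\cdot)$ is analytic only in the strip $|\mathrm{Im}\,s|<1$, and its Fourier coefficients decay no faster than about $\e^{-|\mu|}$. The aliasing from $B(s_l)$ therefore generically leaves an error of order $\e^{-c\mu_{s,\max}}$, which is $\gg\delta_s$ when $\mu_{s,\max}\simeq\log^{1/2}(1/\delta_s)$. Your route can give the estimate for fixed $r$ with $C=C\bigl(r,\max_{k\le r}\|B^{(k)}\|_{L^\infty(I_s)}\bigr)$. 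For $r\simeq\log(1/\delta_s)$, it gives a mesh condition with $\beta$ replaced by $\max\{\beta,1\}$, but not the statement as used.

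\textbf{Two further repairs in Step~2.}

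First, the seam. $B|_{I_s}$ is not periodic, so the coefficient seen by $\bar A$ jumps at $s=-R_s\equiv R_s+2T$. The torus solution $\tilde{\bb z}$ then has derivative jumps along the seam and along the characteristic leaving it, so it is not in $H^r$ of the torus. Your observation that the characteristic through $(T,T)$ avoids the seam only helps in Step~1, because collocation couples all nodes. What you need instead is that $\bb z$ and its $s$-derivatives are small near the seam for $t\in[0,T]$, for example by comparing the scheme with a smoothly cut-off periodization of the whole-line solution.

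Second, a loss of powers of $\Delta s$. Let $I_N$ denote trigonometric interpolation. The nodal residual $\partial_s(I_N\tilde{\bb z}-\tilde{\bb z})$ is only $O\bigl((\Delta s)^{r-1}\|\partial_s^r\tilde{\bb z}\|_{L^2}\bigr)$ in the $\Delta s$-weighted $\ell^2$ norm. Evaluating at $s_{N_s/2}$ costs another $(\Delta s)^{-1/2}$. With the stated mesh you therefore get an error of order $\mu_{s,\max}\delta_s$, not $\delta_s$. You can fix this by comparing the scheme with the Fourier truncation $P_N\bb z$. Since $P_N$ commutes with $\partial_s$, the residual becomes $P_N(B\bb z)-I_N(BP_N\bb z)$, with no derivative loss. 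Then use $r+1$ derivatives to pay for the pointwise evaluation. As written, $C$ is not independent of $N_s$.
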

\begin{proof}
Following the similar arguments in \cite{JLMPY2025schr}, we may derive
\[\|\bb{z}(T, s_l) - \bb{z}_h(T, s_l)\| \le (\triangle s)^{r-1/2} \|G^{(r)}\|_{L^2(I_s)} \|\bb{u}(0)\|, \quad \forall \; l\ge 0.\]
This implies the estimate \eqref{eq:s_fourier_error_bound} since $\bb{z}(T,T)=G(0)\bb{u}(T)$.
\end{proof}

When $r$ is sufficiently large, specifically $r\simeq\log(1/\delta_s)$, we have $\delta_s^{-1/(r-1/2)} = \mathcal{O}(1)$. Then,
\[
\|G^{(r)}\|_{L^2(I_s)}^{1/(r-\frac{1}{2})} = \mathcal{O}(r^\beta).
\]
Substituting these bounds into the mesh condition in Lemma~\ref{lem:s-discretization-error} gives
\begin{equation}\label{eq:smumax}
	\mu_{s,\max}=\mathcal{O}(r^\beta)
=
\mathcal{O}\Big(\log^\beta\frac{1}{\delta_s}\Big).
\end{equation}
Here, $\beta$ is determined by the selected profile in Theorem~\ref{thm:profile-function}.

\subsection{Block-encoding of the autonomization matrix}\label{subsec:block-encoding-Abar}
We now specify the oracle access to $\bar A$ in \eqref{autoA} through unitary block-encoding oracles. We first recall the standard block-encoding convention for a single matrix in Definition~\ref{def:blockencoding}, which fixes the normalization factor and the all-zero ancilla block used below. In the exact case, the block-encoding error satisfies $\varepsilon=0$, and the all-zero ancilla block is equal to the target matrix divided by its normalization factor.
\begin{definition}\label{def:blockencoding}
	Suppose that $A$ is an $n$-qubit matrix and let $\Pi=\bra{0^m}\otimes I$, where $I$ is an $n$-qubit identity matrix. If there exist positive numbers $\alpha$ and $\varepsilon$, and a unitary matrix $U_A$ on $(m+n)$ qubits, such that
	\[
	\|A-\alpha \Pi U_A \Pi^\dag\|
	=
	\|A-\alpha(\bra{0^m}\otimes I)U_A(\ket{0^m}\otimes I)\|
	\le \varepsilon,
	\]
	then $U_A$ is called an $(\alpha,m,\varepsilon)$ block-encoding of $A$. The parameter $\alpha$ is referred to as the block-encoding constant.
\end{definition}
The HAM-T access model is the corresponding controlled block-encoding for a matrix family. Instead of encoding a single matrix, the oracle encodes the block-diagonal operator whose $l$-th diagonal block is the matrix evaluated at the grid point $s_l$. We use this input model, following~\cite{Low2019Interaction,ACL2023LCH2}, to specify the matrix families that appear in $\bar A$ in \eqref{autoA}.

Let $\{s_l\}_{l=0}^{N_s-1}$ be a ``time'' grid. We assume that an exact HAM-T block-encoding of $A(s)$, denoted by $\mathrm{HAM\text{-}T}_A$, is available such that
\begin{equation}\label{eq:HAMT-A}
    (\bra{0^{a_A}}\otimes I^{\otimes (n_s+n)})
    \mathrm{HAM\text{-}T}_A
    (\ket{0^{a_A}}\otimes I^{\otimes (n_s+n)})
    =
    \sum_{l=0}^{N_s-1}
    \ket{l}\bra{l}
    \otimes
    \frac{A(s_l)}{\alpha_A},
\end{equation}
where $a_A$ is the number of ancilla qubits. The normalization is chosen so
that
\begin{equation}\label{eq:HAMT-alphaA}
    \alpha_A
    \ge
    \max_{0\le l\le N_s-1}\|A(s_l)\|.
\end{equation}
Similarly, we assume access to a unitary oracle $\mathrm{HAM\text{-}T}_F$
satisfying
\begin{equation}\label{eq:HAMT-F}
    (\bra{0^{a_F}}\otimes I^{\otimes (n_s+n)})
    \mathrm{HAM\text{-}T}_F
    (\ket{0^{a_F}}\otimes I^{\otimes (n_s+n)})
    =
    \sum_{l=0}^{N_s-1}
    \ket{l}\bra{l}
    \otimes
    \frac{F(s_l)}{\alpha_F}.
\end{equation}
By the normalization in \eqref{enlargeTime}, we can take
\begin{equation}\label{eq:HAMT-alphaF}
    \max_{0\le l\le N_s-1}\|F(s_l)\|
    \le \alpha_F
    =\frac{1}{T}.
\end{equation}
\begin{remark}\label{rem:HAMT-common-ancilla}
In the enlarged system \eqref{enlargeTime}, the matrices $A(s_l)$ and $F(s_l)$ act on the same $n$-qubit state register. In the LCU construction, we set $a_B=\max\{a_A,a_F\}$ and pad the smaller ancilla register with idle qubits. We still write the padded oracles as $\mathrm{HAM\text{-}T}_A$ and $\mathrm{HAM\text{-}T}_F$. The corresponding oracle diagrams are shown in Fig.~\ref{fig:hamt-AF-oracles}, where the box $\mathrm T$ denotes the coherent selection by the time register.
\end{remark}
\begin{figure}[!htb]
\centering
\includegraphics[width=0.6\textwidth]{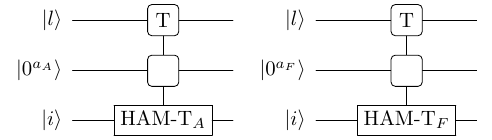}
\caption{HAM-T block-encoding oracles $\mathrm{HAM\text{-}T}_{A}$ and $\mathrm{HAM\text{-}T}_{F}$.}
\label{fig:hamt-AF-oracles}
\end{figure}

We first encode the pointwise part of $\bar A$. Let $E_{00}=\ket{0}\bra{0}$ and $E_{01}=\ket{0}\bra{1}$. Then the block diagonal component in \eqref{autoA} can be written as
\begin{equation}\label{eq:DB-decomposition}
\begin{aligned}
	D_B
	&=\sum_{l=0}^{N_s-1}\ket{l}\bra{l}\otimes B(s_l)  \\
	&=
	\sum_{l=0}^{N_s-1}\ket{l}\bra{l}\otimes
	\big(E_{00}\otimes A(s_l)+E_{01}\otimes F(s_l)\big).
\end{aligned}
\end{equation}
It remains to realize the two matrix units $E_{00}$ and $E_{01}$ in a block-encoding. They are implemented by the following one-qubit gadgets.
\[
\begin{array}{c@{\qquad\qquad}c}
\Qcircuit @C=1em @R=.9em{
\lstick{\ket{0}_a}
    & \targ
    & \qw
\\
\lstick{\ket{q}}
    & \ctrl{-1}
    & \qw
}
&
\Qcircuit @C=1em @R=.9em{
\lstick{\ket{0}_a}
    & \qw
    & \targ
    & \qw
\\
\lstick{\ket{q}}
    & \gate{X}
    & \ctrl{-1}
    & \qw
}
\end{array}
\]
Here the two-qubit gate acts on the ancilla qubit $a$ and the system qubit $q$, written in the order $\ket{a}\ket{q}$. The qubit $q$ is the control and the ancilla qubit $a$ is the target. Thus
\[
    \mathrm{CNOT}_{q\to a}\ket{a}\ket{q}
    =
    \ket{a\oplus q}\ket{q},
    \qquad a,q\in\{0,1\}.
\]
With this convention, the selected ancilla block gives the required matrix
units,
\[
\begin{aligned}
    (\bra{0}_a\otimes I)
    \mathrm{CNOT}_{q\to a}
    (\ket{0}_a\otimes I)
    &=E_{00},\\
    (\bra{0}_a\otimes I)
    \mathrm{CNOT}_{q\to a}\cdot (I\otimes X)
    (\ket{0}_a\otimes I)
    &=E_{01}.
\end{aligned}
\]
We use one additional one-qubit LCU selection register. Prepare
\[
	\mathrm{PREP}_{B}\ket{0}_a
	=
	\sqrt{\frac{\alpha_A}{\alpha_B}}\ket{0}_a
	+
	\sqrt{\frac{\alpha_F}{\alpha_B}}\ket{1}_a,
	\qquad
	\alpha_B=\alpha_A+\alpha_F,
\]
and define
\[
	\mathrm{SEL}_{B}
	=
	\ket{0}\bra{0}\otimes
	\big(\mathrm{CNOT}_{q\to a}\otimes \mathrm{HAM\text{-}T}_A\big)
	+
	\ket{1}\bra{1}\otimes
	\big(\mathrm{CNOT}_{q\to a}\cdot (I\otimes X)\otimes \mathrm{HAM\text{-}T}_F\big).
\]
Let $I_{\mathrm{rest}}$ denote the identity on all registers except the one-qubit selection register under consideration. Then $U_B=(\mathrm{PREP}_{B}^{\dagger}\otimes I_{\mathrm{rest}})\,\mathrm{SEL}_{B}\, (\mathrm{PREP}_{B}\otimes I_{\mathrm{rest}})$ is an $(\alpha_B,a_B+2,0)$-block-encoding of $D_B$, where
\begin{equation}\label{eq:alphaB-def}
	\alpha_B=\alpha_A+\alpha_F=\alpha_A+\frac{1}T .
\end{equation}
One query to $U_B$ uses $\mathcal{O}(1)$ queries to $\mathrm{HAM\text{-}T}_A$ and $\mathrm{HAM\text{-}T}_F$, together with $\mathcal{O}(1)$ additional elementary gates for the two-branch LCU selection. The circuit is shown in Fig.~\ref{fig:block-encode-DB}.

\begin{figure}[!htbp]
\centering
\includegraphics[width=1\textwidth]{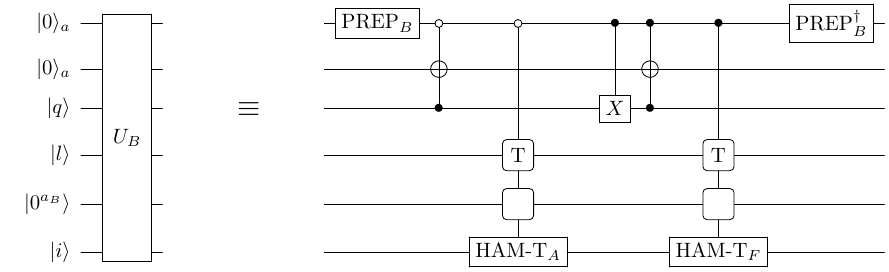}
\caption{Block-encoding of
$D_B=\sum_l \ket{l}\bra{l}\otimes B(s_l)$, denoted by $U_B$.}
\label{fig:block-encode-DB}
\end{figure}

We next encode the Fourier differentiation part. Since $P_s=F_sD_sF_s^\dag$ and $\|P_s\|=\mu_{s,\max}$, it suffices to encode $-\i D_s/\mu_{s,\max}$. Write $D_s=\sum_{l=0}^{N_s-1}\mu_l\ket{l}\bra{l}$. The Fourier modes $\mu_l$ are real. Hence
\[
	\frac{\mu_l}{\mu_{s,\max}}\in[-1,1], \qquad \theta_l:=\arccos\Big(\frac{\mu_l}{\mu_{s,\max}}\Big)
\]
is well defined. We use a reversible arithmetic oracle
\[
	O_{\theta}\ket{l}\ket{0^w} = \ket{l}\ket{\theta_l}_{w}.
\]
Let $V_{\theta}$ be the corresponding controlled rotation, specified on the output of $O_{\theta}$ by
\[
	V_{\theta} \ket{0}_a\ket{\theta_l}_{w} = R_y(2\theta_l)\ket{0}_a\ket{\theta_l}_{w}.
\]
The total ancilla register used in this block-encoding has $w+1$ qubits. The unitary $U_D$ obtained by applying $O_{\theta}$, $V_{\theta}$, $R_z(\pi)$ on the one-qubit rotation ancilla, and $O_{\theta}^{\dagger}$ satisfies
\[
	(\bra{0^{w+1}}\otimes I^{\otimes n_s}) U_D (\ket{0^{w+1}}\otimes I^{\otimes n_s}) = -\i\frac{D_s}{\mu_{s,\max}} .
\]
Thus $U_D$ is a $(1,w+1,0)$-block-encoding of $-\i D_s/\mu_{s,\max}$. Indeed, for each branch $l$, the $\ket{0}_a$-block of $R_z(\pi)V_{\theta}$ is $-\i\cos\theta_l=-\i \mu_l/\mu_{s,\max}$. The circuit is shown in Fig.~\ref{fig:block-encode-UD}.
\begin{figure}[!htbp]
\centering
\[
\Qcircuit @C=.62em @R=.86em{
\lstick{\ket{0}_a}
    & \qw
    & \multigate{2}{U_D}
    & \qw
    & \push{\hspace{.9cm}}
    & \push{\hspace{.7em}}
    & \push{\hspace{.9cm}}
    & \qw
    & \qw
    & \multigate{1}{V_{\theta}}
    & \gate{R_z(\pi)}
    & \qw
    & \qw
\\
\lstick{\ket{0^w}}
    & \qw
    & \ghost{U_D}
    & \qw
    & \push{\hspace{.9cm}}
    & \push{\hbox{\Large$\equiv$}}
    & \push{\hspace{.9cm}}
    & \qw
    & \multigate{1}{O_{\theta}}
    & \ghost{V_{\theta}}
    & \qw
    & \multigate{1}{O_{\theta}^{\dagger}}
    & \qw
\\
\lstick{\ket{l}}
    & \qw
    & \ghost{U_D}
    & \qw
    & \push{\hspace{.9cm}}
    & \push{\hspace{.7em}}
    & \push{\hspace{.9cm}}
    & \qw
    & \ghost{O_{\theta}}
    & \qw
    & \qw
    & \ghost{O_{\theta}^{\dagger}}
    & \qw
}
\]
\caption{Block-encoding of $-iD_s/\mu_{s,\max}$, denoted by $U_D$.}
\label{fig:block-encode-UD}
\end{figure}
Conjugating this diagonal block-encoding by the Fourier transform gives
\begin{equation}\label{eq:UP-definition}
	U_P= (I^{\otimes (w+1)}\otimes F_s \otimes I^{\otimes (n+1)}) (U_D\otimes I^{\otimes (n+1)}) (I^{\otimes (w+1)}\otimes F_s^\dag \otimes I^{\otimes (n+1)}),
\end{equation}
which is a $(\mu_{s,\max},w+1,0)$-block-encoding of $D_P=-\i P_s\otimes I^{\otimes (n+1)}$. The circuit is shown in Fig.~\ref{fig:block-encode-DP}.
\begin{figure}[!htbp]
\centering
\[
\Qcircuit @C=.62em @R=.86em{
\lstick{\ket{0^{w+1}}}
    & \qw
    & \multigate{2}{U_P}
    & \qw
    & \push{\hspace{.9cm}}
    & \push{\hspace{.7em}}
    & \push{\hspace{.9cm}}
    & \qw
    & \multigate{1}{U_D}
    & \qw
    & \qw
\\
\lstick{\ket{l}}
    & \qw
    & \ghost{U_P}
    & \qw
    & \push{\hspace{.9cm}}
    & \push{\hbox{\Large$\equiv$}}
    & \push{\hspace{.9cm}}
    & \gate{F_s^{\dagger}}
    & \ghost{U_D}
    & \gate{F_s}
    & \qw
\\
\lstick{\ket{q}\ket{i}}
    & \qw
    & \ghost{U_P}
    & \qw
    & \push{\hspace{.9cm}}
    & \push{\hspace{.7em}}
    & \push{\hspace{.9cm}}
    & \qw
    & \qw
    & \qw
    & \qw
}
\]
\caption{Block-encoding of
$D_P=-\i P_s\otimes I^{\otimes (n+1)}$, denoted by $U_P$.}
\label{fig:block-encode-DP}
\end{figure}

We now combine the two components. The block-encodings $U_P$ and $U_B$ may use different numbers of ancilla qubits, so we set
\[
	a_\ast=\max\{w+1,a_B+2\},
\]
and pad the smaller ancilla register with idle qubits. We still denote the padded block-encodings by $U_P$ and $U_B$.

\begin{lemma}\label{lem:block-encoding-Abar}
Assume that $\mathrm{HAM\text{-}T}_A$ and $\mathrm{HAM\text{-}T}_F$ in \eqref{eq:HAMT-A} and \eqref{eq:HAMT-F} are available. Then the autonomization matrix $\bar A$ in \eqref{autoA} admits a $(\bar\alpha,a_{\bar A},0)$-block-encoding $U_{\bar A}$, where $a_{\bar A}=a_\ast+1$, with normalization
\begin{equation}\label{eq:Abar-normalization}
    \bar\alpha = \mu_{s,\max}+\alpha_A+\frac{1}T .
\end{equation}
One query to $U_{\bar A}$ uses $\mathcal{O}(1)$ queries to $\mathrm{HAM\text{-}T}_A$, $\mathrm{HAM\text{-}T}_F$, and the Fourier differentiation block $U_D$.
\end{lemma}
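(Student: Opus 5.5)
The plan is to obtain $U_{\bar A}$ as a two-term linear combination of unitaries that combines the padded block-encodings $U_P$ and $U_B$ already constructed. The starting point is the decomposition
\[
\bar A = D_P + D_B,\qquad D_P=-\i P_s\otimes I^{\otimes(n+1)},\qquad D_B=\sum_{l}\ket{l}\bra{l}\otimes B(s_l).
\]
Here $U_P$ is a $(\mu_{s,\max},a_\ast,0)$-block-encoding of $D_P$, by \eqref{eq:UP-definition} together with the fact that padding with idle qubits preserves the block. Likewise, $U_B$ is an $(\alpha_B,a_\ast,0)$-block-encoding of $D_B$ with $\alpha_B=\alpha_A+1/T$, by \eqref{eq:alphaB-def}. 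Both act on the same system register (the $s$-register, the $q$-qubit and the $n$-qubit state register) and on a common $a_\ast$-qubit ancilla.

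Next, introduce one further selection qubit $c$, which gives $a_{\bar A}=a_\ast+1$. Define
\[
\mathrm{PREP}\ket{0}_c=\sqrt{\mu_{s,\max}/\bar\alpha}\,\ket{0}_c+\sqrt{\alpha_B/\bar\alpha}\,\ket{1}_c
\]
with $\bar\alpha=\mu_{s,\max}+\alpha_B$, and set $\mathrm{SEL}=\ket{0}\bra{0}_c\otimes U_P+\ket{1}\bra{1}_c\otimes U_B$ and $U_{\bar A}=(\mathrm{PREP}^\dagger\otimes I)\,\mathrm{SEL}\,(\mathrm{PREP}\otimes I)$.

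A direct computation then gives the required block. Sandwich $U_{\bar A}$ between $\bra{0}_c\bra{0^{a_\ast}}$ and $\ket{0}_c\ket{0^{a_\ast}}$. The $c$-register contributes the weights $\mu_{s,\max}/\bar\alpha$ and $\alpha_B/\bar\alpha$. The $a_\ast$-ancilla projection then extracts $D_P/\mu_{s,\max}$ and $D_B/\alpha_B$ from the respective branches. The sum is $(D_P+D_B)/\bar\alpha=\bar A/\bar\alpha$ exactly, so the error is $0$ and $\bar\alpha=\mu_{s,\max}+\alpha_A+1/T$, which is \eqref{eq:Abar-normalization}.

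For the cost, one query to $U_{\bar A}$ uses one controlled $U_P$ and one controlled $U_B$. Controlled $U_P$ costs one controlled $U_D$ plus the QFTs $F_s,F_s^\dagger$, which need no control since they cancel on the unselected branch. Controlled $U_B$ costs $\mathcal O(1)$ queries to $\mathrm{HAM\text{-}T}_A$ and $\mathrm{HAM\text{-}T}_F$, as noted after \eqref{eq:alphaB-def}.

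The only delicate point is bookkeeping. The padded ancillas of $U_P$ and $U_B$ must be identified consistently, so that the all-zero projection is the same in both branches. One must also check that the padded idle qubits stay in $\ket0$, which holds because they are untouched. Controlling the oracles is standard and adds only $\mathcal O(1)$ overhead.
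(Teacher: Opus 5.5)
Your proposal is correct and follows the paper's proof essentially verbatim. The paper also uses a one-qubit outer LCU with $\mathrm{PREP}$ amplitudes $\sqrt{\mu_{s,\max}/\bar\alpha}$ and $\sqrt{\alpha_B/\bar\alpha}$, and $\mathrm{SEL}=\ket{0}\bra{0}\otimes U_P+\ket{1}\bra{1}\otimes U_B$ acting on the common padded $a_\ast$-qubit ancilla, which gives $\bar A/\bar\alpha$ exactly with $a_{\bar A}=a_\ast+1$. Your additional observation that only $U_D$, and not $F_s$ or $F_s^\dagger$, needs to be controlled is a valid refinement of the cost accounting that the paper leaves implicit.
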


\begin{proof}
Let the outer LCU selection register be one qubit. Prepare
\[
	\mathrm{PREP}_{\bar A}\ket{0}_a = \sqrt{\frac{\mu_{s,\max}}{\bar\alpha}}\ket{0}_a + \sqrt{\frac{\alpha_B}{\bar\alpha}}\ket{1}_a,
\]
and define
\[
	\mathrm{SEL}_{\bar A} = \ket{0}\bra{0}\otimes U_P + \ket{1}\bra{1}\otimes U_B .
\]
Set
$U_{\bar A}=(\mathrm{PREP}_{\bar A}^{\dagger}\otimes I_{\mathrm{rest}}) \mathrm{SEL}_{\bar A}(\mathrm{PREP}_{\bar A}\otimes I_{\mathrm{rest}})$. By the LCU construction, the selected ancilla block of $U_{\bar A}$ is the weighted sum of the all-zero ancilla blocks of $U_P$ and $U_B$. Hence
\begin{equation}\label{eq:Abar-block-encoding}
	(\bra{0^{a_{\bar A}}}\otimes I^{\otimes (n_s+1+n)}) U_{\bar A} (\ket{0^{a_{\bar A}}}\otimes I^{\otimes (n_s+1+n)}) = \frac{\bar A}{\bar\alpha},
\end{equation}
where $a_{\bar A}=a_\ast+1$ contains the outer LCU selection qubit and the common padded ancilla register. This proves the claim.
\end{proof}

The LCU circuit is shown in Fig.~\ref{fig:block-encode-Abar}. Together with Figs.~\ref{fig:block-encode-DB} and~\ref{fig:block-encode-DP}, it gives an explicit block-encoding of $\bar A$.
\begin{figure}[!htbp]
\centering
\[
\Qcircuit @C=.62em @R=.86em{
\lstick{\ket{0}_a}
    & \qw
    & \multigate{4}{U_{\bar A}}
    & \qw
    & \push{\hspace{.9cm}}
    & \push{\hspace{.7em}}
    & \push{\hspace{.9cm}}
    & \gate{\mathrm{PREP}_{\bar A}}
    & \ctrlo{1}
    & \ctrl{1}
    & \gate{\mathrm{PREP}_{\bar A}^{\dagger}}
    & \qw
\\
\lstick{\ket{0^{a_\ast}}}
    & \qw
    & \ghost{U_{\bar A}}
    & \qw
    & \push{\hspace{.9cm}}
    & \push{\hspace{.7em}}
    & \push{\hspace{.9cm}}
    & \qw
    & \multigate{3}{U_P}
    & \multigate{3}{U_B}
    & \qw
    & \qw
\\
\lstick{\ket{l}}
    & \qw
    & \ghost{U_{\bar A}}
    & \qw
    & \push{\hspace{.9cm}}
    & \push{\hbox{\Large$\equiv$}}
    & \push{\hspace{.9cm}}
    & \qw
    & \ghost{U_P}
    & \ghost{U_B}
    & \qw
    & \qw
\\
\lstick{\ket{q}}
    & \qw
    & \ghost{U_{\bar A}}
    & \qw
    & \push{\hspace{.9cm}}
    & \push{\hspace{.7em}}
    & \push{\hspace{.9cm}}
    & \qw
    & \ghost{U_P}
    & \ghost{U_B}
    & \qw
    & \qw
\\
\lstick{\ket{i}}
    & \qw
    & \ghost{U_{\bar A}}
    & \qw
    & \push{\hspace{.9cm}}
    & \push{\hspace{.7em}}
    & \push{\hspace{.9cm}}
    & \qw
    & \ghost{U_P}
    & \ghost{U_B}
    & \qw
    & \qw
}
\]
\caption{Block-encoding of the autonomization matrix
$\bar A=D_P+D_B$, denoted by $U_{\bar A}$.}
\label{fig:block-encode-Abar}
\end{figure}

\section{Schr\"odingerization based quantum algorithm}
\label{sec:Sch-auto-complexity}

For the autonomization system in \eqref{autodiscretization}, we solve it by the Schr\"odingerization method in \cite{JLY22SchrShort,JLY22SchrLong,JLMPY2025schr}.

\subsection{Shifted autonomization system}
\label{subsec:Sch-shifted-dissipativity}

The procedure in Section~\ref{sec:auto} gives the autonomization system \eqref{autodiscretization} with coefficient matrix $\bar A$ defined in \eqref{autoA}. To apply the Schr\"odingerization approach, we shift this coefficient matrix to obtain a dissipative system. We assume that $A(t)$ in \eqref{ODElinear} is dissipative, i.e, $\frac{A(t)+A(t)^\dagger}{2}\preceq O$ for $t\in[0,T]$. The homogenization in \eqref{enlargeTime} incorporates the inhomogeneous term  through the upper-right block $F(t)$  of $B(t)$. Thus, the dissipativity of $A(t)$ does not in general imply that of $B(t)$, or consequently of $\bar A$.

Define the logarithmic norm by
\begin{equation}\label{eq:Sch-logarithmic-norm}
\omega(M):=\lambda_{\max}\Big(\frac{M+M^\dagger}{2}\Big).
\end{equation}
By construction, $\|F(t)\|\le 1/T$, and hence the logarithmic norm of $B(t)$ satisfies $\omega(B(t))\le 1/(2T)$. In \eqref{autoA}, the Fourier differentiation term is $-\i P_s\otimes I^{\otimes(n+1)}$. Since $P_s$ is Hermitian, this term is anti-Hermitian and  does not contribute to the Hermitian part of $\bar A$. It follows that
\begin{equation}\label{eq:Sch-mu2-Abar-bound}
\omega(\bar A)\le\max_{0\le l\le N_s-1}\omega(B(s_l))\le\frac{1}{2T}.
\end{equation}
Set $\gamma=1/(2T)$ and define the shifted generator $\bar A_\gamma:=\bar A-\gamma I$. By \eqref{eq:Sch-mu2-Abar-bound},
\[
    \frac{\bar A_\gamma+\bar A_\gamma^\dagger}{2}
    =\frac{\bar A+\bar A^\dagger}{2}-\gamma I
    \preceq O.
\]
Thus $\bar A_\gamma$ is dissipative. Let $\bb{w}(t)$ solve \eqref{autodiscretization}, and define $\bb{y}(t)$ by
\begin{equation}\label{eq:Sch-shifted-ODE}
\begin{cases}
\dfrac{\d}{\d t}\bb{y}(t)=\bar A_\gamma\bb{y}(t),\\
\bb{y}(0)=\bb{w}(0).
\end{cases}
\end{equation}
Then
\begin{equation}\label{eq:Sch-shifted-solution}
    \bb{y}(t)=\e^{-\gamma t}\bb{w}(t).
\end{equation}
The scalar factor $\e^{-\gamma t}$ cancels upon normalization, so the shift does not change the normalized output state.

\subsection{Schr\"odingerization for the shifted autonomization dynamics}
\label{subsec:Sch-shifted-Schrodingerization}

We now apply the Schr\"odingerization method to the shifted autonomization system \eqref{eq:Sch-shifted-ODE}. Since $\bar A_\gamma$ is generally non-Hermitian, we decompose it as
\[
    \bar A_\gamma=H_1+\i H_2,\qquad
    H_1=\frac{\bar A_\gamma+\bar A_\gamma^\dagger}{2},\qquad
    H_2=\frac{\bar A_\gamma-\bar A_\gamma^\dagger}{2\i}.
\]
Then $H_1$ and $H_2$ are Hermitian, and $H_1\preceq O$.  By \eqref{autoA} and \eqref{eq:DB-decomposition},
\begin{equation}\label{eq:Sch-H1-H2-structure}
H_1
=
\frac{D_B+D_B^\dagger}{2}-\gamma I,
\qquad
H_2
=
\frac{D_B-D_B^\dagger}{2\i}
-P_s\otimes I^{\otimes(n+1)}.
\end{equation}
According to \eqref{eq:alphaB-def} and \eqref{eq:UP-definition}, $H_1$ and $H_2$ admit block-encodings with normalization factors satisfying
\begin{equation}\label{eq:Sch-H1-H2-normalizations}
\bar\alpha_{\gamma,1}
\le
\alpha_B+\gamma
=
\alpha_A+\frac{3}{2T},
\qquad
\bar\alpha_{\gamma,2}
\le
\alpha_B+\mu_{s,\max}
=
\alpha_A+\frac{1}{T}+\mu_{s,\max}.
\end{equation}

Following the Schr\"odingerization method, we introduce an auxiliary variable $p$ and define $\bb{q}(t,p)= \psi(p)\bb{y}(t)$ for $p>0$, where $\psi(p) \approx \e^{-p}$ for $p>0$ and decays exponentially on $\mathbb{R}$. The problem \eqref{eq:Sch-shifted-ODE} is then transformed into
\begin{equation}\label{eq:Sch-warped-system}
\begin{cases}
\dfrac{\partial}{\partial t}\bb{q}(t,p)
=
-H_1\dfrac{\partial}{\partial p}\bb{q}(t,p)+\i H_2\bb{q}(t,p),\\
\bb{q}(0,p)=\psi(p)\bb{w}(0).
\end{cases}
\end{equation}

For the numerical discretization, we truncate the $p$-domain to $[-L,R]$ and consider the Fourier spectral discretization. Let $N_p=2^{n_p}$ be even and set $\Delta p=(R+L)/N_p$. The grid points are $p_j=-L+j\Delta p$ for $0\le j\le N_p-1$, and the Fourier modes are
\[
    \mu_k=\frac{2\pi}{R+L}\Big(k-\frac{N_p}{2}\Big),
    \qquad 0\le k\le N_p-1.
\]
Let $D_\mu=\diag(\mu_0,\cdots,\mu_{N_p-1})$ and $\mu_{p,\max}=\max_k|\mu_k|=\pi/\Delta p$. Let $\ket{\bb{\psi}}$ denote the normalized state proportional to $\sum_{j=0}^{N_p-1}\psi(p_j)\ket{j}$. The corresponding circuit is shown in Fig.~\ref{fig:schr_circuit}.

\begin{figure}[H]
\centering
\centerline{
\Qcircuit @C=1em @R=2em {
\lstick{\hbox to 2.7em{$\ket{\bb{\psi}}$\hss}}
& \qw
& \gate{\text{IQFT}}
& \qw
& \multigate{1}{\mathcal{U}(T)}	
& \qw
& \gate{\text{QFT}}
& \qw	
& \qw  &\meterB{\ket{k}}\\
\lstick{\hbox to 2.7em{$\ket{\bb{y}(0)}$\hss}}
& \qw
& \qw
& \qw
& \ghost{\mathcal{U}(T)}
& \qw
& \qw
& \qw
& \qw  & \hbox to 2em{$\ket{\bb{y}(T)}$\hss}
}}
\caption{Quantum circuit for the Schr\"odingerization procedure.}
\label{fig:schr_circuit}
\end{figure}

Applying the discrete Fourier transform in the $p$-direction gives the Hamiltonian system
\begin{equation}\label{eq:Sch-Hamiltonian-system}
\frac{\d}{\d t}\widetilde{\bb{q}}(t)
=
-\i\mathcal{H}\widetilde{\bb{q}}(t),
\qquad
\mathcal{H}=D_\mu\otimes H_1-I\otimes H_2,
\end{equation}
where $\widetilde{\bb{q}}(t)$ is the Fourier representation of the discretized $p$-dependent state. Define the mode Hamiltonians by $H_{\mu_k}:=\mu_k H_1-H_2$. Then
\[
    \mathcal{H}=\sum_{k=0}^{N_p-1}\ket{k}\bra{k}\otimes H_{\mu_k},
    \qquad
    \mathcal{U}(T)=\e^{-\i\mathcal{H}T}
    =\sum_{k=0}^{N_p-1}\ket{k}\bra{k}\otimes\e^{-\i H_{\mu_k}T}.
\]
Thus the shifted non-unitary dynamics \eqref{eq:Sch-shifted-ODE} is mapped to Hamiltonian simulation of $\mathcal H$. Since $\mathcal H$ is time-independent, we can apply QSVT-based Hamiltonian simulation and do not require a truncated Dyson series.

Following~\cite{JLMPY2025schr}, we use the block-encodings of $H_1$ and $H_2$, with normalization factors bounded as in \eqref{eq:Sch-H1-H2-normalizations}, to construct an induced $\mathrm{HAM\text{-}T}_{H_\mu}$ oracle satisfying
\begin{equation}
\label{eq:Sch-HAMT-Hmu}
    (\bra{0^{n_a}}\otimes I)\,
    \mathrm{HAM\text{-}T}_{H_\mu}\,
    (\ket{0^{n_a}}\otimes I)
    =
    \sum_{k=0}^{N_p-1}\ket{k}\bra{k}\otimes
    \frac{H_{\mu_k}}{\bar\alpha_{\gamma,1}\mu_{p,\max}+\bar\alpha_{\gamma,2}}.
\end{equation}
One query to $\mathrm{HAM\text{-}T}_{H_\mu}$ uses $\mathcal{O}(1)$ queries to the block-encodings of $H_1$ and $H_2$.

\begin{lemma}\cite{JLMPY2025schr}
\label{lem:Sch-optimal}
Let $\bb{y}(t)$ be the solution of \eqref{eq:Sch-shifted-ODE}. Assume that the induced $\mathrm{HAM\text{-}T}_{H_\mu}$ oracle in \eqref{eq:Sch-HAMT-Hmu} is available, and define $g_y=\|\bb{y}(0)\|/\|\bb{y}(T)\|$. Under the optimal smooth-initialization condition in the auxiliary $p$-variable, one can take $\mu_{p,\max}=\mathcal{O}(\log(g_y/\varepsilon))$. Then there exists a quantum algorithm that prepares a state within $\ell^2$-distance $\varepsilon$ of $\bb{y}(T)/\|\bb{y}(T)\|$, with success probability $\Omega(1)$ and a flag indicating success. The algorithm uses
\begin{itemize}
    \item
    $
    \mathcal{O}\Big(    g_y[
    (
    \bar\alpha_{\gamma,1}\mu_{p,\max}
    +\bar\alpha_{\gamma,2}
    )T
    +\log\frac{g_y}{\varepsilon}
    ]
    \Big)
    $
    queries to $\mathrm{HAM\text{-}T}_{H_\mu}$;
    \item $\mathcal{O}(g_y)$ queries to the state-preparation oracle for $\bb{y}(0) = \bb{w}(0)$.
\end{itemize}
\end{lemma}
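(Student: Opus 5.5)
The plan is to follow the standard Schrödingerization pipeline of \cite{JLMPY2025schr} and split the work into three parts: an error budget, the Hamiltonian simulation cost, and amplitude amplification for the post-selection.

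\textbf{Step 1: error budget.} I would first fix the initial profile $\psi$ in the $p$-variable to be one of the smooth, super-exponentially decaying profiles used in the smooth-initialization analysis, for instance the erf-type construction of Theorem~\ref{thm:profile-function} adapted to $\e^{-p}$ for $p>0$. For such a profile the Fourier error in $p$ satisfies a bound of the form \eqref{errPhG} with $\beta=1$ and $r\simeq\log(1/\delta_p)$. Taking $\delta_p\simeq\varepsilon/g_y$ then gives $\mu_{p,\max}=\mathcal O(\log(g_y/\varepsilon))$, just as \eqref{eq:smumax} does in the $s$-variable. The truncation $[-L,R]$ would be chosen with $L,R=\mathcal O(\log(g_y/\varepsilon))$, so that the tails of $\psi$ and of $\e^{-p}$ beyond $R$ contribute an error below $\delta_p\|\bb y(0)\|$.

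\textbf{Step 2: recovery and Hamiltonian simulation.} I would recover $\bb y(T)$ by projecting onto grid points $p_j>0$ after the inverse QFT, using the fact that $\bb q(T,p)=\e^{-p}\bb y(T)$ for $p\ge p^\ast$. This is valid because $H_1\preceq O$ makes the characteristics move rightward, which is where dissipativity is used. The relative error after normalization is then $\mathcal O(\delta_p g_y)=\mathcal O(\varepsilon)$.

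Next, since $\mathcal H$ is time-independent and block-diagonal in $k$, I would implement $\mathcal U(T)$ by QSVT-based Hamiltonian simulation, using the $\mathrm{HAM\text{-}T}_{H_\mu}$ block-encoding with normalization $\bar\alpha_{\gamma,1}\mu_{p,\max}+\bar\alpha_{\gamma,2}$. This costs $\mathcal O\big((\bar\alpha_{\gamma,1}\mu_{p,\max}+\bar\alpha_{\gamma,2})T+\log(1/\varepsilon')\big)$ queries at accuracy $\varepsilon'$.

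\textbf{Step 3: amplitude amplification.} The success probability of measuring $p>0$ (together with the QFT back-conversion) is $\Omega\big(\|\bb y(T)\|^2/\|\bb y(0)\|^2\big)=\Omega(g_y^{-2})$, since the $p>0$ weight is a constant fraction of the total norm of the $\psi$ part. Amplitude amplification boosts this to $\Omega(1)$ using $\mathcal O(g_y)$ rounds. Each round uses one forward and one inverse call to the simulation and to the preparation oracle for $\bb w(0)$. To keep the total error at most $\varepsilon$ after amplification, I would set $\varepsilon'\simeq\varepsilon/g_y$, which produces the $\log(g_y/\varepsilon)$ term. Multiplying out the per-round cost by the $\mathcal O(g_y)$ rounds gives both stated query counts.

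The main obstacle is Step 1: establishing the pointwise (not merely $L^2$) Fourier error bound in $p$ with only logarithmic $\mu_{p,\max}$. The difficulty is that $\psi$ must match $\e^{-p}$ on $p>0$ while remaining smooth with Gevrey-type derivative bounds. I would take this from the smooth-initialization estimates of \cite{JLMPY2025schr}, and check that the normalization of $\ket{\bb\psi}$ does not degrade the $g_y$ dependence.
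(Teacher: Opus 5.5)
Your outline is the standard argument behind the result the paper imports from \cite{JLMPY2025schr}. The paper gives no separate proof, only the remark that the $\mathrm{HAM\text{-}T}_{H_\mu}$ normalization $\bar\alpha_{\gamma,1}\mu_{p,\max}+\bar\alpha_{\gamma,2}$ keeps the two factors apart, and that is exactly what your Step~2 uses. Your error split, recovery at $p>p^\ast$ (where $H_1\preceq O$ enters), the choice $\varepsilon'\simeq\varepsilon/g_y$, and the $\mathcal O(g_y)$ rounds of amplitude amplification also match how the lemma is used in the proof of Theorem~\ref{thm:Sch-auto-complexity}.

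One step fails as written: the truncation $L,R=\mathcal O(\log(g_y/\varepsilon))$, which only accounts for the tails of $\psi$ at $t=0$. For $H_1\preceq O$ the profile is transported to the \emph{left}, with speed up to $\lambda_{\max}(-H_1)\le\bar\alpha_{\gamma,1}$. It is the backward characteristic from a recovery point that runs to the right, into the region where $\psi\approx\e^{-p}$. Because the Fourier discretization is periodic, whatever is pushed past $-L$ re-enters at $R$.

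For a concrete failure, take $H_2=O$, $H_1=\diag(0,-\Lambda)$, and $\bb y(0)=(1,1)^\top$, so that $g_y\approx\sqrt2$. The discrete second component at a grid point $p_j$ equals the trigonometric interpolant of $\psi$ evaluated at $p_j+\Lambda T$ modulo $L+R$. If $\Lambda T=L+R$ (possible whenever $\bar\alpha_{\gamma,1}T\ge L+R$), this value is $\approx\e^{-p_j}$. The exact value is $\e^{-p_j-\Lambda T}$, so the relative error is $\mathcal O(1)$.

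The fix is to take $L\gtrsim\lambda_{\max}(-H_1)T+\log(g_y/\varepsilon)$ and $R\gtrsim\log(g_y/\varepsilon)$. This leaves the stated query counts intact, because the QSVT cost depends only on $\mu_{p,\max}=\pi/\Delta p$, which is set by the mesh size and the smoothness of $\psi$. The longer domain only increases $N_p=\mu_{p,\max}(L+R)/\pi$, i.e.\ the number of qubits in the $p$-register. You also anticipated an extra $\mu_{p,\max}^{1/2}$ grid-normalization factor; the paper absorbs it by taking $\delta_p\simeq\delta_{\mathrm{sim}}\simeq\eta_p/\sqrt{\log(1/\eta_p)}$. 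With these two adjustments your argument goes through.
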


\begin{remark}
Unlike Theorem 2.2 of \cite{JLMPY2025schr}, where both $\bar\alpha_{\gamma,1}$ and $\bar\alpha_{\gamma,2}$ are majorized by a common constant $\alpha_H \ge \max\{\bar{\alpha}_{\gamma,1}, \bar{\alpha}_{\gamma,2}\}$, we keep them separate here. Since $\bar\alpha_{\gamma,2}$ is $\varepsilon$-dependent whereas $\bar\alpha_{\gamma,1}$ is not, such a uniform majorization would artificially couple the $\varepsilon$-dependence to $\bar\alpha_{\gamma,1}$ and consequently worsen the resulting $\varepsilon$-dependence.
\end{remark}

\subsection{Complexity analysis}
\label{subsec:Sch-complexity-analysis}

We now apply Lemma~\ref{lem:Sch-optimal} to the shifted autonomous system generated by $\bar{A}_\gamma$. Throughout this paper, we use $O_w$ to denote the state-preparation oracle for $\bb{w}(0)$ in \eqref{autodiscretization}. It can be implemented using $\mathcal{O}(1)$ queries to the state-preparation oracle $O_x$ for $\bb{x}(0)$.

\begin{theorem}\label{thm:Sch-auto-complexity}
Consider system \eqref{ODElinear}, and assume that $A(t)$ is dissipative and $\bb{x}(T)\ne\bb{0}$. Assume access to the $\mathrm{HAM\text{-}T}_{A}$ and $\mathrm{HAM\text{-}T}_{F}$ oracles in \eqref{eq:HAMT-A} and \eqref{eq:HAMT-F}, respectively, and to the state-preparation oracle $O_x$ for $\bb{x}(0)$. Let $G$ be one of the profiles in \eqref{Gm}, \eqref{Gw}, and \eqref{Gf}, and let $\beta$ be the corresponding exponent in Theorem~\ref{thm:profile-function}. Define $g_A=(\|\bb{x}(0)\|+\|\bb{r}(0)\|)/\|\bb{x}(T)\|$. Then there exists a quantum algorithm that prepares a state within $\ell^2$-distance $\varepsilon$ of $\ket{\bb{x}(T)}$, with success probability $\Omega(1)$ and a flag indicating success. The algorithm uses
\begin{itemize}
    \item
    $
    \mathcal{O}\Big(
    g_A\alpha_A T
    \log^{\max\{1+\beta/2,\,3\beta/2\}}
    \frac{g_A}{\varepsilon}
    \Big)
    $
    queries to $\mathrm{HAM\text{-}T}_{A}$ and $\mathrm{HAM\text{-}T}_{F}$,
    \item
    $
    \mathcal{O}\Big(
    g_A\log^{\beta/2}\frac{g_A}{\varepsilon}
    \Big)
    $
    queries to the state-preparation oracle $O_x$.
\end{itemize}
\end{theorem}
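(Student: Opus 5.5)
The plan is to apply Lemma~\ref{lem:Sch-optimal} to the shifted autonomous system \eqref{eq:Sch-shifted-ODE} generated by $\bar A_\gamma$, with $\gamma=1/(2T)$. After that, the parameters $\delta_s$, $\mu_{s,\max}$ and $\mu_{p,\max}$ are tuned so that the total error is at most $\varepsilon$. The output of the Schr\"odingerization step is (approximately) $\bb{y}(T)/\|\bb{y}(T)\|$. Since $\bb{y}(T)=\e^{-\gamma T}\bb{w}(T)$, this state equals $\bb{w}(T)/\|\bb{w}(T)\|$. We then apply the recovery map $R_x$ in \eqref{eq:Rx-recovery}, i.e.\ we project the $s$-register onto $\ket{N_s/2}$ and the $q$-register onto $\ket{0}$. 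We treat this projection as a second postselection and fold it into amplitude amplification. The success amplitude of the whole procedure is then governed by the single ratio $\|\bb{w}(0)\|/\|\bb{x}_h(T)\|$, and the proof reduces to three estimates: the error budget, this ratio, and the normalization constants.

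\emph{Error budget.} We split $\varepsilon$ into two halves. The first half goes to the $s$-discretization. By Lemma~\ref{lem:s-discretization-error}, $\|\bb{x}(T)-\bb{x}_h(T)\|\le\|\bb{u}(T)-\bb{u}_h(T)\|\le C\delta_s\|\bb{u}(0)\|$. The normalized-state error is at most twice the relative error, so it is bounded by $2C\delta_s(\|\bb{x}(0)\|+\|\bb{r}(0)\|)/\|\bb{x}(T)\|=2C\delta_s g_A$. Choosing $\delta_s=\varepsilon/(4Cg_A)$ makes this at most $\varepsilon/2$. By \eqref{eq:smumax}, taking $r\simeq\log(1/\delta_s)$ then gives $\mu_{s,\max}=\mathcal{O}(\log^\beta(g_A/\varepsilon))$. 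The second half of $\varepsilon$ goes to the Schr\"odingerization step, after rescaling by the postselection ratio below.

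\emph{Amplification ratio.} Only about $2R_s/\Delta s$ grid values of $G$ are non-negligible. A Riemann-sum argument therefore gives $\|\bb{w}(0)\|^2=\sum_l|G(s_l)|^2\|\bb{u}(0)\|^2\approx\Delta s^{-1}\|G\|_{L^2}^2\|\bb{u}(0)\|^2$, so $\|\bb{w}(0)\|\lesssim\mu_{s,\max}^{1/2}\|\bb{u}(0)\|$. This bound carries no factor of $T$, because $G$ is localized. For the denominator, $\|R_x\bb{w}(T)\|=\|\bb{x}_h(T)\|\ge\|\bb{x}(T)\|/2$ once $\delta_s$ is small, since $G(0)=1$. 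Combining these with $\e^{\gamma T}=\e^{1/2}$, the effective amplification factor is
\[
g:=\frac{\|\bb{y}(0)\|}{\|R_x\bb{y}(T)\|}\lesssim g_A\,\mu_{s,\max}^{1/2}=\mathcal{O}\Big(g_A\log^{\beta/2}\frac{g_A}{\varepsilon}\Big).
\]
We use $g$ in place of $g_y$ in Lemma~\ref{lem:Sch-optimal}. This is legitimate because the recovery projection is applied coherently before the amplitude-amplification reflection. This already yields the claimed count of state-preparation queries, since $O_w$ costs $\mathcal{O}(1)$ queries to $O_x$.

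\emph{Query count.} We insert \eqref{eq:Sch-H1-H2-normalizations} and take $\mu_{p,\max}=\mathcal{O}(\log(g/\varepsilon))=\mathcal{O}(\log(g_A/\varepsilon))$. Then
\[
(\bar\alpha_{\gamma,1}\mu_{p,\max}+\bar\alpha_{\gamma,2})T\lesssim(\alpha_A T+1)\log\frac{g_A}{\varepsilon}+T\log^\beta\frac{g_A}{\varepsilon}.
\]
Multiplying by $g$ gives $g_A\big(\alpha_AT\log^{1+\beta/2}+T\log^{3\beta/2}\big)$, up to the additive $g\log(g/\varepsilon)$ term, which is lower order. Under the standing normalization $\alpha_AT\gtrsim1$, this is $\mathcal{O}\big(g_A\alpha_AT\log^{\max\{1+\beta/2,3\beta/2\}}(g_A/\varepsilon)\big)$. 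Each query to $\mathrm{HAM\text{-}T}_{H_\mu}$ costs $\mathcal{O}(1)$ queries to $\mathrm{HAM\text{-}T}_A$ and $\mathrm{HAM\text{-}T}_F$, via Lemma~\ref{lem:block-encoding-Abar} and the decomposition into $H_1$ and $H_2$. This gives the stated bound.

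The main obstacle is the amplification-ratio step. One must justify the Riemann-sum bound on $\|\bb{w}(0)\|$ uniformly in $T$; for $G_{\mathrm f}$ the tails must be controlled through $R_s$. One must also argue carefully that the $s=T$ postselection merges into a single amplitude amplification, so that no separate factor of order $\sqrt{N_s}\sim\sqrt{T\mu_{s,\max}}$ appears. I would first try bounding $\|\bb{w}(T)\|$ from below only through its $s=T$ slice, and accounting for the recovery projection within the oblivious amplification of Lemma~\ref{lem:Sch-optimal}.
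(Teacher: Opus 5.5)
Your proposal is correct and follows essentially the same route as the paper. Both shift by $\gamma=1/(2T)$, split the error into the $s$-discretization part (with $\delta_s=\varepsilon/(4Cg_A)$) and the Schr\"odingerization part measured relative to $\|\bb{x}(T)\|$, fold the $p$-, $s{=}T$- and $\bb{x}$-block postselections into one amplitude amplification with factor $\|\bb{w}(0)\|/\|\bb{x}(T)\|\lesssim g_A\mu_{s,\max}^{1/2}$ via the Riemann-sum bound on $\|\bb{w}(0)\|$, and then substitute \eqref{eq:Sch-H1-H2-normalizations}. One small bookkeeping point: absorbing the $g_A T\log^{3\beta/2}(g_A/\varepsilon)$ term, which comes from $T\mu_{s,\max}$, into $g_A\alpha_A T\log^{3\beta/2}(g_A/\varepsilon)$ requires $\alpha_A\gtrsim 1$, not merely the $\alpha_A T\gtrsim 1$ you name; the paper relies on the same implicit assumption.
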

\begin{proof}
(1) Let $\widetilde{\bb{y}}(T)$ be the unnormalized vector obtained after the $p$-recovery in the Schr\"odingerization solver for \eqref{eq:Sch-shifted-ODE}.
By \eqref{eq:Sch-shifted-solution} and \eqref{eq:Rx-recovery},
\[
\bb{x}_h(T)=\e^{\gamma T}R_x\bb{y}(T),
\qquad
\widetilde{\bb{x}}_h(T)=\e^{\gamma T}R_x\widetilde{\bb{y}}(T).
\]
Noting that
\[
\big\|\ket{\bb{x}(T)}-\ket{\widetilde{\bb{x}}_h(T)}\big\|
\le
\frac{2\|\bb{x}(T)-\widetilde{\bb{x}}_h(T)\|}
{\|\bb{x}(T)\|},
\]
we can decompose the error as
\begin{equation}\label{eq:Sch-total-error}
\|\bb{x}(T)-\widetilde{\bb{x}}_h(T)\|
\le
\|\bb{x}(T)-\bb{x}_h(T)\|
+
\e^{\gamma T}
\big\|R_x\big(\bb{y}(T)-\widetilde{\bb{y}}(T)\big)\big\|
=:\varepsilon_1+\varepsilon_2.
\end{equation}

For the term $\varepsilon_1$, Lemma~\ref{lem:s-discretization-error} and the definition of $g_A$ give
\[
\varepsilon_1
=\|\bb{x}(T)-\bb{x}_h(T)\|
\le \|\bb{u}(T)-\bb{u}_h(T)\|
\le C\delta_s\|\bb{u}(0)\|
\le Cg_A\delta_s\|\bb{x}(T)\|.
\]
Taking $\delta_s=\varepsilon/(4Cg_A)$ gives
$\varepsilon_1\le\varepsilon\|\bb{x}(T)\|/4$.

For the term $\varepsilon_2$, since $H_1\preceq O$, the recovery point can be chosen so that $p_{k_*}=\mathcal{O}(1)$. Applying the $p$-discretization estimate in the proof of Theorem~2.2 of \cite{JLMPY2025schr}, together with its smooth-profile recovery estimate in Theorem~5.2, gives
\[
\|\bb{y}(T)-\widetilde{\bb{y}}(T)\|
\lesssim
\mu_{p,\max}^{1/2}
\bigl(\delta_p+\delta_{\mathrm{sim}}\bigr)
\|\bb{w}(0)\|,
\]
where $\delta_p$ is the error from the $p$-direction discretization and smooth-profile recovery, and $\delta_{\mathrm{sim}}$ is the Hamiltonian simulation error. Since $\gamma T=1/2$ and $G(0)=1$, one has $\|R_x\|=1$. Hence,
\[
\varepsilon_2
=\e^{\gamma T}
\big\|R_x\big(\bb{y}(T)-\widetilde{\bb{y}}(T)\big)\big\|
\lesssim
\mu_{p,\max}^{1/2}
\bigl(\delta_p+\delta_{\mathrm{sim}}\bigr)
\|\bb{w}(0)\|.
\]
Set $\eta_p=\varepsilon\|\bb{x}(T)\|/(8\|\bb{w}(0)\|)$. Under the optimal smooth-initialization condition, choose the $p$-direction discretization and smooth-profile parameters so that
\begin{equation}\label{eq:Sch-p-parameter-choice}
\delta_p\simeq\delta_{\mathrm{sim}}
\simeq
\frac{\eta_p}{\sqrt{\log(1/\eta_p)}},
\qquad
\mu_{p,\max}=\mathcal{O}\Big(\log\frac{1}{\eta_p}\Big).
\end{equation}
Choosing the implicit constants appropriately gives $\varepsilon_2\le\varepsilon\|\bb{x}(T)\|/4$.

Combining the two estimates in \eqref{eq:Sch-total-error} gives
\[
\|\bb{x}(T)-\widetilde{\bb{x}}_h(T)\|
\le\frac{\varepsilon}{2}\|\bb{x}(T)\| \quad \mbox{or} \quad \big\|\ket{\bb{x}(T)}-\ket{\widetilde{\bb{x}}_h(T)}\big\| \le \varepsilon.
\]

(2) It remains to estimate the recovery factor. Lemma~\ref{lem:Sch-optimal} prepares a state proportional to $\bb{y}(T)$, and the $p$-recovery succeeds with probability $\Omega\bigl(\|\bb{y}(T)\|^2/\|\bb{w}(0)\|^2\bigr)$. By \eqref{eq:Sch-shifted-solution}, the normalized states $\ket{\bb{y}(T)}$ and $\ket{\bb{w}(T)}$ coincide. The remaining recovery is
\[
\ket{\bb{y}(T)}\longrightarrow\ket{\bb{w}(T)}
\longrightarrow\ket{\bb{u}(T)}\longrightarrow\ket{\bb{x}(T)}.
\]
Since $s_{N_s/2}=T$, postselecting the $s$-register on $\ket{N_s/2}$ yields $\bb{z}(T,T)$. By \eqref{retriveu}, this postselection succeeds with probability $|G(0)|^2\|\bb{u}(T)\|^2/\|\bb{w}(T)\|^2$, while the subsequent projection onto the $\bb{x}$-block succeeds with probability $\|\bb{x}(T)\|^2/\|\bb{u}(T)\|^2$. Therefore,
\[
\P_{\mathrm{succ}}
=
\Omega\Big(
\frac{\|\bb{y}(T)\|^2}{\|\bb{w}(0)\|^2}
\cdot
\frac{|G(0)|^2\|\bb{u}(T)\|^2}{\|\bb{w}(T)\|^2}
\cdot
\frac{\|\bb{x}(T)\|^2}{\|\bb{u}(T)\|^2}
\Big)
=
\Omega\Big(
|G(0)|^2
\frac{\|\bb{x}(T)\|^2}{\|\bb{w}(0)\|^2}
\Big),
\]
where \eqref{eq:Sch-shifted-solution} and $\gamma T=1/2$ are used in the last step. Since $G(0)=1$, amplitude amplification incurs a factor $\mathcal{O}(\bar g_\gamma)$, where $\bar g_\gamma:=\|\bb{w}(0)\|/\|\bb{x}(T)\|$. Consequently, the Schr\"odingerization solver with the recovery procedure uses
\begin{equation}\label{eq:Sch-proof-complexity}
\mathcal{O}\Big(
\bar g_\gamma
\Big[
\big(
\bar\alpha_{\gamma,1}\mu_{p,\max}
+\bar\alpha_{\gamma,2}
\big)T
+\log\frac{\bar g_\gamma}{\varepsilon}
\Big]
\Big)
\end{equation}
queries to $\mathrm{HAM\text{-}T}_{H_\mu}$.

(3) From \eqref{autodiscretization},
\[
\|\bb{w}(0)\|
\le
g_G\bigl(\|\bb{x}(0)\|+\|\bb{r}(0)\|\bigr),
\qquad
g_G:=
\Big(
\Delta s^{-1}
\sum_{l=0}^{N_s-1}|G(s_l)|^2\Delta s
\Big)^{1/2}.
\]
This gives
\[\bar g_\gamma \frac{\|\bb{w}(0)\|}{\|\bb{x}(T)\|} \le  g_G g_A.\]
Since $\delta_s=\varepsilon/(4Cg_A)$, taking $r\simeq\log(1/\delta_s)$ in \eqref{eq:smumax} gives
\[
\mu_{s,\max}
=
\mathcal{O}\Big(
\log^\beta\frac{g_A}{\varepsilon}
\Big).
\]
For the three profiles in \eqref{Gm}, \eqref{Gw}, and \eqref{Gf}, one has $\|G\|_{L^2(I_s)}=\mathcal{O}(1)$ and $G(0)=1$. Hence,
\[
g_G
=
\mathcal{O}(\Delta s^{-1/2})
=
\mathcal{O}(\mu_{s,\max}^{1/2})
=
\mathcal{O}\Big(
\log^{\beta/2}\frac{g_A}{\varepsilon}
\Big).
\]
Thus,
\[
\bar g_\gamma
=
\mathcal{O}\Big(
g_A\log^{\beta/2}\frac{g_A}{\varepsilon}
\Big).
\]

Since $\eta_p=\varepsilon/(8\bar g_\gamma)$, the parameter choice in \eqref{eq:Sch-p-parameter-choice} gives
\[
\mu_{p,\max}
=
\mathcal{O}\Big(
\log\frac{g_A}{\varepsilon}
\Big),
\qquad
\log\frac{\bar g_\gamma}{\varepsilon}
=
\mathcal{O}\Big(
\log\frac{g_A}{\varepsilon}
\Big).
\]
Moreover, \eqref{eq:Sch-H1-H2-normalizations} gives
\[
\bar\alpha_{\gamma,1}
\le
\alpha_A+\frac{3}{2T},
\qquad
\bar\alpha_{\gamma,2}
\le
\alpha_A+\frac{1}{T}+\mu_{s,\max}.
\]
Together with the estimate for $\mu_{s,\max}$, these bounds give
\[
\Big(
\bar\alpha_{\gamma,1}\mu_{p,\max}
+
\bar\alpha_{\gamma,2}
\Big)T
+
\log\frac{\bar g_\gamma}{\varepsilon}
=
\mathcal{O}\Big(
\alpha_A T\log\frac{g_A}{\varepsilon}
+T\log^\beta\frac{g_A}{\varepsilon}
\Big)
=
\mathcal{O}\Big(
\alpha_A T
\log^{\max\{1,\beta\}}\frac{g_A}{\varepsilon}
\Big).
\]
Substituting this estimate into \eqref{eq:Sch-proof-complexity} yields
\begin{equation}\label{eq:Sch-refined-query-bound}
\mathcal{O}\Big(
g_A\alpha_A T
\log^{\max\{1+\beta/2,\,3\beta/2\}}
\frac{g_A}{\varepsilon}
\Big)
\end{equation}
queries to $\mathrm{HAM\text{-}T}_{H_\mu}$. By \eqref{eq:alphaB-def} and \eqref{eq:UP-definition}, each query to $\mathrm{HAM\text{-}T}_{H_\mu}$ uses $\mathcal{O}(1)$ queries to $\mathrm{HAM\text{-}T}_{A}$ and $\mathrm{HAM\text{-}T}_{F}$. Therefore, \eqref{eq:Sch-refined-query-bound} also bounds the total number of queries to these two oracles.

Finally, the number of queries to the state-preparation oracle $O_w$ is $\mathcal{O}(\bar g_\gamma)$, namely,
\[
\mathcal{O}\Big(
g_A\log^{\beta/2}\frac{g_A}{\varepsilon}
\Big).
\]
Since each query to $O_w$ uses $\mathcal{O}(1)$ queries to $O_x$, the same bound holds for $O_x$. This completes the proof.
\end{proof}

\begin{remark}
For the third profile $\eqref{Gf}$ with $\beta = 1/2$, one has
\[
\mathcal{O}\Big( g_A\alpha_A T  \log^{\max\{1+\beta/2,\,3\beta/2\}} \frac{g_A}{\varepsilon} \Big)
= \mathcal{O}\Big( g_A\alpha_A T  \log^{5/4} \frac{g_A}{\varepsilon} \Big) .\]
Therefore, under the analyticity assumption on the original system, the precision dependence scales as $\log^{5/4}(1/\varepsilon)$, where the exponent $5/4$ improves upon the exponent $2$ appearing in the most existing methods, e.g., \cite{low2025optimal}.
\end{remark}

\section{Taylor expansion based quantum algorithm}\label{subsec:autotibcow}

In this section, we integrate the autonomization construction with the BCOW algorithm to develop an alternative quantum solver for time-dependent ODEs. In parallel, we present a refined analysis of the BCOW algorithm that broadens the scope of its original formulation.

\subsection{Introduction to the BCOW algorithm}

We first review the quantum algorithm proposed in~\cite{BerryChilds2017ODE}, which is referred to as the BCOW algorithm, for solving the linear ODE system
\begin{equation}\label{eq:ODEs}
	\begin{cases}
		\dfrac{\d\bb{x}(t)}{\d t}=A\bb{x}(t)+\bb{b}, \qquad t\in (0,T),\\ \bb{x}(0)= \bb{x}_{\text{in}},
	\end{cases}
\end{equation}
where $A \in \mathbb{C}^{N\times N}$ and $\bb{b} \in \mathbb{C}^N$ are time-independent. The BCOW algorithm is the first quantum ODE solver that exhibits the optimal dependence on the error tolerance. It should be pointed out that the time analysis there relies on the diagonalization of the coefficient matrix $A$, however, the algorithm is valid for cases where $A$ is \textit{not} diagonalizable.

The exact solution of \eqref{eq:ODEs} is given by
\begin{align*}
	\bb{x}(t) = \e^{A t}\bb{x}(0) + \int_0^t \e^{A(t-s)}\d s \bb{b} =: T(At) \bb{x}(0) + S(At)\bb{b},
\end{align*}
For a small time step $h$, we approximate the operators $T(At)$ and $S(At)$ by truncating their Taylor series at order $k$:
\begin{equation}\label{eq:T_k}
T_k(Ah)=\sum_{j=0}^k\frac{(Ah)^j}{j!}, \qquad S_k(Ah)=\sum_{j=1}^k\frac{(Ah)^{j-1}}{j!}h,
\end{equation}
This gives the one-step approximation
\begin{equation}\label{eq:shortstep}
	\bb{x}(h) \approx T_k(Ah)\bb{x}_{\text{in}} + S_k(Ah)\bb{b}.
\end{equation}
Repeating this procedure for $m$ steps gives the total evolution time $T = mh$.

To implement the truncated Taylor expansion coherently, we introduce auxiliary vectors. Starting from $\bb{x}_{0,0} = \bb{x}_{\text{in}}$, we define
\begin{equation}\label{eqs:F-O}
\begin{cases}
\bb{x}_{0,1} = Ah\,\bb{x}_{0,0} + h\bb{b}, \\
\bb{x}_{0,j} = \frac{Ah}{j} \bb{x}_{0,j-1}, \quad j=2,\cdots,k.
\end{cases}
\end{equation}
Then \eqref{eq:shortstep} can be written as
\[
	\bb{x}(h)\approx\bb{x}_{1,0}:=\sum_{j=0}^{k}\bb{x}_{0,j}.
\]
The same construction is repeated on each interval. Thus, for $0\le i\le m-1$, the connection equation between two adjacent time steps is
\[
	\bb{x}_{i+1,0}-\sum_{j=0}^{k}\bb{x}_{i,j}=\bb{0}.
\]
To ensure that the norm of the coefficient matrix remains bounded independently of $k$, as in~\cite{Dong2025Pade}, we multiply it by the factor $\frac{1}{\sqrt{k+1}}$:
\begin{equation}\label{eq:normalized_connection}
	\frac{1}{\sqrt{k+1}}\bb{x}_{i+1,0}-\frac{1}{\sqrt{k+1}}\sum_{j=0}^{k}\bb{x}_{i,j}=\bb{0}.
\end{equation}
To increase the success probability of postselecting the final-time
solution, we append $p$ identity constraints after the last time step,
\[
	\bb{x}_{m,j}=\bb{x}_{m,j-1},\qquad j=1,\cdots,p .
\]

Let
\[\bb{X} = [\bb{x}_{0,0}; \cdots; \bb{x}_{0,k}; \bb{x}_{1,0};\cdots;\bb{x}_{1,k}; \cdots; \bb{x}_{m,0}; \bb{x}_{m,1}; \cdots; \bb{x}_{m,p}]\]
be the solution vector, with ``;'' indicating the straightening of $\{\bb{x}_{i,j}\}$ into a column vector. The number of vectors in $\{\bb{x}_{i,j}\}$ is denoted by $d+1$, where $d = m(k+1)+p$. Then the final system of linear equations can be formulated as
\begin{equation}\label{BCOWsystem}
	C_{m,k,p}(Ah) \bb{X} = \bb{F},
\end{equation}
where
\begin{align}\label{eq:BCOWmatrix}
	C_{m,k,p}(Ah)&=S-\sum_{i=0}^{m-1}\sum_{j=1}^{k}\ket{i(k+1)+j}\bra{i(k+1)+j-1}\otimes\frac{Ah}{j}\nonumber\\&\quad-\frac{1}{\sqrt{k+1}}\sum_{i=0}^{m-1}\sum_{j=0}^{k}\ket{(i+1)(k+1)}\bra{i(k+1)+j}\otimes I\nonumber\\&\quad-\sum_{\ell=d-p+1}^{d}\ket{\ell}\bra{\ell-1}\otimes I ,
\end{align}
and
\begin{equation}\label{eq:row_scaling_matrix_intro}
	S:=I-\Big(1-\frac{1}{\sqrt{k+1}}\Big)\Big(\ket{0}\bra{0}+\sum_{i=0}^{m-1}\ket{(i+1)(k+1)}\bra{(i+1)(k+1)}\Big)\otimes I ,
\end{equation}
where we also include the factor $1/\sqrt{k+1}$ in the first row to maintain block consistency. The right-hand side is given by
\[
	\bb{F}=\ket{0}\otimes \frac{1}{\sqrt{k+1}} \bb{x}_{\rm in}+\sum_{i=0}^{m-1}\ket{i(k+1)+1}\otimes h\bb{b}.
\]

\subsection{New complexity bound}

The original analysis in~\cite{BerryChilds2017ODE} relies on the diagonalization of the coefficient matrix $A$, and the resulting bound depends on the condition number of the eigenvector matrix. In contrast, we do not require $A$ to be diagonalizable. Instead, we express the complexity bound in terms of the matrix-exponential growth factor~\cite{KroviODE}
\begin{equation}\label{CA}
	C(A):=\sup_{0\le t\le T}\|\e^{At}\|.
\end{equation}
This quantity is well defined for any matrix $A$, including non-normal and non-diagonalizable matrices, and directly controls the growth of the solution.


Block encoding is a general input model for matrix operations on a quantum computer~\cite{2018arXiv180601838G,Gilyen2019QSVD,Chakraborty2019blockEncode,Lin2022Notes,ACL2023LCH2}. Let $ A $ be an $ n $-qubit matrix. While $ A $ is not necessarily a unitary operator, there exists a unitary matrix $ U_A $ on $ (n+m) $-qubits such that the matrix $ \bar{A} = A / \alpha $ is encoded in the upper-left block of $ U_A $, where $ \|A\| \le \alpha $.
The effect of $ U_A $ acting on a quantum state can be described as
\[
	U_A \ket{0^m, b} = \ket{0^m} \bar{A} \ket{b} + \ket{\bot},
\]
where $ \ket{0^m} = \ket{0} \otimes \cdots \otimes \ket{0} $ is the $ m $-qubit state and $ \ket{\bot} $ is a component orthogonal to $ \ket{0^m} \bar{A} \ket{b} $. In the following, we assume access to an exact block-encoding of $A$.

\begin{lemma}\label{lem:block-encode-C}
Suppose that $U_A$ is an exact $(\alpha_A,m_A,0)$ block-encoding
of $A$ as in Definition~\ref{def:blockencoding}. Let $h>0$, and let $C_{m,k,p}(Ah)$ be defined in \eqref{BCOWsystem}. Then there exists a
\[\Big(3\max\{\alpha_Ah,1\},\, m_A+5,\,0\Big)\] block-encoding of $C_{m,k,p}(Ah)$. Each use of this block-encoding requires one query to $U_A$. Apart from $U_A$, the additional elementary gate complexity is $\mathcal{O}(k+\text{polylog}(mkp))$.

\end{lemma}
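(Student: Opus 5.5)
The plan is to split $C_{m,k,p}(Ah)$ into three pieces and combine exact block-encodings of them by a three-term LCU. The pieces are:
\[
C_{m,k,p}(Ah)=M_1+M_2+M_3,\qquad
M_1:=S-\sum_{\ell=d-p+1}^{d}\ket{\ell}\bra{\ell-1}\otimes I,
\]
\[
M_2:=-\frac{1}{\sqrt{k+1}}\sum_{i=0}^{m-1}\sum_{j=0}^{k}\ket{(i+1)(k+1)}\bra{i(k+1)+j}\otimes I,
\]
\[
M_3:=-\sum_{i=0}^{m-1}\sum_{j=1}^{k}\ket{i(k+1)+j}\bra{i(k+1)+j-1}\otimes\frac{Ah}{j}.
\]
The goal is block-encodings of $M_1,M_2,M_3$ with normalizations at most $2$, $1$ and $\alpha_A h$. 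A PREP state on two qubits with amplitudes proportional to the square roots of these constants (padding with a zero branch) then yields normalization at most $2+1+\alpha_Ah\le 3\max\{\alpha_Ah,1\}$. If this sum falls short of the bound, the branch weights can be inflated to exactly $\max\{\alpha_Ah,1\}$ each, since block-encoding a matrix with a larger normalization is always allowed. The ancilla count is then two LCU qubits plus $m_A$ plus at most three more for the sub-encodings, giving $m_A+5$.

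\textbf{$M_1$.} This is a diagonal matrix $S$ with entries in $\{1,1/\sqrt{k+1}\}$ plus a shift restricted to the last $p$ indices. I would encode $S$ with one ancilla via a controlled $R_y$ rotation. The condition ``index is $0$ or a multiple of $k+1$'' is computed by reversible arithmetic, which costs $\mathrm{polylog}(mkp)$ gates. The restricted shift is an adder, controlled on a comparator $\ell\ge d-p+1$, and followed by a projection flag on one ancilla. Combining these two with an inner equal-weight LCU gives normalization $2$.

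\textbf{$M_2$.} This is a rank-structured map: for each block $i$ it sends the uniform superposition over the $k+1$ positions of that block to position $(i+1)(k+1)$. Write $\ell=i(k+1)+j$. The entrywise factor $1/\sqrt{k+1}$ together with $k+1$ columns is exactly $\ket{\text{target}}\bra{u_i}$, where $\ket{u_i}=\frac{1}{\sqrt{k+1}}\sum_j\ket{i(k+1)+j}$. So $M_2$ is a partial isometry and is encodable with norm $1$. The construction is as follows. Decompose $\ell$ into $(i,j)$ via arithmetic. Apply the inverse of a uniform-superposition preparation on the $j$ register and flag onto $\ket{0}$; this step is where the $\mathcal{O}(k)$ gate cost enters when $k+1$ is not a power of two. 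Then map $i\mapsto i+1$ with $j=0$, and recombine into $\ell$. A comparator ancilla restricts to $i\le m-1$.

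\textbf{$M_3$.} This is a controlled product of the shift $\ket{\ell}\mapsto\ket{\ell+1}$ (restricted to $j\le k-1$ within each block) with the operator $A h/j$. I would realize it with one query to $U_A$ giving $Ah/(\alpha_Ah)$, then multiply by a diagonal $1/j$ encoded via a controlled rotation on one extra ancilla. The rotation angles are computed from $j$; the $j$ values are precomputable, which again costs $\mathcal{O}(k)$ gates. Since $1/j\le1$, the normalization stays $\alpha_Ah$.

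\textbf{Main obstacle.} The step I expect to be the main obstacle is the bookkeeping: confirming that the ancillas can be shared across branches so that the total is $m_A+5$, and that each branch uses at most one (padded, controlled) query to $U_A$. Only the $M_3$ branch queries $U_A$, so the controlled select uses a single query overall. Exactness follows because every sub-encoding is exact.
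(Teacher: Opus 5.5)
Your three-branch LCU does not give the normalization the lemma claims. The branch weights are $2$, $1$ and $\alpha_Ah$, so the LCU constant is $3+\alpha_Ah$. The inequality $3+\alpha_Ah\le 3\max\{\alpha_Ah,1\}$ holds only for $\alpha_Ah\ge 3/2$. For $\alpha_Ah\le 1$, including the normalized case $\alpha_Ah=1$, your construction gives a constant up to $4$ instead of $3$. Inflating weights cannot help, since that only increases the constant. Nor can the weight on your $M_1$ be lowered enough. Restricted to the last $p+1$ indices, $M_1$ is the bidiagonal block $\tilde W$, with diagonal $(1/\sqrt{k+1},1,\dots,1)$ and subdiagonal $-1$. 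Its norm exceeds $1$ as soon as $p\ge1$ and tends to $2$ as $p$ grows (test it on an alternating-sign vector). So at $\alpha_Ah=1$ the total stays above $3$ even with weight exactly $\|M_1\|$. The loss comes from putting the padding shift $\sum_{\ell=d-p+1}^{d}\ket{\ell}\bra{\ell-1}\otimes I$ and the $A$-dependent Taylor shift $M_3$ into different LCU branches, so their weights add.

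The missing idea is that these two pieces are supported on disjoint index ranges: the first $m(k+1)$ indices versus the last $p+1$. They can therefore be combined by a select controlled on a region (sector) flag, which costs the \emph{maximum} of their normalizations rather than the sum. This is how the paper proceeds. It writes $C_{m,k,p}(Ah)=L_1+L_2$, where $L_1=\diag(I^{\otimes\mathfrak m}\otimes W,\tilde W)$ is assembled by coherent routing on the sector qubit. In the normalized case $L_1$ has normalization $2$, because $W$ is an equal-weight two-term LCU of the diagonal and the $A$-dependent shift, and $\tilde W$ is likewise an LCU of the diagonal and the plain shift. The connection part $L_2$ has normalization $1$, and an outer LCU with weights $2/3$ and $1/3$ gives $3$. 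General $h$ and $\alpha_A$ are then handled by the rescaling lemma of \cite{Dong2025Pade}. Within your own decomposition the repair is analogous. Keep $S$ (weight $1$) and $M_2$ (weight $1$) as separate branches, and make $M_3$ together with the padding shift a single region-controlled branch of weight $\max\{\alpha_Ah,1\}$. When $\alpha_Ah<1$, absorb the factor $\alpha_Ah$ into your $1/j$ rotation; when $\alpha_Ah>1$, damp the padding shift by a rotation. This yields $2+\max\{\alpha_Ah,1\}\le 3\max\{\alpha_Ah,1\}$, still with a single query to $U_A$.
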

\begin{proof}
The proof is given in Appendix~\ref{sec:BE_C}.
\end{proof}

We now state the main complexity theorem for the BCOW algorithm under the block-encoding input model. Notably, the result does not require the evolution matrix $A$ to be diagonalizable.

\begin{theorem}\label{thm:complexity}
Suppose that $A$ is an $N\times N$ evolution matrix, and let $C(A)$ be defined in \eqref{CA}. Let $\bb{x}(t)$ evolve according to the differential equation \eqref{eq:ODEs}. Suppose that $U_A$ is an exact $(\alpha_A,m_A,0)$ block-encoding of $A$ as in Definition~\ref{def:blockencoding}. We also assume that oracles $O_x$ and $O_b$ are available for preparing states proportional to $\bb{x}_{\rm in}$ and $\bb{b}$, respectively. Then there exists a quantum algorithm that produces a state $\epsilon$-close to $\bb{x}(T)/\|\bb{x}(T)\|$ in the $\ell^2$ norm, with probability $\Omega(1)$ and a success flag, using
\begin{equation}\label{eq:final-complexity}
	\mathcal{O}\Big(C(A)g\alpha_A T\log^{1.5}\frac{C(A) g \alpha_A T g_b}{\epsilon}\Big)
\end{equation}
queries to $U_A$, $O_x$, and $O_b$, where
\[
	g:=\frac{\max_{t\in[0,T]}\|\bb{x}(t)\|}{ \|\bb{x}(T)\|},\qquad g_b:=1+\frac{T\e^2\|\bb{b}\|}{\|\bb{x}(T)\|}.
\]
\end{theorem}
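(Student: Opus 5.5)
The proof follows the standard BCOW pipeline: bound the error of the truncated Taylor scheme, bound the condition number of $C_{m,k,p}(Ah)$, bound the postselection success probability, and combine these with an optimal QLSA. The input is the block-encoding of Lemma~\ref{lem:block-encode-C}. Throughout, diagonalization is replaced by the growth factor $C(A)$ of \eqref{CA}, following the basic estimates of \cite{KroviODE}.

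\textbf{Time step and truncation.} I choose the step so that $\alpha_A h\le 1$, i.e.\ $m=\lceil \alpha_A T\rceil$. Then the block-encoding of $C_{m,k,p}(Ah)$ has normalization $\mathcal{O}(1)$.

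\textbf{Solution error.} The local error is $\|T_k(Ah)-\e^{Ah}\|\le (\|A\|h)^{k+1}/(k+1)!\cdot \e$, and similarly for $S_k$. Propagating over $m$ steps, the global error is written as a telescoping sum $\sum_i \e^{A(m-i-1)h}(\text{local error})$. Each exact propagator is bounded by $C(A)$, but the approximate propagators $T_k(Ah)^j$ are not bounded this way. I will bound them by writing $T_k(Ah)=\e^{Ah}+E$ and using a discrete Gronwall-type estimate, which gives $\|T_k(Ah)^j\|\le C(A)(1+C(A)\|E\|)^j\le 2C(A)$ once $k$ is chosen so that $mC(A)\|E\|\le 1$. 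This yields
\[
\|\bb{x}(T)-\bb{x}_{m,0}\|\lesssim m\,C(A)\frac{\e}{(k+1)!}\big(\max_t\|\bb{x}(t)\|+T\|\bb{b}\|\big),
\]
so $k=\mathcal{O}\big(\log(C(A)g\alpha_AT g_b/\epsilon)/\log\log(\cdot)\big)$ suffices. Here the $\e^2$ in $g_b$ comes from bounding $\sum_j 1/j!$ factors.

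\textbf{Condition number (the main obstacle).} Since $\|C_{m,k,p}\|=\mathcal{O}(1)$, it remains to bound $\|C^{-1}\|$, that is, $\|\bb{X}\|\lesssim \|C^{-1}\|\,\|\bb{F}\|$ for arbitrary $\bb{F}$. I will solve $C\bb{X}=\bb{F}$ block by block. Within a step, the sub-vectors satisfy $\|\bb{x}_{i,j}\|\le \|\bb{x}_{i,0}\|/j!+\dots$ because $\|Ah\|\le1$. Across steps, the iterates are propagated by $T_k(Ah)$, whose powers are bounded by $2C(A)$ from the previous step. The $\sqrt{k+1}$ row scaling only produces $\mathcal{O}(\sqrt{k})$ factors. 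Summing squared norms over the $d+1=\mathcal{O}(mk+p)$ blocks gives $\|C^{-1}\|=\mathcal{O}(C(A)(m+p)^{1/2}\cdot\mathrm{poly}(k)^{1/2})$ as a first estimate. The delicate point is to get the sharp form $\kappa=\mathcal{O}(C(A)(mk+p))$, in which the $k$-dependence is only linear (or $\sqrt{k}$) and not worse. This is where I expect the bookkeeping of the $1/\sqrt{k+1}$ rescaling to be hardest.

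\textbf{Success probability and complexity.} Choose $p\simeq mk$. Then the $p+1$ identical copies of $\bb{x}_{m,0}\approx\bb{x}(T)$ carry weight at least $\Omega(1/g^2)$ of $\|\bb{X}\|^2$, because every other block is bounded by $\max_t\|\bb{x}(t)\|$ up to the factorial decay. Amplitude amplification therefore costs $\mathcal{O}(g)$ repetitions. Applying an optimal QLSA (with cost $\mathcal{O}(\kappa\log(1/\epsilon))$) to the block-encoding of Lemma~\ref{lem:block-encode-C} gives a total of $\mathcal{O}(g\,\kappa\,\log(1/\epsilon))$ queries. To reach the stated $\log^{1.5}$ rather than $\log^2$, I will use a condition-number estimate of the form $\kappa=\mathcal{O}(C(A)\,m\sqrt{k})$. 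Here $p$ contributes only through $\sqrt{p}$, and the effective count is $m\sqrt{k}$ because the $1/\sqrt{k+1}$ normalization absorbs the summation. Substituting $m=\alpha_AT$ and $k=\mathcal{O}(\log(\cdot))$ then gives \eqref{eq:final-complexity}. Each QLSA step uses one query to $U_A$ and $\mathcal{O}(1)$ queries to $O_x,O_b$ to prepare $\bb{F}$, so the same bound holds for all three oracles.
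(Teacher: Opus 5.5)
Your pipeline (Taylor error, condition number, postselection, optimal QLSA on the block-encoding of Lemma~\ref{lem:block-encode-C}) is the paper's. Your Gronwall-type bound $\|T_k(Ah)^j\|\le C(A)(1+C(A)\|E\|)^j$ is a valid substitute for Lemma~\ref{Lem:boundary of T}. That lemma instead uses that $T_k(Ah)$ commutes with $\e^{Ah}$, so $T_k(Ah)^\ell=(T_k(Ah)\e^{-Ah})^\ell\e^{A\ell h}$, and it avoids putting $C(A)$ into the condition on $k$. Your version only moves $C(A)$ inside the logarithm, and it gives $\e C(A)$ rather than $2C(A)$ under $mC(A)\|E\|\le 1$.

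\textbf{The genuine gap is the condition-number step.} It is the only source of the exponent $1.5$, you defer it rather than prove it, and the facts you invoke there are false. The padding rows $\bb x_{m,j}=\bb x_{m,j-1}$ make the inverse a cumulative sum on the last block. A source $\bb\beta/\sqrt{p}$ in each padding row $j=1,\dots,p$ gives $\bb x_{m,j}=j\bb\beta/\sqrt p$, hence $\|C_{m,k,p}(Ah)^{-1}\|\ge p/\sqrt{3}$. So $p$ enters linearly, not through $\sqrt p$. The same accumulation occurs across time steps: take $A=0$ and a source in every connection row, and $\bb x_{i,0}$ grows linearly in $i$. Thus your ``first estimate'' with $(m+p)^{1/2}$ confuses the size of a single-source response with the operator norm. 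With your choice $p\simeq mk$ one gets $\kappa=\Omega(mk)$, so the QLSA bound can give no better than $g\,\alpha_A T\log^{2}$, contradicting the target. The vague claim that ``the $1/\sqrt{k+1}$ normalization absorbs the summation'' does not rescue this.

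\textbf{How to close it.} Take $p=m$, as the paper does ($m=p=\lceil T\|A\|\rceil$). Because the intra-step components decay like $1/j!$, each of the first $m$ steps contributes $\mathcal O(\max_t\|\bb x(t)\|^2)$ to $\|\bb X\|^2$, not $\mathcal O(k\max_t\|\bb x(t)\|^2)$. Hence $p+1=m+1$ copies of $\bb x(T)$ already give success probability $\Omega(1/g^2)$. Then prove $\|C_{m,k,p}(Ah)^{-1}\|=\mathcal O(\sqrt{k}(m+p)C(A))$ as in Theorem~\ref{thm:scaled_kc}:
\begin{enumerate}
\item Factor $C_{m,k,p}(Ah)=S\widehat C_{m,k,p}(Ah)$.
\item For a source $\bb b_g$ supported in a single block, solve the unscaled recurrence explicitly. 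Using $\|T_{b,k}(Ah)\|\le\e$, the bound on $\|T_k^\ell(Ah)\|$, and $\sum_j 1/(j!)^2\le I_0(2)$, you get $\|\widehat C_{m,k,p}(Ah)^{-1}\bb b_g\|^2\lesssim C(A)^2(m+p)\|\bb b_g\|^2$, with no $k$-dependence.
\item $S^{-1}$ multiplies only the $m+1$ connection rows by $\sqrt{k+1}$. Cauchy--Schwarz with $\sum_g q_g^2\le\frac{7}{2}k(m+p)$ then produces the factor $\sqrt{k}(m+p)$.
\end{enumerate}
Combine this with $\|C_{m,k,p}(Ah)\|\le 4$ (the scaling turns each averaging row into $\ket{0}\bra{u_k}$, of norm one) and $\alpha_C=3\max\{\alpha_A h,1\}$. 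The effective condition number is $\mathcal O(\alpha_A T\,C(A)\sqrt{k})$. Multiplying by $g\log(1/\varepsilon)$ with $k=\mathcal O(\log(\cdot))$ gives the stated $\log^{1.5}$.
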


\begin{proof}
The proof is given in Appendix~\ref{sec:complexity}.
\end{proof}

\subsection{BCOW algorithm for the autonomization problem}

We now apply the BCOW algorithm to the autonomous system \eqref{autodiscretization}. Although the matrices generated by autonomization may fail to be diagonalizable, the complexity analysis depends only on the growth factor $C(A)$. The BCOW algorithm can therefore be applied directly. This yields the linear system
\begin{equation}\label{eq:autonomization-bcow-system}
	C_{m,k,p}(\bar{A}h)\,\bar{\bb{X}}=\bar{\bb{F}}.
\end{equation}
where $\bar{\bb{F}}=|0\rangle \otimes \bb{w}(0)$ and $C_{m,k,p}(\bar{A})$ is defined as in \eqref{eq:BCOWmatrix}.

By Lemma~\ref{lem:block-encoding-Abar}, the required block-encoding $U_{\bar A}$ of $\bar A$  in \eqref{autoA} is constructed  from $\mathrm{HAM\text{-}T}_{A}$, $\mathrm{HAM\text{-}T}_{F}$, and the Fourier differentiation block $U_D$ in Fig.~\ref{fig:block-encode-DP}. Together with the initial state in \eqref{autodiscretization}, this specifies the input to the BCOW algorithm.

By combining the input model with Theorem \ref{thm:complexity}, we obtain the following main result.
\begin{theorem}\label{thm:autonomization-HAMT-block}
Let $\bb{x}(t)$ be the solution of the linear differential equation \eqref{ODElinear}. Assume access to the HAM-T oracles in \eqref{eq:HAMT-A} and \eqref{eq:HAMT-F}, and to   the state-preparation oracle $O_x$ for $\bb{x}(0)$ .  Then there exists a quantum algorithm which outputs a quantum state $\varepsilon$-close to  $\ket{\bb{x}(T)}$ in the $\ell^2$ norm, succeeds with probability $\Omega(1)$, and has a flag indicating success. The algorithm uses
\begin{enumerate}
    \item queries to $\mathrm{HAM\text{-}T}_{A}$ and $\mathrm{HAM\text{-}T}_{F}$ a total number of times
    \begin{equation}\label{eq:HAMT-query-complexity}
		\mathcal{O}\Big(C_A^2 g_A\alpha_A T\log^{3(\beta+1)/2}\Big(\frac{\alpha_A T C_A^2 g_A}{\varepsilon}\Big)\Big),
    \end{equation}

    \item and queries to the   state-preparation oracle $O_x$ a total number of times
    \begin{equation}\label{eq:stateprep-complexity}
		\mathcal{O}\Big(C_A g_A\log^{\beta/2}\frac{C_Ag_A}{\varepsilon}\Big),
    \end{equation}
\end{enumerate}
 where $\beta$ is given by Theorem~\ref{thm:profile-function}, and
	\[
		\mu_{s,\max} = \mathcal{O}\Big(\log^{\beta}\frac{C_Ag_A}{\varepsilon}\Big),\quad
        g_A = \max\Big\{1,\,\frac{\|\bb{x}(0)\|+\|\bb{r}(0)\|}{\|\bb{x}(T)\|} \Big\},
	\]
	\[
		C_A=\max_{0\le l\le N_s-1}\exp\Big(T\max\{\lambda_{\max}(\frac{A(s_l)+A(s_l)^\dag}{2}),0\}\Big),
	\]
\end{theorem}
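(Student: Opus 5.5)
The plan is to apply Theorem~\ref{thm:complexity} to the homogeneous autonomous system \eqref{autodiscretization}, with $A$ replaced by $\bar A$, $\bb b=\bb 0$ (so $g_b=1$), and initial vector $\bb{w}(0)$. The input oracles come from Lemma~\ref{lem:block-encoding-Abar}, which gives an exact $(\bar\alpha,a_{\bar A},0)$ block-encoding with $\bar\alpha=\mu_{s,\max}+\alpha_A+1/T$. The final state is then read out with the recovery map $R_x$ in \eqref{eq:Rx-recovery}. The error is split exactly as in \eqref{eq:Sch-total-error}:
\begin{itemize}
\item an $s$-discretization part, controlled by Lemma~\ref{lem:s-discretization-error} with $\delta_s\simeq\varepsilon/(C_Ag_A)$;
\item a BCOW solver part, made $\lesssim\varepsilon\|\bb{x}(T)\|$ by running the algorithm of Theorem~\ref{thm:complexity} to relative precision $\epsilon\simeq\varepsilon/(g_G g_A C_A)$.
\end{itemize}
Here $g_G=\mathcal{O}(\mu_{s,\max}^{1/2})$ is the same profile factor as in the proof of Theorem~\ref{thm:Sch-auto-complexity}. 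The choice $r\simeq\log(1/\delta_s)$ in \eqref{eq:smumax} gives $\mu_{s,\max}=\mathcal{O}(\log^\beta(C_Ag_A/\varepsilon))$.

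Next I bound the growth factor $C(\bar A)=\sup_{t\le T}\|\e^{\bar At}\|$. The Fourier term $-\i P_s\otimes I$ is anti-Hermitian, so $\omega(\bar A)\le\max_l\omega(B(s_l))$. A crude bound $\omega(B)\le\omega(A)+1/(2T)$ would give $C(\bar A)\le \e^{1/2}C_A$, which is enough up to constants. This uses $\|\e^{Mt}\|\le\e^{t\,\omega(M)}$, and $C_A$ is defined precisely with $\max\{\lambda_{\max},0\}$ to make this step work. If one wants to avoid the $1/(2T)$ shift one can instead exploit the block upper-triangular structure of $B$: the $\bb r$-block is constant, and the $\bb x$-block propagator is bounded by $C_A$. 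Either way $C(\bar A)=\mathcal{O}(C_A)$.

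The main obstacle is the factor $g$ in Theorem~\ref{thm:complexity}, here $g=\max_t\|\bb{w}(t)\|/\|\bb{w}(T)\|$, together with the postselection cost. The quantity $\|\bb{w}(T)\|$ is not $\|\bb{x}(T)\|$. I would bound $\max_t\|\bb{w}(t)\|\le C(\bar A)\|\bb{w}(0)\|\lesssim C_A g_G(\|\bb{x}(0)\|+\|\bb{r}(0)\|)$. The BCOW output is $\bb{w}(T)/\|\bb{w}(T)\|$, and projecting onto $s=T$ and the $\bb{x}$-block succeeds with probability $\|\bb{x}(T)\|^2/\|\bb{w}(T)\|^2$. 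Combining the BCOW factor $g\,C(\bar A)$ with the amplitude-amplification factor $\|\bb{w}(T)\|/\|\bb{x}(T)\|$ cancels $\|\bb{w}(T)\|$. This leaves an effective factor $C(\bar A)^2 g_G g_A$ for the $U_{\bar A}$ queries, which is the source of $C_A^2$. The state-preparation count gets only one $C_A$ because it is governed by $\|\bb{w}(0)\|/\|\bb{x}(T)\|$ times the single recovery $C_A$. Making this cancellation rigorous inside the amplitude-amplified BCOW routine, rather than naively multiplying two independent success probabilities, is the delicate step. I would redo the final stage of the proof of Theorem~\ref{thm:complexity} with the projector $R_x$ inserted.

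Finally, substitute everything into \eqref{eq:final-complexity}: $\alpha_A\to\bar\alpha$, the growth factor $C_A$, and the effective $g\lesssim C_Ag_Ag_G$. Then
\[
\bar\alpha T=\mathcal{O}\big(\alpha_AT\,\mu_{s,\max}\big)=\mathcal{O}\Big(\alpha_AT\log^{\beta}\frac{C_Ag_A}{\varepsilon}\Big)
\]
(for $\alpha_AT\gtrsim1$), and $g_G=\mathcal{O}(\log^{\beta/2})$. The BCOW $\log^{1.5}$ factor has argument polynomial in $\alpha_ATC_A^2g_A/\varepsilon$. Multiplying the exponents gives $\beta+\beta/2+3/2=3(\beta+1)/2$, which is \eqref{eq:HAMT-query-complexity}. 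By Lemma~\ref{lem:block-encoding-Abar}, each query to $U_{\bar A}$ costs $\mathcal{O}(1)$ HAM-T queries. Each query to $O_w$ costs $\mathcal{O}(1)$ queries to $O_x$ (since $\bb b=\bb 0$, $O_b$ is not needed). Together these give \eqref{eq:stateprep-complexity}.
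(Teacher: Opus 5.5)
Your proposal is correct and follows essentially the same route as the paper. Both block-encode $\bar A$ via Lemma~\ref{lem:block-encoding-Abar}, bound $C(\bar A)\le \e^{1/2}C_A$ by the logarithmic norm, take $\delta_s\simeq\varepsilon/(C_Ag_A)$ so that $\mu_{s,\max}=\mathcal{O}(\log^{\beta}(C_Ag_A/\varepsilon))$ and $g_G=\mathcal{O}(\log^{\beta/2}(C_Ag_A/\varepsilon))$, and substitute $\bar\alpha$, $C(\bar A)$ and the effective factor $\bar g=\mathcal{O}(C_Ag_Ag_G)$ into the BCOW cost, with $C_A^2$ arising exactly as you describe. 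The step you flag as delicate is handled in the paper just as you suggest: the successive postselection probabilities (terminal block, then $s=T$, then the $\bb{x}$-block) are chained into one $\P_{\rm succ}=\Omega(\|\bb{x}(T)\|^2/\max_t\|\bb{w}(t)\|^2)$, followed by a single amplitude amplification with factor $\bar g$, so $\|\bb{w}(T)\|$ cancels automatically.
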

\begin{proof}
By \eqref{eq:HAMT-alphaF} and the LCU construction in \eqref{eq:DB-decomposition}--\eqref{eq:alphaB-def}, the block-diagonal matrix $D_B=\sum_{l=0}^{N_s-1}\ket{l}\bra{l}\otimes B(s_l)$ admits an exact HAM-T block-encoding with normalization $\alpha_B=\alpha_A+1/T$. Let $\mu_{s,\max}:=\|P_s\|=\max_{0\le l\le N_s-1}|\mu_l|=\pi/\Delta s$. Combining this encoding with the block-encoding of $-\i P_s\otimes I^{\otimes(n+1)}$ in Lemma~\ref{lem:block-encoding-Abar} gives a block-encoding of $\bar A$ in \eqref{autoA} with normalization
\[
	\bar\alpha=\mathcal{O}(\alpha_B+\mu_{s,\max})=\mathcal{O}\bigl(\alpha_A + \frac{1}{T} +\mu_{s,\max}\bigr).
\]

We next bound $C(\bar A)$. Since $P_s$ is Hermitian, the term
$-\i P_s\otimes I^{\otimes(n+1)}$ is anti-Hermitian. Hence
\[
	\frac{\bar A+\bar A^\dag}{2}=\frac{1}{2}\sum_{l=0}^{N_s-1}\ket l\bra l\otimes\bigl(B(s_l)+B(s_l)^\dag\bigr).
\]
For $H_l=(A(s_l)+A(s_l)^\dag)/2$, Weyl's inequality and \eqref{eq:Sch-logarithmic-norm} give
\[
	\omega(B(s_l))\le\max\{\lambda_{\max}(H_l),0\}+\frac{1}{2}\|F(s_l)\|\le\max\{\omega(A(s_l)),0\}+\frac{1}{2T}.
\]
Therefore,
\[
	\lambda_{\max}\Big(\frac{\bar A+\bar A^\dag}{2}\Big)\le\max_{0\le l\le N_s-1}\max\{\omega(A(s_l)),0\}+\frac{1}{2T}.
\]
The logarithmic norm estimate gives
\begin{equation}\label{eq:Cbar-bound}
	C(\bar A):=\sup_{t\in[0,T]}\|\e^{\bar A t}\|\le\e^{1/2} C_A .
\end{equation}

Let $\delta_s$ be the target error of the $s$-direction Fourier discretization. By Lemma~\ref{lem:s-discretization-error}, choose
\[
	(\Delta s)^{-1}\simeq\mu_{s,\max}\simeq\pi (1/\delta_s)^{1/(r-\frac{1}{2})}\|G^{(r)}\|_{L^2(I_s)}^{1/(r-\frac{1}{2})}.
\]
Taking $r\simeq\log(1/\delta_s)$ and using \eqref{eq:smumax} gives
\begin{equation}\label{eq:mu-s-choice-proof}
	\mu_{s,\max}=\mathcal{O}\Big(\log^\beta\frac{1}{\delta_s}\Big),\qquad\Delta s=\Omega\Big(\log^{-\beta}\frac{1}{\delta_s}\Big).
\end{equation}

Let $\bb{w}(t)$ be the solution of \eqref{autodiscretization}, and let $\widetilde{\bb{w}}(T)$ be the unnormalized terminal vector produced by the BCOW solver. Using \eqref{eq:Rx-recovery}, define
\[
	\bb{x}_h(T)=R_x\bb{w}(T),
	\qquad
	\widetilde{\bb{x}}_h(T)=R_x\widetilde{\bb{w}}(T).
\]
Since $G(0)=1$, one has $\|R_x\|=1$. The triangle inequality gives
\begin{equation}\label{eq:BCOW-total-error}
\|\bb{x}(T)-\widetilde{\bb{x}}_h(T)\| \le \|\bb{x}(T)-\bb{x}_h(T)\| + \big\|R_x\big(\bb{w}(T)-\widetilde{\bb{w}}(T)\big)\big\| =:\varepsilon_1+\varepsilon_2.
\end{equation}
For $\varepsilon_1$, Lemma~\ref{lem:s-discretization-error}, \eqref{eq:Cbar-bound}, and the definition of $g_A$ give
\[
\varepsilon_1 =\|\bb{x}(T)-\bb{x}_h(T)\| \le \|\bb{u}(T)-\bb{u}_h(T)\| \le C C_A\delta_s\|\bb{u}(0)\| \le C C_Ag_A\delta_s\|\bb{x}(T)\|.
\]
For $\varepsilon_2$, define
\[
\bar g := \max\Big\{ 1,\, \frac{\max_{t\in[0,T]}\|\bb{w}(t)\|} {|G(0)|\|\bb{x}(T)\|} \Big\}.
\]
For $0<\xi<1$, choose the truncation order $k$ such that $(k+1)!\ge m\e^3/\xi$. Applying \eqref{eq:BCOW-vector-error} to the homogeneous system \eqref{autodiscretization} at $t=T$ gives
\[
\varepsilon_2 =\big\|R_x\big(\bb{w}(T)-\widetilde{\bb{w}}(T)\big)\big\| \le \|\bb{w}(T)-\widetilde{\bb{w}}(T)\| \le \xi\|\bb{w}(T)\| \le \bar g\xi\|\bb{x}(T)\|.
\]
Taking
\[
\delta_s=\frac{\varepsilon}{4C C_Ag_A}, \qquad \xi=\frac{\varepsilon}{4\bar g},
\]
gives
\[
\|\bb{x}(T)-\widetilde{\bb{x}}_h(T)\| \le \frac{\varepsilon}{2}\|\bb{x}(T)\|,
\]
or
\[
\big\|\ket{\bb{x}(T)}-\ket{\widetilde{\bb{x}}_h(T)}\big\| \le \frac{2\|\bb{x}(T)-\widetilde{\bb{x}}_h(T)\|} {\|\bb{x}(T)\|} \le\varepsilon.
\]

The choice of $\delta_s$ and \eqref{eq:mu-s-choice-proof} imply
\begin{equation}\label{eq:mu-s-final-proof}
\mu_{s,\max} = \mathcal{O}\Big(\log^\beta\frac{C_Ag_A}{\varepsilon}\Big).
\end{equation}
For each of the three profiles considered above, $G(0)=1$ and $\|G\|_{L^2(I_s)}=\mathcal{O}(1)$. The uniform-mesh comparison and $\mu_{s,\max}=\pi/\Delta s$ therefore give
\begin{equation}\label{eq:gG-bound-new}
g_G = \mathcal{O}(\Delta s^{-1/2}) = \mathcal{O}(\mu_{s,\max}^{1/2}) = \mathcal{O}\Big(\log^{\beta/2}\frac{C_Ag_A}{\varepsilon}\Big).
\end{equation}

Applying the BCOW construction to \eqref{autodiscretization} gives
\[
	C_{m,k,p}(\bar A h)\bar{\bb X}=\bar{\bb F},\qquad\bar{\bb F}=\ket0\otimes \bb w(0).
\]
Since the inhomogeneous term has been absorbed into the enlarged system \eqref{enlargeTime}, the corresponding inhomogeneous parameter is $\bar\beta=1$.

We next estimate the recovery probability. The recovery process is
\[
	\ket{\bar{\bb X}}\longrightarrow\ket{\bb w(T)}\longrightarrow\ket{\bb u(T)}\longrightarrow\ket{\bb x(T)}.
\]
Projecting $\ket{\bar{\bb X}}$ onto the terminal time component gives a state proportional to $\bb w(T)$, with success probability of order $\|\bb w(T)\|^2/\max_{t\in[0,T]}\|\bb w(t)\|^2$. Since $s_{N_s/2}=T$, postselecting the $s$-register on $\ket{N_s/2}$ gives the grid value $\bb z(T,T)$. By the recovery relation \eqref{retriveu}, one has $\bb z(T,T)=G(0)\bb u(T)$. Hence recovering a state proportional to $\bb u(T)$ succeeds with probability $|G(0)|^2\|\bb u(T)\|^2/\|\bb w(T)\|^2$. Finally, selecting the first component of $\bb u(T)=[\bb x(T);\bb r(T)]$ gives the target state with probability $\|\bb x(T)\|^2/\|\bb u(T)\|^2$. Therefore, the total success probability satisfies
\begin{equation}\label{eq:total-success}
	\P_{\rm succ}=\Omega\Big(\frac{|G(0)|^2\|\bb x(T)\|^2}{\max_{t\in[0,T]}\|\bb w(t)\|^2}\Big).
\end{equation}
By the above definition of $\bar g$, one has $\P_{\rm succ}=\Omega(1/\bar g^2)$, and amplitude amplification increases the success probability to $\Omega(1)$ with an additional factor $\mathcal{O}(\bar g)$.

It remains to estimate $\bar g$. By \eqref{eq:Cbar-bound}, we have
\[
\max_{t\in[0,T]}\|\bb w(t)\|\le C(\bar A)\|\bb w(0)\|\le\e^{1/2}C_A\|\bb w(0)\|.
\]
From the definition of $\bb w(0)$ in \eqref{autodiscretization} and the relation $\bb u(0)=[\bb x(0);\bb r(0)]$, one has
\[
\|\bb w(0)\|=|G(0)|\,g_G\,\|\bb u(0)\|\le |G(0)|\,g_G\,\bigl(\|\bb x(0)\|+\|\bb r(0)\|\bigr).
\]
Thus, by the definition of $g_A$ and \eqref{eq:gG-bound-new},
\begin{equation}\label{eq:barg-bound-new}
\bar g=\mathcal{O}(C_A g_A g_G)=\mathcal{O}\Big(C_A g_A\log^{\beta/2}\frac{C_Ag_A}{\varepsilon}\Big).
\end{equation}

By Theorem~\ref{thm:complexity}, the corresponding query complexity of the BCOW algorithm for the autonomous system is
\begin{equation}\label{eq:bcow-cost}
\mathcal{O}\Big(\bar\alpha T C(\bar A)\bar g\log^{3/2}\Big(\frac{\bar\alpha T C(\bar A)\bar g}{\varepsilon}\Big)\Big).
\end{equation}
Substituting \eqref{eq:Cbar-bound}, \eqref{eq:mu-s-final-proof}, and
\eqref{eq:barg-bound-new} into \eqref{eq:bcow-cost} gives
\[
\mathcal{O}\Big(\alpha_A T C_A^2 g_A\log^{3(\beta+1)/2}\Big(\frac{\alpha_A T C_A^2 g_A}{\varepsilon}\Big)\Big)
\]
queries to $\mathrm{HAM\text{-}T}_{A}$ and $\mathrm{HAM\text{-}T}_{F}$ in total. By Lemma~\ref{lem:block-encoding-Abar}, one query to the block-encoding of $\bar A$ uses $\mathcal{O}(1)$ queries to each of $\mathrm{HAM\text{-}T}_{A}$ and $\mathrm{HAM\text{-}T}_{F}$.

Finally, the BCOW solver uses $\mathcal{O}(\bar g)$ queries to the state-preparation oracle $O_w$, namely,
\[
\mathcal{O}\Big( C_Ag_A \log^{\beta/2}\frac{C_Ag_A}{\varepsilon} \Big).
\]
Since each query to $O_w$ uses $\mathcal{O}(1)$ queries to $O_x$, the same bound holds for $O_x$. This completes the proof.
\end{proof}

\begin{remark}
	The complexity for time-dependent problems presented here is near-optimal on the matrix queries.
\end{remark}


\section{Numerical experiments} \label{sec:numerical}
We apply the autonomization--Schr\"odingerization method in Section~\ref{sec:Sch-auto-complexity} and the autonomization--BCOW method in Section~\ref{subsec:autotibcow} to a time-modulated two-cavity system. For each method, we recover the solution trajectory and compare the resulting intracavity energies with the reference solution.
\subsection{Validation of autonomization--Schr\"odingerization}
\label{subsec:numerical-Schrodingerization}

We consider the time-modulated two-cavity example in~\cite{Mock2020CoupledCavity}
to validate the autonomization--Schr\"odingerization procedure. In the temporal  coupled-mode   description~\cite{Suh2004TCMT}
, $\bb{x}(t)=[a_1(t),a_2(t)]^\top\in\mathbb{C}^2$ denotes the complex cavity amplitudes in a rotating frame. The  resonance frequencies are modulated periodically and out of phase  . The  system on $[0,1]$   is
\begin{equation}
\label{eq:numerical-Schrodingerization-system}
    \frac{\d}{\d t}\bb{x}(t)=A(t)\bb{x}(t)+\bb{b}(t),
    \qquad
    \bb{x}(0)= [1 , 0]^{\top},
\end{equation}
where
\[
    A(t)=
    -\gamma I-\i
    \begin{bmatrix}
        \delta\cos(2\pi t) & \kappa\\
        \kappa & -\delta\cos(2\pi t)
    \end{bmatrix},
    \qquad
    \bb{b}(t)=
    \eta_{\mathrm{in}}\sin^2(\pi t)
    \begin{bmatrix}1\\0\end{bmatrix}.
\]
Here, $\gamma$ is the cavity loss rate, $\kappa$ is the coupling strength, $\delta$ is the modulation depth, and $\eta_{\mathrm{in}}$ is the amplitude of the smooth input pulse applied to the first cavity. We take $\gamma=0.3$, $\kappa=2$, $\delta=0.8$, and $\eta_{\mathrm{in}}=0.2$. The initial condition excites the first cavity while the second cavity is empty.   The coefficient matrix $A(t)$  satisfies the dissipativity condition.

We  apply the autonomization and Schr\"odingerization procedures in Sections~\ref{sec:auto} and~\ref{sec:Sch-auto-complexity}. We take $T=R_s=1$, use the mollifier $G_{\mathrm{m}}$ in \eqref{Gm}, and discretize $I_s=[-1,3]$ and $I_p=[-3,7]$ with   $N_s=N_p=64$  Fourier grid points .  In the $p$-direction, we use the smooth initialization in~\cite{JLMPY2025schr}. We   use the built-in Hamiltonian-simulation   function  in UnitaryLab v1.1.4 (\url{http://unitarylab.com})   to implement the evolution in \eqref{eq:Sch-Hamiltonian-system}. For comparison, we  compute a reference solution of system \eqref{eq:numerical-Schrodingerization-system}   directly.  Fig.~\ref{fig:Schrodingerization-validation} compares the   recovered intracavity energies  with the reference solution . The agreement validates the   autonomization--Schr\"odingerization procedure for this example.
\begin{figure}[H]
    \centering
    \includegraphics[width=0.9\textwidth, trim = 0 0 0 23, clip]{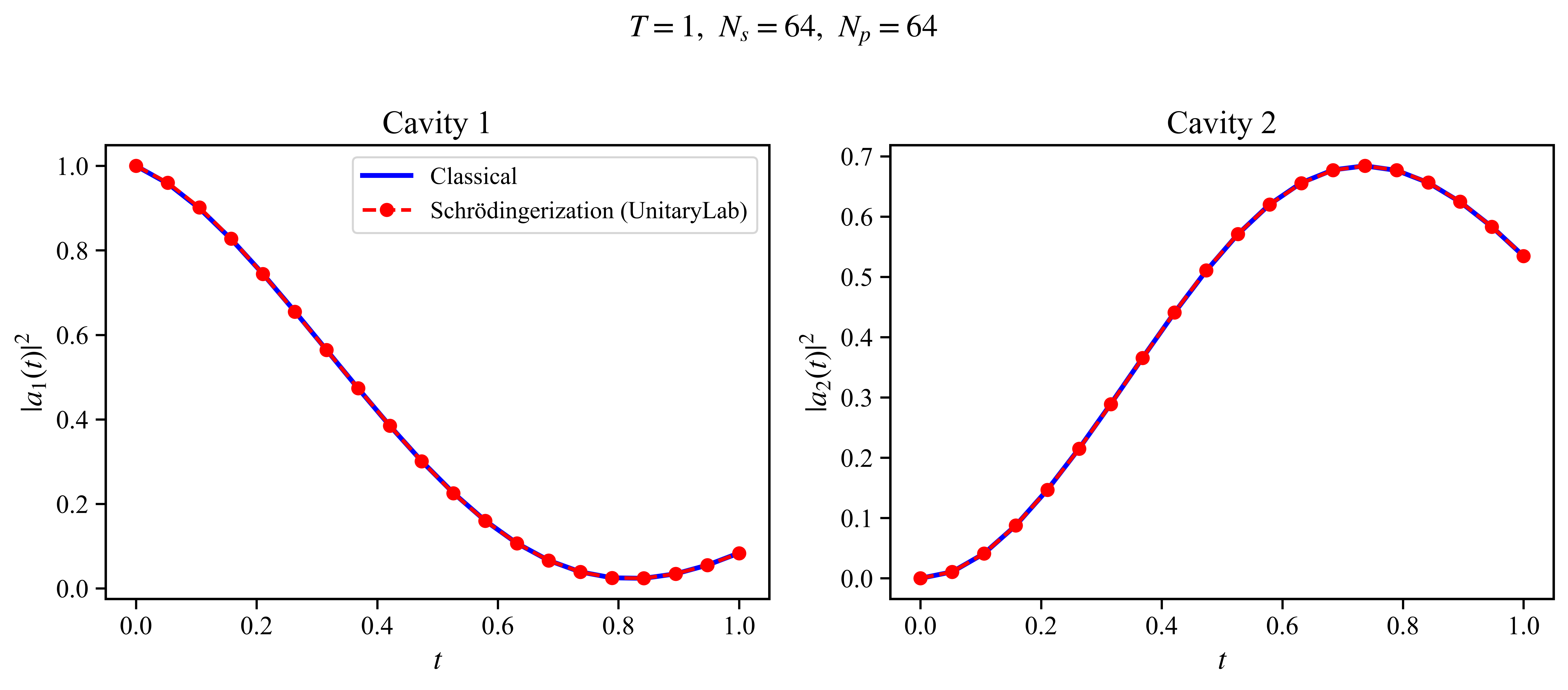}
    \caption{Numerical and reference intracavity energies for
    autonomization--Schr\"odingerization.}
    \label{fig:Schrodingerization-validation}
\end{figure}

\subsection{Validation of autonomization--BCOW}
\label{subsec:numerical-bcow}

We next apply the autonomization--BCOW approach to the same system \eqref{eq:numerical-Schrodingerization-system}. The autonomization procedure in Section~\ref{sec:auto} first gives the autonomous system \eqref{autodiscretization}. We then solve this system using the Taylor-expansion-based BCOW construction in Section~\ref{subsec:autotibcow}, following the linear-system formulation of~\cite{BerryChilds2017ODE}.

In the implementation, we take $T=R_s=1$ and use the mollifier $G_{\mathrm{m}}$ in \eqref{Gm}. The interval $I_s=[-1,3]$ is discretized with $N_s=32$ Fourier grid points. For the BCOW system, we choose $m=19$ time steps, Taylor order $k=8$, and $p=2$ padding blocks, so that the step size is $h=T/m=1/19$. We solve system \eqref{eq:autonomization-bcow-system}, with the connection rows scaled by $1/\sqrt{k+1}$ as in \eqref{eq:normalized_connection}.

Fig.~\ref{fig:bcow-validation} compares the recovered intracavity energies with the  reference solution. The  agreement validates the autonomization--BCOW  procedure for the present test problem.

\begin{figure}[H]
    \centering
      \includegraphics[width=0.9\textwidth, trim = 0 0 0 23, clip]{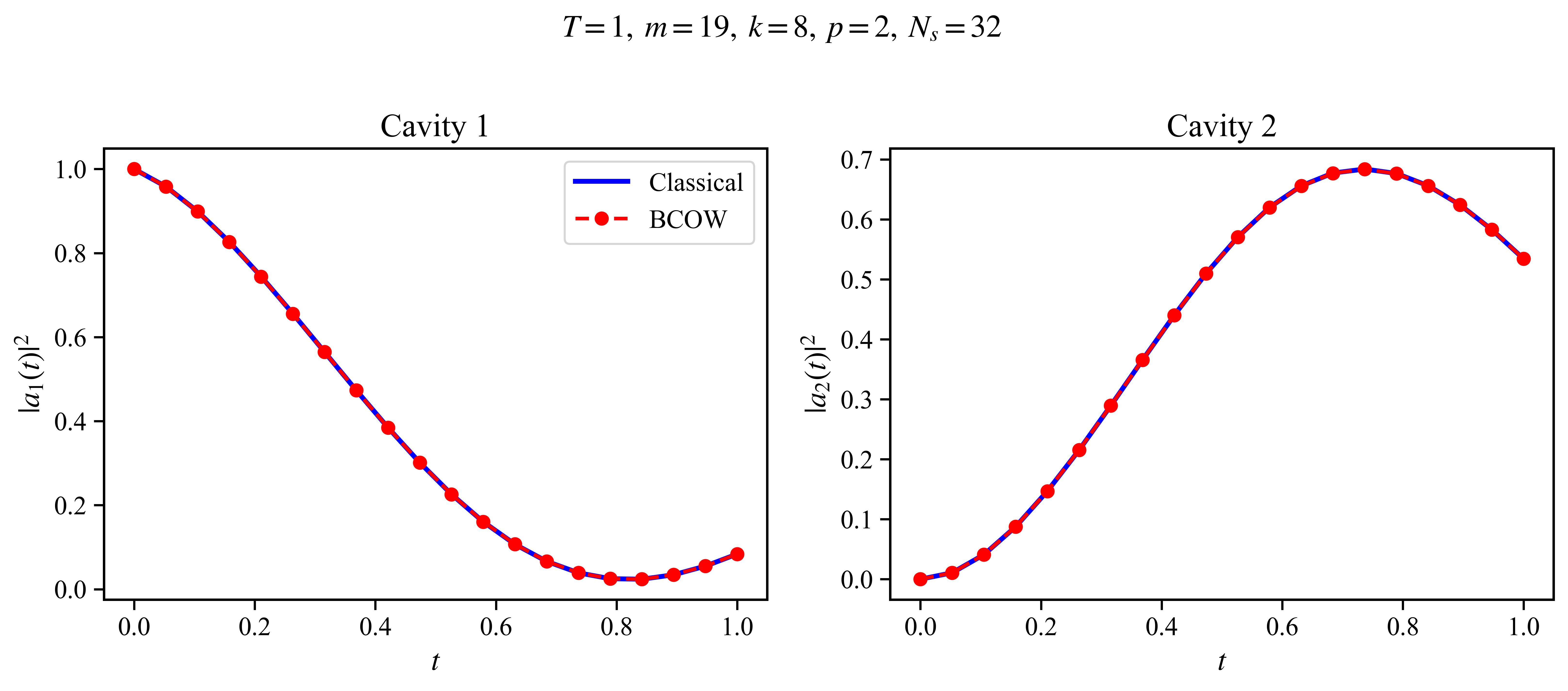}
     \caption{Numerical and reference intracavity energies for
    autonomization--BCOW.}
    \label{fig:bcow-validation}
\end{figure}

\section{Conclusions and outlook} \label{sec:conclusion}

In this paper, we developed an autonomization framework for quantum algorithms solving time-dependent inhomogeneous linear differential equations. By introducing an auxiliary clock variable, the original non-autonomous system is lifted to an autonomous transport-type equation on an enlarged space. After a Fourier spectral discretization in the clock variable, the time dependence of $A(t)$ and $\bb{b}(t)$ is encoded into an explicit time-independent linear system, together with an initial state and a recovery map for the target solution. This construction turns the original time-dependent problem into an input model that can be combined with quantum ODE solvers for time-independent systems.

We applied this framework to two representative solvers. For the Taylor-expansion-based quantum algorithm, we derived a block-encoding complexity bound controlled by the matrix-exponential growth factor, thereby removing the diagonalizability assumption from the original analysis. For Schr\"odingerization, we introduced a scalar shift that restores dissipativity of the autonomization system while preserving the normalized output state. We also constructed separate block-encodings of the Hermitian and anti-Hermitian parts of the shifted generator. This avoids multiplying the clock-differentiation normalization by the largest Fourier mode in the Schr\"odingerization variable. For the analytic profile $G_{\mathrm{f}}$, the resulting matrix-query complexity has precision dependence $\log^{5/4}(1/\varepsilon)$, which improves upon the existing precision dependence in the literature.

Several directions remain for future work. One direction is to explore the application of the autonomization framework to concrete time-dependent PDEs arising in scientific computing. Another direction is to optimize the time complexity, in particular the query complexity with respect to the right-hand side vector $\bb{b}(t)$, as the current homogenization procedure introduces additional overhead that may be reducible with more efficient strategies. Finally, the Fourier spectral analysis in this work assumes analytic extensions of $A(s)$ and $\bb{b}(s)$ in the clock variable, which ensure the required regularity of $B(s)$. For coefficients with finite regularity, one may first use smooth approximations $A_\delta(s)$ and $\bb{b}_\delta(s)$ on $I_s$, for example by a mollifier technique, and then balance the regularization error with the Fourier discretization error. We leave this more general analysis to future work.

\section*{CRediT authorship contribution statement}

All authors collaborated closely and made equal contributions to the conceptualization, methodology, and writing of this research.

\section*{Acknowledgements}

XJ Dong was supported by the National Natural Science Foundation of China (No: 12071404), Young Elite Scientist Sponsorship Program by CAST (No: 2020QNRC001), the Science and Technology Innovation Program of Hunan Province (No: 2024RC3158).
NL acknowledges funding from the Science and Technology Commission of Shanghai Municipality (STCSM) grant No.~24LZ1401200 (No.~21JC1402900), NSFC grants No.~12471411 and No.~12341104, the Shanghai Jiao Tong University 2030 Initiative, the Shanghai Pilot Program for Basic Research, and the Fundamental Research Funds for the Central Universities.
YY was supported by NSFC grant (No.\ 12301561), the Key Project of Scientific Research Project of Hunan Provincial Department of Education (No.\ 24A0100), the Science and Technology Innovation Program of Hunan Province (No.\ 2025RC3150) and the general program of Hunan Provincial Natural Science Foundation (No.\ 2026JJ50003).
This research was supported in part by the 111 Project (No.\ D23017), and Program for Science and Technology Innovative Research Team in Higher Educational Institutions of Hunan Province of China.

	\bibliographystyle{unsrt} 
	\bibliography{Refs}
	
\appendix

\section{New complexity bound for the BCOW algorithm}\label{sec:BCOW-appendix}
\subsection{Solution error}

Let $\bb{x}_{i,h} = \bb{x}(ih)$ and $\bb{x}_{i,0}$ denote the exact and numerical solutions at $t_i = ih$, respectively, which satisfy
\begin{align}
	& \bb{x}_{i+1,h} = T(Ah) \bb{x}_{i,h} + S(Ah) \bb{b} , \label{xih} \\
	& \bb{x}_{i+1,0} = T_k(Ah) \bb{x}_{i,0} + S_k(Ah) \bb{b}. \label{xi0}
\end{align}
Let $t = T$. By recursively applying \eqref{xih}, we can derive
\begin{align*}
	\bb{x}(t)
	& = \bb{x}_{m,h} = T(Ah) \bb{x}_{m-1,h} + S(Ah) \bb{b} \\
	& = T(Ah)\Big( T(Ah) \bb{x}_{m-2,h} + S(Ah) \bb{b} \Big)  + S(Ah) \bb{b} \\
	& = T^2(Ah) \bb{x}_{m-2,h} + \sum_{j=0}^1 T^j(Ah) S(Ah) \bb{b}  \\
	& = \cdots = T^m (Ah) \bb{x}_{\text{in}} + \sum_{j=0}^{m-1}T^j(Ah) S(Ah)\bb{b} =: \bb{x}^0(t) + \bb{x}^b(t).
\end{align*}
A direct manipulation gives
\[\bb{x}^0(t) = T^m (Ah) \bb{x}_{\text{in}} = \e^{At} \bb{x}_{\text{in}} = T(At) \bb{x}_{\text{in}},\]
and
\begin{align*}
	\bb{x}^b(t)
	& = \sum_{j=0}^{m-1}T^j(Ah) S(Ah)\bb{b}
	= \sum_{j=0}^{m-1} \e^{A jh} (\e^{A h} - I) (Ah)^{-1}h  \bb{b} \\
	& = \sum_{j=0}^{m-1} \e^{A jh} \int_0^h \e^{A\tau} \d \tau \bb{b}
	= \sum_{j=0}^{m-1}  \int_0^h \e^{A(\tau+jh)} \d \tau \bb{b} \\
	& \xlongequal{s = \tau+jh} \sum_{j=0}^{m-1}  \int_{jh}^{(j+1)h} \e^{A s } \d s \bb{b}
	= \int_0^t \e^{A s } \d s \bb{b} = S(At) \bb{b},
\end{align*}
implying that $\bb{x}^0(t)$ and $\bb{x}^b(t)$ are the exact solutions with homogeneous right-hand side (i.e., $\bb{b} = \bb{0}$) and zero initial data (i.e., $\bb{x}_{\text{in}} = \bb{0}$), respectively.
Accordingly, by recursively applying \eqref{xi0}, one can find that the approximate solution, denoted by $\tilde{\bb{x}}(t)$, can be written as
\begin{align*}
	\tilde{\bb{x}}(t) = T_k^m (Ah) \bb{x}_{\text{in}} + \sum_{j=0}^{m-1}T_k^j(Ah) S_k(Ah)\bb{b} =: \tilde{\bb{x}}^0(t) + \tilde{\bb{x}}^b(t),
\end{align*}
with
\[\tilde{\bb{x}}^0(t) = T_k^m (Ah) \bb{x}_{\text{in}} =:\tilde{T}(A t)\bb{x}_{\text{in}} \]
and
\[\tilde{\bb{x}}^b(t) = \sum_{j=0}^{m-1}T_k^j(Ah) S_k(Ah)\bb{b} =:\tilde{S}(At)\bb{b} \]
being the approximate solutions of $\bb{x}^0(t)$ and $\bb{x}^b(t)$, respectively.

\begin{theorem} \label{thm:solutionerr}
	Let $\ket{\bb{x}(T)}$ and $\ket{\tilde{\bb{x}}(T)}$ be the exact and approximate solution states, respectively. Suppose that the truncation number $k$ satisfies
	\[(k+1)! \ge \frac{2 m\e^3}{\varepsilon} \Big( 1 + \frac{T \e^2\|\bb{b}\|}{\|\bb{x}(T)\|}\Big), \]
	and the step size $h$ satisfies $\|Ah\| \le 1$. Then there holds
	\[\|\ket{\bb{x}(T)} - \ket{\tilde{\bb{x}}(T)}\| \le \varepsilon.\]
\end{theorem}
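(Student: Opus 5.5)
The plan is to split the error into the homogeneous part and the source part, $\bb{x}(T)-\tilde{\bb{x}}(T)=(\bb{x}^0-\tilde{\bb{x}}^0)+(\bb{x}^b-\tilde{\bb{x}}^b)$, bound each in absolute terms, and then pass to normalized states with the elementary inequality
\[
\|\ket{\bb{x}(T)}-\ket{\tilde{\bb{x}}(T)}\|\le \frac{2\|\bb{x}(T)-\tilde{\bb{x}}(T)\|}{\|\bb{x}(T)\|}.
\]
So it suffices to prove $\|\bb{x}(T)-\tilde{\bb{x}}(T)\|\le \frac{\varepsilon}{2}\|\bb{x}(T)\|$.

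\textbf{Local truncation estimates.} Since $\|Ah\|\le 1$, the Taylor remainders give
\[
\|T(Ah)-T_k(Ah)\|\le \sum_{j>k}\frac{1}{j!}\le \frac{\e}{(k+1)!}=:\delta,
\qquad
\|S(Ah)-S_k(Ah)\|\le h\,\delta .
\]
I also want a bound $\|T_k(Ah)^j\|\le \e\,C$, where $C$ is the growth constant of the exact propagator. To get it, I will argue by induction on $j$ using the telescoping identity
\[
T_k^j-T^j=\sum_{i=0}^{j-1}T_k^{\,i}(T_k-T)T^{\,j-1-i}.
\]
If $\|T_k^i\|\le \e C$ for all $i<j$, then $\|T_k^j\|\le C(1+\e\,\delta)^j$. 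Because $m\delta\le 1/\e$ under the stated hypothesis on $(k+1)!$, this gives $(1+\e\delta)^m\le \e$, which closes the induction.

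\textbf{Homogeneous part.} I apply the same telescoping identity with $j=m$ to the vector $\bb{x}_{\rm in}$:
\[
\bb{x}^0-\tilde{\bb{x}}^0=\sum_{i=0}^{m-1}T_k^{\,i}(T-T_k)\,T^{\,m-1-i}\bb{x}_{\rm in}.
\]
Each factor $T^{m-1-i}\bb{x}_{\rm in}$ is an exact homogeneous solution at an intermediate time, and I will bound its norm by the solution norm. Combined with the previous step this gives
\[
\|\bb{x}^0-\tilde{\bb{x}}^0\|\le m\,\e\,\delta\cdot(\text{solution scale})\approx \frac{m\,\e^2}{(k+1)!}\,\|\bb{x}\|.
\]

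\textbf{Source part.} I write
\[
\bb{x}^b-\tilde{\bb{x}}^b=\sum_{j=0}^{m-1}\Bigl[(T^j-T_k^j)S\,\bb{b}+T_k^j(S-S_k)\bb{b}\Bigr].
\]
The first term is bounded by the homogeneous estimate with $\|S\bb{b}\|\le \e h\|\bb{b}\|$. The second term is bounded by $\e\cdot h\delta\|\bb{b}\|$ per step. Summing over the $m$ steps, and using $mh=T$, gives
\[
\|\bb{x}^b-\tilde{\bb{x}}^b\|\lesssim \frac{m\,\e^3}{(k+1)!}\,T\,\e^2\|\bb{b}\|\cdot\frac{1}{m}\cdot m .
\]
Adding the two parts, the total is at most $\frac{m\e^3}{(k+1)!}\bigl(\|\bb{x}(T)\|+T\e^2\|\bb{b}\|\bigr)$, and the hypothesis on $(k+1)!$ then yields the factor $\varepsilon/2$.

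\textbf{Main obstacle.} The hard step is controlling $\|T^{m-1-i}\bb{x}_{\rm in}\|$ and the powers $\|T_k^j\|$ relative to $\|\bb{x}(T)\|$ rather than to $\max_t\|\bb{x}(t)\|$. The hypothesis contains no growth constant, so I will first try to absorb these factors into the $\e^3$ constant using the bound $\|\e^{Ah}\|\le \e$ per step together with $m\delta\ll 1$. If that fails, I will make the dependence on the growth factor $C(A)$ and on $g$ explicit, as is done later in the complexity theorem.
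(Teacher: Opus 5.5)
There is a genuine gap, and it is the step you flag as the main obstacle; neither of your proposed fixes closes it. Once you bound $\|T_k(Ah)^i\|$ and $\|\e^{A(m-1-i)h}\bb{x}_{\rm in}\|$ separately, $\|Ah\|\le 1$ alone cannot give an estimate proportional to $\|\bb{x}(T)\|$. Take $A=-h^{-1}I$ and $\bb{b}=\bb{0}$. Then $C(A)=1$ but $\|\bb{x}_{\rm in}\|=\e^{m}\|\bb{x}(T)\|$. So your estimate of the $i=m-1$ term is exponentially larger than $\|\bb{x}(T)\|$, while the actual vector $T_k(Ah)^{m-1}\bigl(\e^{Ah}-T_k(Ah)\bigr)\bb{x}_{\rm in}$ is only of order $\|\bb{x}(T)\|/(k+1)!$. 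The per-step bound $\|\e^{Ah}\|\le\e$ compounds to $\e^{m}$, so nothing can be absorbed into $\e^3$. Introducing $C(A)$ and $g$ would prove a different, weaker statement (here $g=\e^m$), since the hypothesis on $(k+1)!$ contains neither. Your auxiliary induction also does not close as written. From $\|T_k^i\|\le\e C$ and $\|\e^{A(j-1-i)h}\|\le C$, the telescoping identity gives only $\|T_k^j\|\le C+j\e C^2\delta$. Closing it therefore needs $mC\delta\lesssim 1$, a $C$-dependent condition, not $(1+\e\delta)^m\le\e$.

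The missing idea is that every operator here is a power series in $A$, so they all commute. This lets you measure the error backward from the final state rather than forward from $\bb{x}_{\rm in}$. The paper uses $\e^{AT}=S(AT)A+I$ and $T_k(Ah)^m=\tilde S(AT)A+I$, the latter by telescoping $\sum_j T_k^j(T_k-I)$. It then substitutes $\bb{x}_{\rm in}=\e^{-AT}\bigl(\bb{x}(T)-S(AT)\bb{b}\bigr)$ to obtain
\[
\bb{x}(T)-\tilde{\bb{x}}(T)=\bigl(I-(T_k(Ah)\e^{-Ah})^m\bigr)\bb{x}(T)+\bigl(S(AT)-\tilde S(AT)\bigr)\e^{-AT}\bb{b}=:I_1+I_2 .
\]
The one-step factor satisfies $\|T_k(Ah)\e^{-Ah}-I\|\le\|\e^{-Ah}\|\,\|\e^{Ah}-T_k(Ah)\|\le \e^2/(k+1)!$, using only $\|Ah\|\le 1$. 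Compounding $m$ commuting factors gives $\|I_1\|\le\bigl((1+\tfrac{\e^2}{(k+1)!})^m-1\bigr)\|\bb{x}(T)\|\le\frac{m\e^3}{(k+1)!}\|\bb{x}(T)\|$ once $m\e^2\le (k+1)!$, which the hypothesis ensures. This is exactly where the constant $\e^3$ and the normalization by $\|\bb{x}(T)\|$ come from. The substitution also merges the cross term $S(AT)\bb{b}$ with the source error. Bounding the final homogeneous state by $\|\bb{x}(T)\|+\|S(AT)\bb{b}\|$ instead would bring a growth constant back in. The term $I_2$, read as a power series in $A$, equals $\bigl(I-(T_k(Ah)\e^{-Ah})^m\bigr)A^{-1}\bb{b}$, and it too is bounded without any growth constant; the paper quotes $\|I_2\|\le Tm\e^5\|\bb{b}\|/(k+1)!$ from Krovi. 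Adding the two bounds under the hypothesis on $(k+1)!$ gives $\varepsilon\|\bb{x}(T)\|/2$.
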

\begin{proof}
	The proof is similar to that given in~\cite[Theorem 3]{KroviODE}. We have
	\[
	\| \ket{\bb{x}(t)} - \ket{\tilde{\bb{x}}(t)} \| \le \frac{2 \| \bb{x}(t) - \tilde{\bb{x}}(t) \|}{\| \bb{x}(t) \|} \le \varepsilon, \qquad t = T,
	\]
	it suffices to prove
	\[\| \bb{x}(t) - \tilde{\bb{x}}(t) \| \le \frac{\varepsilon}{2}\| \bb{x}(t) \|.\]

	Since $A$ commutes with any power series in $A$, the identities
	\[ T(Ah) = S(Ah)A + I, \qquad T_k(Ah) = S_k(Ah)A + I, \]
	give
	\begin{align*}
		\bb{x}(t)
		= \bb{x}^0(t) + \bb{x}^b(t)
		= T (At) \bb{x}_{\text{in}} + S(At)\bb{b}
		= (S(At)A + I)\bb{x}_{\text{in}} + S(At)\bb{b},
	\end{align*}
	\[\tilde{\bb{x}}(t)
	= \tilde{\bb{x}}^0(t) + \tilde{\bb{x}}^b(t)
	= \tilde{T} (At) \bb{x}_{\text{in}} + \tilde{S} (At)\bb{b}
	= (\tilde{S}(At)A + I) \bb{x}_{\text{in}} + \tilde{S}(At)\bb{b}. \]
	For brevity, we omit ``$(At)$'' in the following. Using again the commutativity of $A$ with its power series, we obtain
	\begin{align*}
		\bb{x}(t) - \tilde{\bb{x}}(t)
		& = (T-\tilde{T})  \bb{x}_{\text{in}} +  (S-\tilde{S}) \bb{b} \\
		& = (T-\tilde{T}) T^{-1} ( \bb{x}(t) - S \bb{b}) +  (S-\tilde{S}) \bb{b} \\
		& = (T-\tilde{T}) T^{-1} \bb{x}(t) + (SA-\tilde{S}A) T^{-1} ( - S\bb{b})) +  (S-\tilde{S}) \bb{b}\\
		& = (T-\tilde{T}) T^{-1} \bb{x}(t) + (S-\tilde{S}) T^{-1} ( - AS + T  ) \bb{b}\\
		& = (T-\tilde{T}) T^{-1} \bb{x}(t) + (S-\tilde{S}) T^{-1} \bb{b} =: I_1 + I_2,
	\end{align*}
	where
	\[I_1 = (T-\tilde{T}) T^{-1}\bb{x}(t) = (T^m(Ah) - T_k^m(Ah) ) (T^m(Ah))^{-1}  \bb{x}(t),\]
	\[I_2 = (S-\tilde{S}) T^{-1}  \bb{b} = \sum_{j=0}^{m-1}( T^j(Ah) S(Ah) - T_k^j(Ah) S_k(Ah) ) \bb{b}. \]

	The estimates of $I_1$ and $I_2$ are shown in~\cite{KroviODE} with the results given by
	\begin{equation}\label{I1}
		\|I_1\| \le \frac{m\e^3}{(k+1)!} \|\bb{x}(t)\|, \qquad \|I_2\| \le \frac{ t m\e^5}{(k+1)!} \|\bb{b}\|,
	\end{equation}
	where $k$ satisfies $m\e^2/(k+1)!\le 1$. Therefore,
	\begin{equation}\label{eq:BCOW-vector-error}
	\|\bb{x}(t)-\tilde{\bb{x}}(t)\|
	\le
	\frac{m\e^3}{(k+1)!}
	\Big(
	1+\frac{t\e^2\|\bb{b}\|}{\|\bb{x}(t)\|}
	\Big)
	\|\bb{x}(t)\|.
	\end{equation}
	For $t=T$, the assumed lower bound on $(k+1)!$ makes the
	right-hand side of \eqref{eq:BCOW-vector-error} at most
	$\varepsilon\|\bb{x}(T)\|/2$. This completes the proof.
\end{proof}

With the estimate of $I_1$ in \eqref{I1}, we are ready to derive the bound for $\|T^\ell_k(Ah)\|$ for any $\ell \le m$.
\begin{lemma}\cite[Lemma 13]{KroviODE}\label{Lem:boundary of T}
	Under the condition of Theorem \ref{thm:solutionerr}, we have that the truncated Taylor series $T_{k}( Ah)$ of $\e^{Ah}$ satisfies
	\[\|T^\ell_k(Ah)\|\leq C(A)(1+\varepsilon),\]
	for any $\ell \leq m$.
\end{lemma}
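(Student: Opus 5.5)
We will prove the bound by comparing $T_k^\ell(Ah)$ with the exact propagator $\e^{A\ell h}=T^\ell(Ah)$ and controlling the difference with the same telescoping argument that gives the estimate of $I_1$ in \eqref{I1}. Fix $\ell\le m$ and write
\[
T_k^\ell(Ah)=T^\ell(Ah)+\bigl(T_k^\ell(Ah)-T^\ell(Ah)\bigr).
\]
Since $\ell h\le mh=T$, the first term satisfies $\|T^\ell(Ah)\|=\|\e^{A\ell h}\|\le C(A)$ by the definition \eqref{CA}. It therefore suffices to show $\|T_k^\ell(Ah)-T^\ell(Ah)\|\le \varepsilon\, C(A)$.

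For the difference we use the telescoping identity
\[
T^\ell-T_k^\ell=\sum_{j=0}^{\ell-1}T_k^{j}\,(T-T_k)\,T^{\ell-1-j},
\]
where all operators are evaluated at $Ah$ and commute. The truncation error satisfies $\|T(Ah)-T_k(Ah)\|\le \sum_{j>k}\|Ah\|^j/j!\le \e/(k+1)!$, because $\|Ah\|\le1$. The factor $T^{\ell-1-j}$ is a genuine exponential, so its norm is at most $C(A)$. The factor $T_k^j$ is the difficulty: it is exactly the quantity being bounded, so the argument is circular.

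We resolve this by strong induction on $\ell$, with hypothesis $\|T_k^j(Ah)\|\le C(A)(1+\varepsilon)$ for all $j<\ell$. This hypothesis gives only $\|T_k^j\|\le C(A)(1+\varepsilon)$, and multiplying by $\|T^{\ell-1-j}\|\le C(A)$ would produce a factor $C(A)^2$, which is too lossy. The estimate must instead be routed so that only one factor $C(A)$ appears. Following Krovi's treatment of $I_1$, write $T_k=T(I-E)$ with $E:=I-T^{-1}T_k$ and $\|E\|\le \e^2/(k+1)!$ (again using $\|Ah\|\le 1$). Then
\[
T_k^\ell=T^\ell(I-E)^\ell,
\]
because everything commutes, and so
\[
\|T_k^\ell\|\le C(A)\,\|(I-E)^\ell\|\le C(A)\,(1+\|E\|)^\ell\le C(A)\exp\!\Bigl(\frac{m\e^2}{(k+1)!}\Bigr).
\]
This factorization avoids the induction altogether, since the only exponential ever bounded is $T^\ell$.

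It remains to check that the exponent is small. The hypothesis of Theorem~\ref{thm:solutionerr} gives
\[
\frac{m\e^2}{(k+1)!}\le \frac{\varepsilon}{2\e}\le \frac{\varepsilon}{2}.
\]
Using $\exp(x)\le 1+2x$ for $x\le 1$, we obtain
\[
\|T_k^\ell(Ah)\|\le C(A)(1+\varepsilon),
\]
which is the claim. Note that $(k+1)!\ge 2m\e^3/\varepsilon$ already forces $x\le 1$ whenever $\varepsilon\le 1$; we assume $\varepsilon<1$, as is implicit in the statement.

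The main obstacle is the bound $\|E\|=\|I-\e^{-Ah}T_k(Ah)\|\le \e^2/(k+1)!$. Here $\e^{-Ah}$ has norm at most $\e$, and $\e^{-Ah}(\e^{Ah}-T_k)$ is a tail of norm at most $\e/(k+1)!$, so the product gives $\e^2/(k+1)!$. This step uses $\|Ah\|\le1$ and never uses dissipativity, which is why only $C(A)$, and not $C(A)^2$, survives.
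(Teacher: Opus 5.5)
Your proof is correct and is essentially the route the paper intends. The paper derives the lemma from the $I_1$ estimate: it writes $T_k^\ell(Ah)=T^\ell(Ah)\bigl[I-(T^\ell-T_k^\ell)T^{-\ell}\bigr]$, and the bracket is within $\ell\e^3/(k+1)!\le\varepsilon/2$ of the identity. Your factor $(I-E)^\ell=T^{-\ell}T_k^\ell$ is exactly that bracket; you bound it directly by $\e^{\ell\|E\|}$ instead of quoting the $I_1$ estimate, which is itself proved from the same bound $\|E\|\le\e^2/(k+1)!$. Two small fixes: delete the abandoned telescoping and induction detour, and state $\exp(x)\le 1+2x$ for $0\le x\le 1$ only.
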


\subsection{Condition number}

In deriving an upper bound for the condition number of the matrix $ C_{m,k,p}(Ah) $, the authors of~\cite{BerryChilds2017ODE} assume the matrix $ A $ to be diagonalizable, specifically as $ A = VDV^{-1} $. Here, the condition number $ \kappa_V = \|V\| \|V^{-1}\| $ serves as a parameter in their analysis. However, as noted in the example provided in~\cite{KroviODE}, this parameter may significantly overestimate the condition number of $ C_{m,k,p}(Ah) $. On the other hand, the upper bound for the condition number of $ C_{m,k,p}(Ah) $ is nearly comparable to that of the modified linear system discussed in~\cite{KroviODE}, where a more suitable parameter $ C(A) $ is introduced as given in Lemma \ref{Lem:boundary of T}. Consequently, it is expected to characterize the upper bound of the BCOW algorithm by using this new parameter.

For the upper bound of $\|C_{m,k,p}(Ah)^{-1}\|$, we define
\[T_{b,k}(z) =\sum_{j=b}^k \frac{b!z^{j-b}}{j!},\]
where $b$ is an integer with $0\le b \le k$.  With the step size $h$ satisfying $\| Ah\| \le 1$, it is simple to find that
\begin{equation}\label{Tbk}
	\|T_{b,k}(Ah)\| \le \sum_{j=0}^{k-b} \frac{1}{j!} \le e.
\end{equation}

\begin{theorem}\label{thm:scaled_kc}
Suppose that the truncation number $k\ge 5$. Let
$C_{m,k,p}(Ah)$ be defined in \eqref{eq:BCOWmatrix}. Under the
condition of Theorem~\ref{thm:solutionerr}, there holds
\[
\kappa_C
\le
32\sqrt{k}(m+p)C(A)(1+\varepsilon),
\]
where $\kappa_C$ is the condition number of $C_{m,k,p}(Ah)$, and
$C(A)$ is defined in \eqref{CA}.
\end{theorem}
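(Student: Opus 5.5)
The plan is to bound $\|C_{m,k,p}(Ah)\|$ and $\|C_{m,k,p}(Ah)^{-1}\|$ separately and multiply the two bounds.

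\textbf{Bounding $\|C\|$.} Split $C_{m,k,p}(Ah)$ into four block-structured pieces:
\begin{itemize}
\item the diagonal scaling $S$, with $\|S\|\le 1$;
\item the Taylor subdiagonal blocks $\frac{Ah}{j}$, with norm at most $\|Ah\|\le1$, since they sit on a single block off-diagonal;
\item the connection rows $\frac{1}{\sqrt{k+1}}\sum_j\ket{(i+1)(k+1)}\bra{i(k+1)+j}\otimes I$;
\item the padding identities.
\end{itemize}
Each connection row is a row vector of $k+1$ entries of size $1/\sqrt{k+1}$, so its norm is $1$. Distinct $i$ involve disjoint columns and distinct rows, so the whole connection block has norm $1$. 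The padding block is a shift, with norm $\le1$. Hence $\|C\|\le 4$, or a slightly sharper constant.

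\textbf{Bounding $\|C^{-1}\|$.} Solve $C\bb{X}=\bb{Y}$ explicitly by forward substitution, following the BCOW/Krovi argument. Within step $i$, the Taylor recursion gives
\[
\bb{x}_{i,j}=\frac{(Ah)^{j}}{j!}\bb{x}_{i,0}+\sum_{b=1}^{j}\frac{b!(Ah)^{j-b}}{j!}\bb{y}_{i,b},
\]
so that
\[
\sum_{j=0}^k\bb{x}_{i,j}=T_k(Ah)\bb{x}_{i,0}+\sum_{b=1}^k\frac{1}{b!}T_{b,k}(Ah)\bb{y}_{i,b}\cdot b!/b!.
\]
Up to the $b!$ factors this is $T_{b,k}$ acting on $\bb{y}_{i,b}$, which is bounded by $e$ via \eqref{Tbk}. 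The connection row gives
\[
\bb{x}_{i+1,0}=\sum_{j}\bb{x}_{i,j}+\sqrt{k+1}\,\bb{y}_{(i+1)(k+1)}.
\]
The first row similarly gives $\bb{x}_{0,0}=\sqrt{k+1}\,\bb{y}_0$. Iterating, $\bb{x}_{i,0}$ is a sum over earlier steps $i'<i$ of $T_k^{i-i'}(Ah)$ applied to contributions of size $\lesssim e\sum_b\|\bb{y}_{i',b}\|+\sqrt{k+1}\|\bb{y}_{i'(k+1)}\|$. Lemma~\ref{Lem:boundary of T} bounds each propagator power by $C(A)(1+\varepsilon)$. The padded entries $\bb{x}_{m,j}$ simply copy $\bb{x}_{m,0}$ plus partial sums of the $\bb{y}$'s.

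Then Cauchy–Schwarz converts the $\ell^1$ sums over the $\le (m+1)(k+1)$ indices into $\ell^2$ norms. This gives $\|\bb{x}_{i,0}\|\lesssim \sqrt{k}\,\sqrt{m}\,C(A)(1+\varepsilon)\|\bb{Y}\|$: the $\sqrt{k}$ comes both from the $\sqrt{k+1}$ scaling and from Cauchy–Schwarz over $b$, and the $\sqrt m$ from summing over $i'$. The other components $\bb{x}_{i,j}$ are bounded by $e\|\bb{x}_{i,0}\|$ plus local terms, using \eqref{Tbk} again. Summing squares over all $(m+p)$ blocks, each contributing at most $k+1$ subcomponents, and normalizing gives
\[
\|C^{-1}\|\le c\sqrt{k}(m+p)C(A)(1+\varepsilon).
\]
The assumption $k\ge5$ lets me absorb factors like $(k+1)/k$ and $e$-type constants into an explicit numeric constant, yielding the stated $32$ together with $\|C\|$.

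\textbf{Main obstacle.} The hard part is the bookkeeping in the inverse bound: keeping the powers of $k$ at $\sqrt{k}$ rather than $k$. This needs care in how the $\sqrt{k+1}$ amplification of the connection-row right-hand sides combines with the $1/\sqrt{k+1}$ normalization of the sum. I would first try to exploit the fact that the $\sqrt{k+1}$ factor multiplies only one component $\bb{y}_{(i+1)(k+1)}$ per step, while the other $k$ components enter with the bounded weights $T_{b,k}$. A single Cauchy–Schwarz over the step's $k+1$ right-hand-side components then yields a uniform factor $\sqrt{k+1}$ per step. After that, a crude $\ell^1\to\ell^2$ bound across steps gives the $(m+p)$ factor, and I would track the numeric constants only at the end.
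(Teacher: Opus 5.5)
Your norm bound is the paper's, and your inverse bound uses the same ingredients: forward substitution, $\|T_k^\ell(Ah)\|\le C(A)(1+\varepsilon)$, $\|T_{b,k}(Ah)\|\le\e$, and Cauchy--Schwarz against the $\sqrt{k+1}$-weighted rows. The paper organizes the inverse by superposition instead: it writes $C_{m,k,p}(Ah)=S\widehat C_{m,k,p}(Ah)$, solves one source block at a time, and applies Cauchy--Schwarz with weights $q_g$. Your estimate $\|\bb x_{i,0}\|\lesssim\sqrt{mk}\,C(A)(1+\varepsilon)\|\bb Y\|$ is correct.

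The gap is in the final summation. As you describe it, you bound each $\bb x_{i,j}$ by $\e\|\bb x_{i,0}\|$ plus local terms and sum squares over $(m+p)$ blocks of $k+1$ components. That gives $\|\bb X\|^2\lesssim (m+p)(k+1)\,mk\,C(A)^2(1+\varepsilon)^2\|\bb Y\|^2$, i.e.\ $\|C^{-1}\|\lesssim k\sqrt{m(m+p)}\,C(A)(1+\varepsilon)$, which is off by a factor $\sqrt{k}$.

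The missing ingredient is the factorial decay inside each Taylor block. Your own recursion gives $\|\bb x_{i,j}\|\le\|\bb x_{i,0}\|/j!+\sum_{b\le j}(b!/j!)\|\bb y_{i,b}\|$. Hence $\sum_{j=0}^k\|\bb x_{i,j}\|^2\le 2I_0(2)\|\bb x_{i,0}\|^2+c\sum_b\|\bb y_{i,b}\|^2$, with $I_0(2)=\sum_j (j!)^{-2}<2.28$. So each Taylor block costs only $\mathcal{O}(\|\bb x_{i,0}\|^2)$; only the $p+1$ padding components are genuinely of size $\|\bb x_{m,0}\|$. This is exactly where the paper's factor $I_0(2)$ enters. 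With it you get $\|C^{-1}\|\lesssim\sqrt{k}\sqrt{m(m+p)}\,C(A)(1+\varepsilon)+\mathcal{O}(p)\le c\sqrt{k}(m+p)C(A)(1+\varepsilon)$.

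Your ``main obstacle'' is therefore misplaced. The $\sqrt{k+1}$-amplified connection sources are harmless: they contribute Cauchy--Schwarz weight $(m+1)(k+1)$, the same order as the $mk$ Taylor sources with weight $\e$. That is precisely the $\sqrt{k}$ in the final bound, corresponding to the paper's $\sum_g q_g^2\le\frac{7}{2}k(m+p)$. The real danger of losing $\sqrt{k}$ is the within-block count above.

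The constant $32$ is also asserted rather than derived. In the paper, $k\ge5$ is used exactly for $m(2k+1)+p+k+1\le\frac{7}{2}k(m+p)$. Then $\sqrt{\frac{7}{2}I_0(2)}\,\e\le 8$ gives $\|C^{-1}\|\le 8\sqrt{k}(m+p)C(A)(1+\varepsilon)$, which is combined with $\|C\|\le4$. In your pointwise bookkeeping, the cross terms from $(a+b)^2\le 2a^2+2b^2$ and the padding partial sums change the numerics. You must either recompute and check that you stay below $8$, or switch to per-source superposition, which avoids cross terms.

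Finally, your display should read $\sum_j\bb x_{i,j}=T_k(Ah)\bb x_{i,0}+\sum_{b=1}^{k}T_{b,k}(Ah)\bb y_{i,b}$ exactly, with no extra factorial factors.
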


\begin{proof}
We first estimate $\|C_{m,k,p}(Ah)\|$. By the definition of
$C_{m,k,p}(Ah)$, the scaled BCOW matrix can be decomposed as
\begin{equation}\label{eq:scaled_C_decomposition}
C_{m,k,p}(Ah)
=
C_1-C_2-C_3-C_4,
\end{equation}
where
\begin{align}
    C_1
    &:=
    S, \label{eq:LS-def}\\
    C_2
    &:=
    \sum_{i=0}^{m-1}\sum_{j=1}^{k}
    \ket{i(k+1)+j}\bra{i(k+1)+j-1}
    \otimes
    \frac{Ah}{j}, \label{eq:BA_scaled_norm}\\
    C_3
    &:=
    \frac{1}{\sqrt{k+1}}
    \sum_{i=0}^{m-1}\sum_{j=0}^{k}
    \ket{(i+1)(k+1)}
    \bra{i(k+1)+j}
    \otimes I^{\otimes n}, \label{eq:Bsigma_scaled_norm}\\
    C_4
    &:=
    \sum_{\ell=d-p+1}^{d}
    \ket{\ell}\bra{\ell-1}
    \otimes I^{\otimes n}. \label{eq:Bp_scaled_norm}
\end{align}
Here $d=m(k+1)+p$, and
\[
S=I-\Big(1-\frac{1}{\sqrt{k+1}}\Big)\Big(\ket0\bra0+\sum_{i=0}^{m-1}\ket{(i+1)(k+1)}\bra{(i+1)(k+1)}\Big)\otimes I^{\otimes n} .
\]

Since $S$ is diagonal and its diagonal entries belong to
$\{1,(k+1)^{-1/2}\}$, one has $\|S\|\le 1$. The condition in Theorem~\ref{thm:solutionerr} implies $\|Ah\|\le 1$, and therefore $\|C_2\|\le 1$. Moreover, $C_4$ is a partial shift, so $\|C_4\|\le 1$. For the averaging term, observe that
\[
\frac{1}{\sqrt{k+1}}\sum_{j=0}^{k}\ket0\bra j=\ket0\bra{u_k},\qquad\ket{u_k}:=\frac{1}{\sqrt{k+1}}\sum_{j=0}^{k}\ket j .
\]
which implies $\|C_3\|\le 1$. By the triangle inequality,
\begin{equation}\label{scaled_norm_bound}
\|C_{m,k,p}(Ah)\|\le\|S\|+\|C_2\|+\|C_3\|+\|C_4\|\le 4.
\end{equation}
It remains to estimate the inverse. Let $\widehat C_{m,k,p}(Ah)$ denote the original unnormalized BCOW matrix. Since $C_{m,k,p}(Ah)$ is obtained by applying the row-scaling matrix $S$ to $\widehat C_{m,k,p}(Ah)$, we have
\[
C_{m,k,p}(Ah)=S\widehat C_{m,k,p}(Ah).
\]
Hence
\begin{equation}\label{scaled_inverse_identity}
C_{m,k,p}(Ah)^{-1}=\widehat C_{m,k,p}(Ah)^{-1}S^{-1}.
\end{equation}

Let
\[
\bb B=[\bb\beta_{0,0};\cdots;\bb\beta_{0,k};\cdots;\bb\beta_{m-1,0};\cdots;\bb\beta_{m-1,k};\bb\beta_{m,0};\cdots;\bb\beta_{m,p}]
\]
with $\|\bb B\|\le 1$. Write
\[
\bb B=\sum_{g=0}^{d}\ket g\otimes\bb\beta_g=:\sum_{g=0}^{d}\bb b_g .
\]
For each $g$, define
\begin{equation}\label{solvebg}
\bb y^g
:=
\widehat C_{m,k,p}(Ah)^{-1}\bb b_g .
\end{equation}
We next bound $\|\bb{y}^g\|$ according to the location of the source term $\bb{b}_g$.

\begin{itemize}
    \item \textbf{Case $0\le g\le m(k+1)-1$.}
    Let $d_1=m(k+1)$. For $0\le g\le d_1-1$, write
    \[
	g=a(k+1)+b,\qquad0\le a<m,\quad 0\le b\le k.
    \]
    Solving the local recurrence of the unnormalized BCOW system gives
    \[
	\bb x_{i,j}^g=\bb0\quad(0\le i<a,\ 0\le j\le k),\qquad\bb x_{a,j}^g=\bb0\quad(0\le j<b),
    \]
    and
    \[
    \bb x_{a,b}^g=\bb\beta_g.
    \]
    For $b<j\le k$,
    \[
	\bb x_{a,j}^g=\frac{b!(Ah)^{j-b}}{j!}\bb\beta_g,\qquad\bb x_{a+1,0}^g=T_{b,k}(Ah)\bb\beta_g.
    \]
    For $a+1\le i\le m$,
    \[
	\bb x_{i,0}^g=T_k(Ah)^{i-a-1}T_{b,k}(Ah)\bb\beta_g.
    \]
    For $a+1\le i\le m-1$ and $1\le j\le k$,
    \[
	\bb x_{i,j}^g=\frac{(Ah)^j}{j!}\bb x_{i,0}^g.
    \]
    The appended block gives
    \[
	\bb x_{m,j}^g=\bb x_{m,0}^g,\qquad1\le j\le p.
    \]

    Since $\|Ah\|\le 1$, for $b\le j\le k$,
    \[
	\|\bb x_{a,j}^g\|\le\frac{b!}{j!}\|\bb b_g\|.
    \]
    By Lemma~\ref{Lem:boundary of T} and \eqref{Tbk},
    \[
	\|\bb x_{i,0}^g\|\le C(A)(1+\varepsilon)\e\|\bb b_g\|,\qquad a+1\le i\le m.
    \]
    Consequently,
    \[
	\|\bb x_{i,j}^g\|\le\frac{C(A)(1+\varepsilon)\e}{j!}\|\bb b_g\|,\qquad a+1\le i\le m,\quad 1\le j\le k.
    \]

    It follows that
    \begin{align}\label{eq:case1_yg_bound}
	\|\bb y^g\|^2&=\sum_{i=0}^{m-1}\sum_{j=0}^{k}\|\bb x_{i,j}^g\|^2+\sum_{j=0}^{p}\|\bb x_{m,j}^g\|^2\nonumber\\&\le\sum_{j=b}^{k}\Big(\frac{b!}{j!}\Big)^2\|\bb b_g\|^2+\sum_{i=a+1}^{m-1}\sum_{j=0}^{k}\Big(\frac{C(A)(1+\varepsilon)\e}{j!}\Big)^2\|\bb b_g\|^2\nonumber\\&\quad+(p+1)\bigl(C(A)(1+\varepsilon)\e\bigr)^2\|\bb b_g\|^2\nonumber\\&\le I_0(2)\bigl(C(A)(1+\varepsilon)\e\bigr)^2(m+p)\|\bb b_g\|^2,
    \end{align}
    where
    \[
	\sum_{j=0}^{k}\frac{1}{(j!)^2}\le I_0(2)<2.28,\qquad\sum_{j=b}^{k}\Big(\frac{b!}{j!}\Big)^2<I_0(2).
    \]

    \item \textbf{Case $m(k+1)\le g\le d$.}
    For $d_1\le g\le d$, write
    \[
	g=m(k+1)+b,\qquad0\le b\le p.
    \]
    In this case the source term lies in the appended block. Hence
    \[
	\bb x_{i,j}^g=\bb0\quad(0\le i<m,\ 0\le j\le k),
    \]
    and
    \[
	\bb x_{m,j}^g=\bb0\quad(0\le j<b),\qquad\bb x_{m,j}^g=\bb\beta_g\quad(b\le j\le p).
    \]
    Thus
    \begin{equation}\label{eq:case2_yg_bound}
	\|\bb y^g\|^2=(p-b+1)\|\bb\beta_g\|^2\le(p+1)\|\bb b_g\|^2.
    \end{equation}

    Combining \eqref{eq:case1_yg_bound} and
    \eqref{eq:case2_yg_bound}, for all $0\le g\le d$, we have
    \begin{equation}\label{eq:single_column_bound}
	\|\bb y^g\|^2=\|\widehat C_{m,k,p}(Ah)^{-1}\bb b_g\|^2\le I_0(2)\bigl(C(A)(1+\varepsilon)\e\bigr)^2(m+p)\|\bb b_g\|^2.
    \end{equation}
\end{itemize}
Since the new scaling matrix $S$ also scales the first row, define
\[
q_g:=\begin{cases}\sqrt{k+1},& g\in\{0,k+1,2(k+1),\cdots,m(k+1)\},\\1,& \text{otherwise}.\end{cases}
\]
Then
\[
S^{-1}\bb B=\sum_{g=0}^{d}q_g\bb b_g.
\]
By \eqref{scaled_inverse_identity},
\[
C_{m,k,p}(Ah)^{-1}\bb B=\sum_{g=0}^{d}q_g\bb y^g.
\]
Using Cauchy's inequality and \eqref{eq:single_column_bound}, we get
\begin{align}\label{eq:scaled_inverse_prebound}
\|C_{m,k,p}(Ah)^{-1}\bb B\|^2&\le\Big(\sum_{g=0}^{d}q_g^2\Big)\Big(\sum_{g=0}^{d}\|\bb y^g\|^2\Big)\nonumber\\&\le\Big(\sum_{g=0}^{d}q_g^2\Big)I_0(2)\bigl(C(A)(1+\varepsilon)\e\bigr)^2(m+p)\|\bb B\|^2.
\end{align}

The enlarged rows are indexed by
\[
0,k+1,2(k+1),\cdots,m(k+1),
\]
so there are $m+1$ such rows. Hence
\begin{align}\label{qg_bound}
\sum_{g=0}^{d}q_g^2&=\bigl(d+1-(m+1)\bigr)+(m+1)(k+1)\nonumber\\&=m(2k+1)+p+k+1\nonumber\\&\le\frac{7}{2} k(m+p),
\end{align}
where the last inequality follows from $k\ge5$. Therefore,
\[
\|C_{m,k,p}(Ah)^{-1}\|\le\sqrt{\frac{7}{2} I_0(2)}\,\e\sqrt{k}(m+p)C(A)(1+\varepsilon).
\]
Since $I_0(2)<2.28$, we have
\[
\sqrt{\frac{7}{2} I_0(2)}\,\e\le 8.
\]
Thus
\begin{equation}\label{scaled_inverse_bound}
\|C_{m,k,p}(Ah)^{-1}\|\le8\sqrt{k}(m+p)C(A)(1+\varepsilon).
\end{equation}

Combining \eqref{scaled_norm_bound} and
\eqref{scaled_inverse_bound}, we obtain
\[
\kappa_C=\|C_{m,k,p}(Ah)\|\|C_{m,k,p}(Ah)^{-1}\|\le32\sqrt{k}(m+p)C(A)(1+\varepsilon).
\]
This completes the proof.
\end{proof}

\subsection{Proof of Lemma \ref{lem:block-encode-C}: block-encoding of $C_{m,k,p}(Ah)$}\label{sec:BE_C}
The BCOW system in \eqref{BCOWsystem} can be expressed in the following block form:
\[\begin{bmatrix}
W &         &       &           & \\
E & W       &       &           & \\
  & \ddots  & \ddots&           & \\
  &         &  E    & W         & \\
  &         &       & \tilde{E} & \tilde{W}
\end{bmatrix} \begin{bmatrix}
\bb{X}_0 \\
\bb{X}_1 \\
\vdots \\
\bb{X}_{m-1} \\
\tilde{\bb{X}}_m
\end{bmatrix} = \begin{bmatrix}
 \bb{B}_0 \\
 \bb{B}_1 \\
  \vdots \\
\bb{B}_{m-1} \\
 \bb{0} \end{bmatrix}.\]
Here, the vectors are
\[
\bb{X}_i
=
\begin{bmatrix}
\bb{x}_{i,0}\\
\bb{x}_{i,1}\\
\vdots\\
\bb{x}_{i,k}
\end{bmatrix}
(i=0,\cdots,m-1),
~~
\tilde{\bb{X}}_m
=
\begin{bmatrix}
\bb{x}_{m,0}\\
\bb{x}_{m,1}\\
\vdots\\
\bb{x}_{m,p}
\end{bmatrix},~~
\bb{B}_0
=
\begin{bmatrix}
\dfrac{1}{\sqrt{k+1}}\bb{x}_{\rm in}\\
h\bb{b}\\
\bb{0}\\
\vdots\\
\bb{0}
\end{bmatrix},
~~
\bb{B}_i
=
\begin{bmatrix}
\bb{0}\\
h\bb{b}\\
\bb{0}\\
\vdots\\
\bb{0}
\end{bmatrix} (i=1,\cdots,m-1),
\]
and the matrices are
\[W =  \begin{bmatrix}
\frac{1}{\sqrt{k+1}} I    &         &          &          \\
-Ah  &     I   &          &        \\
     & \ddots    & \ddots  & \\
              &           & -Ah/k  & I
\end{bmatrix}, \qquad E =  \underbrace{\begin{bmatrix}
-\frac{1}{\sqrt{k+1}} I    &   -\frac{1}{\sqrt{k+1}} I      &   \cdots       &   -\frac{1}{\sqrt{k+1}} I       \\
O  &   O   &      \cdots    &      O  \\
\vdots   &  \vdots    &  \vdots  & \vdots\\
 O  &   O   &  \cdots  & O
\end{bmatrix}}_{(k+1) ~\text{blocks}} \left.\begin{array}{l}
\\
\\
\\
\\
\end{array}\right\} (k+1) ~\text{blocks},\]
\[\tilde{W} =  \underbrace{\begin{bmatrix}
\frac{1}{\sqrt{k+1}} I    &         &          &          \\
-I  &     I   &          &        \\
     & \ddots    & \ddots  & \\
              &           & -I  & I
\end{bmatrix}}_{(p+1) ~\text{blocks}}, \qquad \tilde{E} =  \underbrace{\begin{bmatrix}
-\frac{1}{\sqrt{k+1}} I    &   -\frac{1}{\sqrt{k+1}} I      &   \cdots       &   -\frac{1}{\sqrt{k+1}} I       \\
O  &   O   &      \cdots    &      O  \\
\vdots   &  \vdots    &  \vdots  & \vdots\\
 O  &   O   &  \cdots  & O
\end{bmatrix}}_{(k+1) ~\text{blocks}} \left.\begin{array}{l}
\\
\\
\\
\\
\end{array}\right\} (p+1) ~\text{blocks},\]

To start, we rewrite the matrix as a sum of two matrices:
\[C_{m,k,p}(Ah) = \begin{bmatrix}
W &         &       &           & \\
 & W       &       &           & \\
  &  & \ddots&           & \\
  &         &      & W         & \\
  &         &       &  & \tilde{W}
\end{bmatrix} + \begin{bmatrix}
O &         &       &           & \\
E & O       &       &           & \\
  & \ddots  & \ddots&           & \\
  &         &  E    & O         & \\
  &         &       & \tilde{E} & O
\end{bmatrix} =: L_1 + L_2.\]
Throughout this subsection, we use the following notation:
\begin{itemize}
    \item $N=2^n$, $m=2^{\mathfrak m}$, $k+1=2^{\mathfrak k}$, and $p+1=2^{\mathfrak p}$. The registers
    are denoted by $\ket{\cdot}_{n}$, $\ket{\cdot}_{\mathfrak m}$, $\ket{\cdot}_{\mathfrak k}$, and $\ket{\cdot}_{\mathfrak p}$, where the subscript indicates the number of qubits in the register.
    \item All-zero ancilla registers are written as $\ket{0}_a$ for a named one-qubit ancilla and as $\ket{0^r}$ for an $r$-qubit ancilla register.
    \item For any positive integer $r$, we use $I_{r\times r}$ to denote the $r\times r$ identity matrix and $I^{\otimes r}$ to denote the identity operator on an $r$-qubit register, whose matrix dimension is $2^r\times 2^r$.
    \item An exact $(\alpha_A,m_A,0)$-block-encoding of $A$, denoted by $U_A$, is available.
\end{itemize}
We first construct the block-encoding in the normalized case $h=1$ and $\alpha_A=1$, and then extend it to general $h>0$ and $\alpha_A\ge \|A\|_2$ by using~\cite[Lemma B.5]{Dong2025Pade}. In the normalized setting, $U_A$ is an exact $(1,m_A,0)$-block-encoding of $A$.

\paragraph{Block-encoding of $L_1$.}

The first term $L_1$ can be written as
\[
L_1
=
\begin{bmatrix}
\mathcal W & \\
& \tilde W
\end{bmatrix},
\qquad
\mathcal W=I^{\otimes \mathfrak m}\otimes W.
\]
We construct $U_{L_1}$ in the following steps:

Step 1: Block-encoding of $W$.
    The matrix $W$ admits the decomposition
    \[
    W=- M_1\otimes A+M_2\otimes I^{\otimes n},
    \]
    where
    \[
    M_1=D_kJ_k,\qquad M_2 = \begin{bmatrix}
    \frac{1}{\sqrt{k+1}}     &         &          &          \\
    &     1   &          &        \\
        &      & \ddots  & \\
                &           &    & 1
    \end{bmatrix},
    \]
    with
    \[
    D_k=
    \begin{bmatrix}
    0 &        &        &        \\
    & 1      &        &        \\
    &        & \ddots &        \\
    &        &        & 1/k
    \end{bmatrix},
    \qquad
    J_k=
    \begin{bmatrix}
    0 &        &        &        \\
    1 & 0      &        &        \\
      & \ddots & \ddots &        \\
      &        & 1      & 0
    \end{bmatrix}.
    \]

    The shift matrix $J_k$ is block-encoded by an addition module on the ``Taylor-order'' register:
    \[
    U_{J_k}\ket{0}_a\ket j_{\mathfrak k}
    =
    \begin{cases}
    \ket{0}_a\ket{j+1}_{\mathfrak k}, & j=0,\cdots,k-1,\\
    \ket{1}_a\ket0_{\mathfrak k}, & j=k.
    \end{cases}
    \]
which is a $(1,1,0)$-block-encoding of $J_k$.
	Let $\mathrm{ADD}$ denote the modular addition
	$j\mapsto j+1\bmod(k+1)$. The corresponding circuit is shown in
	Fig.~\ref{fig:block-encode-Jk}. In this circuit, the open control denotes
	the zero-test on the $\mathfrak k$-qubit Taylor-order register, the same
	convention is used in the following circuit diagrams.

    \begin{figure}[htbp]
    \centering
    \[
    \Qcircuit @C=1em @R=.8em{
    \lstick{\ket{0}_a}
        & \qw
        & \targ
        & \qw
    \\
    \lstick{\ket{\cdot}_{\mathfrak k}}
        & \gate{\mathrm{ADD}}
        & \ctrlo{-1}
        & \qw
    }
    \]
    \caption{Block-encoding of $J_k$.}
    \label{fig:block-encode-Jk}
    \end{figure}

    The diagonal matrix $D_k$ is block-encoded by a uniformly controlled rotation:
    \[
    U_{D_k}\ket{0}_a\ket j_{\mathfrak k}
    =
    \begin{cases}
    \ket{1}_a\ket0_{\mathfrak k}, & j=0,\\[1mm]
    \Big(
    \dfrac1j\ket{0}_a+
    \sqrt{1-\dfrac1{j^2}}\ket{1}_a
    \Big)\ket j_{\mathfrak k}, & j=1,\cdots,k.
    \end{cases}
    \]
    Since $M_1=D_kJ_k$, the standard product rule for block-encodings~\cite{2018arXiv180601838G,An2022forwarding,Chakraborty2019blockEncode} gives an exact $(1,2,0)$-block-encoding of $M_1$, denoted by $U_{M_1}$. Combining $U_{M_1}$ with $U_A$ on the same LCU branch gives an exact $(1,m_A+2,0)$-block-encoding of $M_1\otimes A$. The negative sign in the term $-M_1\otimes A$ will be introduced by the $Z$ gate on the selection qubit in the LCU circuit. The uniformly controlled rotation has gate complexity $\mathcal{O}(k)$ using the multiplexed-rotation construction in~\cite{Mottonen2004Quantum,Camps2022FABLE}.

    For $M_2$, let
    \[
    \theta_0:=2\arccos\frac{1}{\sqrt{k+1}}.
    \]
    The unitary $U_{M_2}$ applies $R_y(\theta_0)$ to one ancilla qubit conditioned on the Taylor-order register being $\ket0_{\mathfrak k}$:
    \[
    U_{M_2}\ket{0}_a\ket j_{\mathfrak k}
    =
    \begin{cases}
    \Big(
    \dfrac1{\sqrt{k+1}}\ket{0}_a+
    \sqrt{\dfrac{k}{k+1}}\ket{1}_a
    \Big)\ket0_{\mathfrak k}, & j=0,\\[2mm]
    \ket{0}_a\ket j_{\mathfrak k}, & j=1,\cdots,k.
    \end{cases}
    \]
    Thus $U_{M_2}$ is an exact $(1,1,0)$-block-encoding of $M_2$. The circuit is shown in Fig.~\ref{fig:block-encode-M2}.

    \begin{figure}[htbp]
    \centering
    \[
    \Qcircuit @C=1em @R=.8em{
    \lstick{\ket{0}_a} & \gate{R_y(\theta_0)} & \qw\\
    \lstick{\ket{\cdot}_{\mathfrak k}} & \ctrlo{-1} & \qw
    }
    \]
    \caption{Block-encoding of $M_2$.}
    \label{fig:block-encode-M2}
    \end{figure}

	Applying LCU to $W$, we obtain an exact $(2,m_A+3,0)$-block-encoding of $W$, denoted by $U_W$.
	The circuit is shown in Fig.~\ref{fig:block-encode-W}. Since
	$\mathcal W=I^{\otimes \mathfrak m}\otimes W$, applying $U_W$ uniformly over the first
	$m$ time blocks gives a block-encoding of $\mathcal W$, denoted by
	$U_{\mathcal W}$.
    \begin{figure}[htbp]
    \centering
    \[
    \Qcircuit @C=1em @R=.85em{
    \lstick{\ket{0}_{a_1}}
        & \gate{H}
        & \ctrlo{1}
        & \ctrl{1}
        & \ctrl{3}
        & \gate{Z}
        & \gate{H}
        & \qw
    \\
    \lstick{\ket{0}_{a_2}}
        & \qw
        & \multigate{1}{U_{M_2}}
        & \multigate{1}{U_{M_1}}
        & \qw
        & \qw
        & \qw
        & \qw
    \\
    \lstick{\ket{\cdot}_{\mathfrak k}}
        & \qw
        & \ghost{U_{M_2}}
        & \ghost{U_{M_1}}
        & \qw
        & \qw
        & \qw
        & \qw
    \\
    \lstick{\ket{0^{m_A}}}
        & \qw
        & \qw
        & \qw
        & \multigate{1}{U_A}
        & \qw
        & \qw
        & \qw
    \\
    \lstick{\ket{\cdot}_{n}}
        & \qw
        & \qw
        & \qw
        & \ghost{U_A}
        & \qw
        & \qw
        & \qw
    }
    \]
    \caption{Block-encoding of $W$, denoted by $U_W$. }
    \label{fig:block-encode-W}
    \end{figure}

Step 2: Block-encoding of $\tilde W$.
    The last diagonal block is decomposed as
    \begin{equation}\label{eq:Wtilde-decomp-C}
    \tilde W
    =
    \tilde M_2\otimes I^{\otimes n}-\tilde J_p\otimes I^{\otimes n},
    \end{equation}
    where
    \[
    \tilde M_2
    =
    \begin{bmatrix}
    \dfrac{1}{\sqrt{k+1}} &        &        &        \\
                        & 1      &        &        \\
                        &        & \ddots &        \\
                        &        &        & 1
    \end{bmatrix}_{(p+1)\times(p+1)},
    \qquad
    \tilde J_p
    =
    \begin{bmatrix}
    0 &        &        &        \\
    1 & 0      &        &        \\
      & \ddots & \ddots &        \\
      &        & 1      & 0
    \end{bmatrix}_{(p+1)\times(p+1)}.
    \]
	The matrix $\tilde M_2$ is block-encoded by the same controlled rotation angle $\theta_0=2\arccos(1/\sqrt{k+1})$, with the Taylor-order register replaced by the $\mathfrak p$-qubit register. We denote the resulting unitary by $U_{\tilde M_2}$.

	Similarly, $\tilde J_p$ is block-encoded by the addition module		
    \[
    U_{\tilde J_p}\ket{0}_a\ket j_{\mathfrak p}
    =
    \begin{cases}
    \ket{0}_a\ket{j+1}_{\mathfrak p}, & j=0,\cdots,p-1,\\
    \ket{1}_a\ket0_{\mathfrak p}, & j=p.
    \end{cases}
    \]
    Thus $U_{\tilde M_2}$ and $U_{\tilde J_p}$ are exact $(1,1,0)$-block-encodings of $\tilde M_2$ and $\tilde J_p$, respectively.

	Applying LCU to \eqref{eq:Wtilde-decomp-C}, and introducing the minus sign by a $Z$ gate on the selection qubit, gives an exact $(2,2,0)$-block-encoding of $\tilde W$, denoted by $U_{\tilde W}$. The circuit is shown in Fig.~\ref{fig:block-encode-Wtilde}.

    \begin{figure}[htbp]
    \centering
    \[
    \Qcircuit @C=1em @R=.85em{
    \lstick{\ket{0}_{a_1}}
        & \gate{H}
        & \ctrlo{1}
        & \ctrl{1}
        & \gate{Z}
        & \gate{H}
        & \qw
    \\
    \lstick{\ket{0}_{a_2}}
        & \qw
        & \multigate{1}{U_{\tilde M_2}}
        & \multigate{1}{U_{\tilde J_p}}
        & \qw
        & \qw
        & \qw
    \\
    \lstick{\ket{\cdot}_{\mathfrak p}}
        & \qw
        & \ghost{U_{\tilde M_2}}
        & \ghost{U_{\tilde J_p}}
        & \qw
        & \qw
        & \qw
    \\
    \lstick{\ket{\cdot}_{n}}
        & \qw
        & \qw
        & \qw
        & \qw
        & \qw
        & \qw
    }
    \]
    \caption{Block-encoding of $\tilde W$, denoted by $U_{\tilde W}$.}
    \label{fig:block-encode-Wtilde}
    \end{figure}

Step 3: Coherent routing for $L_1$.
	Combining $U_{\mathcal W}$ and $U_{\tilde W}$ gives the block-encoding of $L_1$. This step is not an LCU sum, since $\mathcal W$ and $\tilde W$ act on orthogonal block sectors. Therefore, the sector register coherently routes the input to the corresponding block. The resulting unitary $U_{L_1}$ is an exact $(2,m_A+3,0)$-block-encoding of $L_1$. The circuit is shown in Fig.~\ref{fig:block-encode-L1}.

    \begin{figure}[htbp]
    \centering
    \[
    \Qcircuit @C=1em @R=.8em{
    \lstick{\ket{\cdot}_1} & \ctrlo{1} & \ctrl{1} & \qw\\
    \lstick{\ket{0^{m_A+3}}}       & \multigate{1}{U_{\mathcal W}} & \multigate{1}{U_{\tilde W}} & \qw\\
    \lstick{\ket{\cdot}_{\mathfrak m+q+n}}           & \ghost{U_{\mathcal W}} & \ghost{U_{\tilde W}} & \qw
    }
    \]
    \caption{Block-encoding of $L_1$, denoted by $U_{L_1}$.}
    \label{fig:block-encode-L1}
    \end{figure}

\paragraph{Block-encoding of $L_2$.}

We now construct the block-encoding of $L_2$. For a rectangular matrix $A\in\mathbb C^{M\times N}$ with $M,N\le 2^q$, we use the standard square embedding $A_e\in\mathbb C^{2^q\times2^q}$, whose top-left block is $A$ and whose remaining entries are zero, as in~\cite{Gilyen2019QSVD}. Under this convention, the rectangular matrices $\widetilde F_{p,k}$ and $\widetilde R_m$ are implemented through their square embeddings.

Define
\[
F_k
:=
\frac{1}{\sqrt{k+1}}
\begin{bmatrix}
1&1&\cdots&1\\
0&0&\cdots&0\\
\vdots&\vdots&\ddots&\vdots\\
0&0&\cdots&0
\end{bmatrix}_{(k+1)\times(k+1)},
\qquad
\widetilde F_{p,k}
:=
\frac{1}{\sqrt{k+1}}
\begin{bmatrix}
1&1&\cdots&1\\
0&0&\cdots&0\\
\vdots&\vdots&\ddots&\vdots\\
0&0&\cdots&0
\end{bmatrix}_{(p+1)\times(k+1)}.
\]
Then
\[
E=-F_k\otimes I^{\otimes n},\qquad
\widetilde E=-\widetilde F_{p,k}\otimes I^{\otimes n}.
\]
Let
\[
R_m
:=
\begin{bmatrix}
0 &        &        &        \\
1 & 0      &        &        \\
  & \ddots & \ddots &        \\
  &        & 1      & 0
\end{bmatrix}_{m\times m},
\qquad
\widetilde R_m
:=
\begin{bmatrix}
0&0&\cdots&0&1
\end{bmatrix}_{1\times m}.
\]
Then
\begin{equation}\label{eq:L2-2by2-C}
L_2
=
-
\begin{bmatrix}
R_m\otimes F_k & O\\
\widetilde R_m\otimes\widetilde F_{p,k} & O
\end{bmatrix}
\otimes I^{\otimes n}
=: -G_\Sigma\otimes I^{\otimes n} .
\end{equation}

We construct $U_{L_2}$ in the following steps:

Step (a): \textbf{Block-encodings of $F_k$ and $\widetilde F_{p,k}$.}
    Since $k+1=2^{\mathfrak k}$, one has
    \[
    \frac{1}{\sqrt{k+1}}\sum_{j=0}^{k}\bra j
    =
    \bra0_{\mathfrak k}H^{\otimes\mathfrak k}.
    \]
    Hence
    \[
    F_k=P_kH^{\otimes\mathfrak k},
    \qquad
    P_k=\ket0_{\mathfrak k}\bra0_{\mathfrak k}.
    \]
    To place $F_k$ and $\widetilde F_{p,k}$ in the same local register, let $q=\max\{\mathfrak p,\mathfrak k\}$ and
    \[
    W_k=I^{\otimes (q-\mathfrak k)}\otimes H^{\otimes\mathfrak k},
    \qquad
    \Pi_0=\ket0_{q}\bra0_{q}.
    \]
    We introduce the square embeddings
    \[ A_{F,e}=A_{\widetilde F,e}:=\Pi_0W_k. \]
    Equivalently,
    \[ A_{F,e}=A_{\widetilde F,e} = \frac{1}{\sqrt{k+1}} \begin{bmatrix} 1&1&\cdots&1&0&\cdots&0\\ 0&0&\cdots&0&0&\cdots&0\\ \vdots&\vdots&\ddots&\vdots&\vdots&\ddots&\vdots\\ 0&0&\cdots&0&0&\cdots&0 \end{bmatrix}_{2^q\times 2^q}. \]
    The block of $A_{F,e}$ restricted to the first $k+1$ rows and columns is $F_k$, while the block of $A_{\widetilde F,e}$ restricted to the first $p+1$ rows and the first $k+1$ columns is $\widetilde F_{p,k}$.

    We next define the zero-test controlled operation
    \[
	C_{\Pi_0}X:=X\otimes\Pi_0+I\otimes(I^{\otimes q}-\Pi_0)=\begin{bmatrix}I^{\otimes q}-\Pi_0 & \Pi_0\\\Pi_0 & I^{\otimes q}-\Pi_0\end{bmatrix},
    \]
    and set
    \[
	U_F=U_{\widetilde F}:=(X\otimes I^{\otimes q})C_{\Pi_0}X(I\otimes W_k).
    \]
    Then
    \[
	(\bra{0}_a\otimes I^{\otimes q})U_F(\ket{0}_a\otimes I^{\otimes q})=A_{F,e},
    \]
    and similarly
    \[
	(\bra{0}_a\otimes I^{\otimes q})U_F(\ket{0}_a\otimes I^{\otimes q})=A_{\widetilde F,e}.
    \]
    Therefore, $U_F$ and $U_{\widetilde F}$ are exact
    $(1,1,0)$-block-encodings of $A_{F,e}$ and
    $A_{\widetilde F,e}$, respectively. The circuit is shown in
    Fig.~\ref{fig:block-encode-Fpk}.

    \begin{figure}[htbp]
    \centering
    \[
    \Qcircuit @C=1em @R=.8em{
    \lstick{\ket{0}_a}
        & \qw
        & \targ
        & \gate{X}
        & \qw
    \\
    \lstick{\ket{\cdot}_{q}}
        & \gate{W_k}
        & \ctrlo{-1}
        & \qw
        & \qw
    }
    \]
    \caption{Block-encoding of $A_{F,e}$ and $A_{\widetilde F,e}$.}
    \label{fig:block-encode-Fpk}
    \end{figure}

Step (b): Block-encoding of the sector-time shift.
	Define
	\[
	S_m
	:=
	\begin{bmatrix}
	R_m & O\\
	A_{\widetilde R,e} & O
	\end{bmatrix},
	\qquad
	A_{\widetilde R,e}:=\ket0\bra{m-1}.
	\]
	Let $\mathrm{ADD}$ denote the cyclic shift on the joint sector-time
	register $\ket{s}\ket{r}_{\mathfrak m}$, namely
	\[
	\mathrm{ADD}\ket{sm+r}
	=
	\ket{(sm+r+1)\bmod 2m}.
	\]
    Equivalently,
    \[
    \mathrm{ADD}
    =
    \left[
    \begin{smallmatrix}
    0      &        &        &        & 1\\
    1      & 0      &        &        &  \\
           & 1      & 0      &        &  \\
           &        & \ddots & \ddots &  \\
           &        &        & 1      & 0
    \end{smallmatrix}
    \right]_{2m\times 2m}.
    \]
    Let
    \[
	P_{\rm ord}:=\ket0\bra0\otimes I^{\otimes \mathfrak m}=\begin{bmatrix}I^{\otimes \mathfrak m} & O\\O & O\end{bmatrix}.
    \]
    Then
    \[
    S_m=\mathrm{ADD}\cdot P_{\rm ord}.
    \]

    The projector $P_{\rm ord}$ can be block-encoded using one ancilla qubit. Let $U_{\rm ord}$ be the CNOT gate controlled by the sector qubit and targeting the ancilla qubit. Then
    \[
	(\bra{0}_a\otimes I^{\otimes (1+\mathfrak m)})U_{\rm ord}
	(\ket{0}_a\otimes I^{\otimes (1+\mathfrak m)})
    =
    P_{\rm ord}.
    \]
    Since $\mathrm{ADD}$ is unitary, setting
    \[
	U_{S_m}:=(I\otimes\mathrm{ADD})U_{\rm ord}
    \]
    yields
    \[
	(\bra{0}_a\otimes I^{\otimes (1+\mathfrak m)})U_{S_m}
	(\ket{0}_a\otimes I^{\otimes (1+\mathfrak m)})
    =
    S_m .
    \]
    Thus $U_{S_m}$ is an exact $(1,1,0)$-block-encoding of $S_m$.

Step (c): Coherent routing for $L_2$.
    Define
    \[
	F_\Sigma := \begin{bmatrix} I^{\otimes \mathfrak m}\otimes A_{F,e} & O\\ O & I^{\otimes \mathfrak m}\otimes A_{\widetilde F,e} \end{bmatrix}.	
    \]
    The embedded version of $G_\Sigma$ is denoted by
    \[
	G_{\Sigma,e} := F_\Sigma(S_m\otimes I^{\otimes q}) = \begin{bmatrix} R_m\otimes A_{F,e} & O\\ A_{\widetilde R,e}\otimes A_{\widetilde F,e} & O \end{bmatrix}.
    \]
    The matrix $G_\Sigma$ in \eqref{eq:L2-2by2-C} is the selected rectangular block of $G_{\Sigma,e}$, after discarding the padded rows and columns introduced above.

    The block-encoding of $F_\Sigma$ can be implemented by
	\[ U_{F_\Sigma} := \ket0\bra0\otimes I^{\otimes \mathfrak m}\otimes U_F + \ket1\bra1\otimes I^{\otimes \mathfrak m}\otimes U_{\widetilde F}. \]	
    Then $U_{F_\Sigma}$ is an exact $(1,1,0)$-block-encoding of $F_\Sigma$. By the product rule, $U_{F_\Sigma}(U_{S_m}\otimes I^{\otimes q})$ gives an exact $(1,2,0)$-block-encoding of $G_{\Sigma,e}\otimes I^{\otimes n}$, whose selected rectangular block is $G_\Sigma\otimes I^{\otimes n}$.

   Finally, the negative sign in $L_2=-G_\Sigma\otimes I^{\otimes n}$ is encoded by applying $-Z$ to the postselected ancilla $a$. Consequently, the resulting unitary $U_{L_2}$ is an exact $(1,2,0)$-block-encoding of $L_2$, with gate complexity $\mathcal{O}(\text{polylog}(mkp))$. The circuit is shown in Fig.~\ref{fig:block-encode-L2}.
    \begin{figure}[H]
        \centering
		\[
		\Qcircuit @C=1em @R=.8em{
		\lstick{\ket{0}_a}
		    & \targ
		    & \qw
		    & \qw
		    & \qw
		    & \gate{-Z}
		    & \qw
		\\
		\lstick{\ket{\cdot}_{1}}
		    & \ctrl{-1}
		    & \multigate{1}{\mathrm{ADD}}
		    & \ctrlo{2}
		    & \ctrl{2}
		    & \qw
		    & \qw
		\\
		\lstick{\ket{\cdot}_{\mathfrak m}}
		    & \qw
		    & \ghost{\mathrm{ADD}}
		    & \qw
		    & \qw
		    & \qw
		    & \qw
		\\
		\lstick{\ket{0}_a}
		    & \qw
		    & \qw
		    & \multigate{1}{U_F}
		    & \multigate{1}{U_{\widetilde F}}
		    & \qw
		    & \qw
		\\
		\lstick{\ket{\cdot}_{q}}
		    & \qw
		    & \qw
		    & \ghost{U_F}
		    & \ghost{U_{\widetilde F}}
		    & \qw
		    & \qw
		\\
		\lstick{\ket{\cdot}_{n}}
		    & \qw
		    & \qw
		    & \qw
		    & \qw
		    & \qw
		    & \qw
		}
		\]
        \caption{Block-encoding of $L_2$, denoted by $U_{L_2}$.}
        \label{fig:block-encode-L2}
    \end{figure}

\paragraph{Block-encoding of $C_{m,k,p}(Ah)$.}

	We now combine the block-encodings of $L_1$ and $L_2$. The construction is first presented for the normalized case, and is then extended to the general case $h>0$ and $\alpha_A\ge\|A\|_2$.
\begin{itemize}
	\item \textbf{Normalized case.}
    In the normalized case $h=1$ and $\alpha_A=1$, the above construction gives an exact $(2,m_A+3,0)$-block-encoding $U_{L_1}$ of $L_1$, and an exact $(1,2,0)$-block-encoding $U_{L_2}$ of $L_2$. After padding unused ancilla registers, we regard $U_{L_2}$ as an exact $(1,m_A+3,0)$-block-encoding.

    We combine $U_{L_1}$ and $U_{L_2}$ by LCU. Let
    \[
    \theta_C=2\arccos\sqrt{\frac{2}{3}},
    \qquad
    \cos\frac{\theta_C}{2}=\sqrt{\frac{2}{3}},
    \qquad
    \sin\frac{\theta_C}{2}=\frac{1}{\sqrt3}.
    \]
    The selection qubit is prepared by $R_y(\theta_C)$. The $\ket0$-branch applies $U_{L_1}$, and the $\ket1$-branch applies $U_{L_2}$. Equivalently, the resulting unitary is
    \begin{equation}\label{eq:UC-definition}
    U_C
    :=
    (R_y(-\theta_C)\otimes I)
    \Big(
    \ket0\bra0\otimes U_{L_1}
    +
    \ket1\bra1\otimes U_{L_2}
    \Big)
    (R_y(\theta_C)\otimes I).
    \end{equation}
    Then
    \[
	\begin{aligned}&\Big(\bra{0}_a\otimes\bra{0^{m_A+3}}\otimes I\Big)U_C\Big(\ket{0}_a\otimes\ket{0^{m_A+3}}\otimes I\Big)\\&\qquad=\frac{2}{3}\cdot\frac{L_1}{2}+\frac{1}{3}\cdot L_2=\frac{1}{3} C_{m,k,p}(A).\end{aligned}
	\]
    Therefore, $U_C$ is an exact
    $(3,m_A+4,0)$-block-encoding of $C_{m,k,p}(A)$. The corresponding quantum circuit is shown in
    Fig.~\ref{fig:block-encode-C}.

    \begin{figure}[htbp]
    \centering
	\[
	\Qcircuit @C=1em @R=.85em{
	\lstick{\ket{0}_a}
	    & \gate{R_y(\theta_C)}
	    & \ctrlo{1}
	    & \ctrl{1}
	    & \gate{R_y(-\theta_C)}
	    & \qw
	\\
	\lstick{\ket{0^{m_A+3}}}
	    & \qw
	    & \multigate{1}{U_{L_1}}
	    & \multigate{1}{U_{L_2}}
	    & \qw
	    & \qw
	\\
	\lstick{\ket{\cdot}_{\mathfrak m+q+n+1}}
	    & \qw
	    & \ghost{U_{L_1}}
	    & \ghost{U_{L_2}}
	    & \qw
	    & \qw
	}
	\]
    \caption{LCU block-encoding of $C_{m,k,p}(A)$, denoted by $U_C$.}
    \label{fig:block-encode-C}
    \end{figure}

    \item \textbf{Extension to general $h$ and $\alpha_A$.}
    It remains to remove the normalized assumption. Suppose that an exact $(\alpha_A,m_A,0)$-block-encoding of $A$ is available, while the target matrix is $C_{m,k,p}(Ah)$.

    If $h=1/\alpha_A$, then the given block-encoding of $A$ can be directly regarded as a block-encoding of $Ah$. For the remaining cases, we use~\cite[Lemma B.5]{Dong2025Pade}.

    \begin{itemize}
        \item If $h<1/\alpha_A$, then $1/h>\alpha_A$. By~\cite[Lemma B.5]{Dong2025Pade}, we obtain an exact $(1/h,m_A+1,0)$-block-encoding of $A$, or equivalently an exact $(1,m_A+1,0)$-block-encoding of $Ah$. Applying the normalized construction above gives an exact
        \[
        \Big(3,m_A+5,0\Big)
        \]
        block-encoding of $C_{m,k,p}(Ah)$.

        \item If $h>1/\alpha_A$, then the $A$-dependent branch carries the factor $\alpha_A h$. We apply~\cite[Lemma B.5]{Dong2025Pade} to all block-encodings in the above construction except the query to $U_A$, so that the remaining branches are scaled consistently with the $A$-dependent branch. This gives an exact
        \[
        \Big(3\alpha_Ah,m_A+5,0\Big)
        \]
        block-encoding of $C_{m,k,p}(Ah)$.
    \end{itemize}

    Therefore, in all cases, we obtain a
    \[
    \Big(3\max\{\alpha_Ah,1\},\,m_A+5,\,
    0\Big)
    \]
    block-encoding of $C_{m,k,p}(Ah)$. The construction uses one query to $U_A$, and the additional gate complexity is
    \[
    O\Big(k+\text{polylog}(mkp)\Big).
    \]
    This completes the construction of the block-encoding of
    $C_{m,k,p}(Ah)$.
\end{itemize}

\subsection{Proof of Theorem \ref{thm:complexity}: complexity of the BCOW algorithm}\label{sec:complexity}

We apply the QLSA to the normalized BCOW linear system
\begin{equation}\label{eq:scaled_BCOW_system_complexity}
    C_{m,k,p}(Ah)\bb{X}
    =
    \bb{F},
\end{equation}
where $C_{m,k,p}(Ah)$ is defined in \eqref{BCOWsystem}. The normalization rescales only the homogeneous connection rows, and therefore does not change the solution vector of the BCOW system.

The right-hand side $\bb F$ can be prepared by the state-preparation procedure in~\cite{BerryChilds2017ODE}, using $\mathcal{O}(1)$ calls to the initial-state oracle $O_x$ and the inhomogeneous-term oracle $O_b$. Moreover, Lemma~\ref{lem:block-encode-C} gives an exact $(\alpha_C,m_C,0)$-block-encoding of $C_{m,k,p}(Ah)$, where
\begin{equation}\label{eq:scaled_C_block_params_complexity}
    \alpha_C
    =
    3\max\{\alpha_Ah,1\},
    \qquad
    m_C
    =
    m_A+5.
\end{equation}
Each use of this block-encoding requires one query to $U_A$, together with $\mathcal{O}\big(k+\text{polylog}(mkpN)\big)$ additional elementary gates.

We now apply the optimal QLSA to system \eqref{eq:scaled_BCOW_system_complexity}. For this block-encoding model, the effective condition number is $\alpha_C\|C_{m,k,p}(Ah)^{-1}\|$, as in the adiabatic QLSA formulation used in~\cite{Dong2025Pade}. Hence the number of block-encoding queries required to prepare a solution state $\ket{\bb X'}$ within error $\varepsilon$ is
\begin{equation}\label{eq:qlsa_raw_cost_scaled}
    \mathcal{O}\Big(
    \alpha_C
    \big\|
        C_{m,k,p}(Ah)^{-1}
    \big\|
    \log\frac{1}{\varepsilon}
    \Big).
\end{equation}

We choose the discretization parameters as in~\cite{BerryChilds2017ODE}. Let $h=T/m$, $m=p=\lceil T\|A\|\rceil$, $\varepsilon=\epsilon/(25\sqrt{m}g)$, and $\epsilon<1/2$. For sufficiently large $\Omega$, choose
\[
    k
    =
    \Big\lceil
    \frac{2\log\Omega}{\log\log\Omega}
    \Big\rceil,
    \qquad
    \Omega
    =
    \frac{2m\e^3}{\varepsilon}
    \Big(
    1+\frac{T\e^2\|\bb b\|}{\|\bb x(T)\|}
    \Big).
\]
With this choice, we have $\|Ah\|\le 1$. After measuring the auxiliary registers, the procedure produces a state within distance $\epsilon$ of $\bb{x}(T)/\|\bb{x}(T)\|$ with probability at least on the order of $1/g^2$. Applying $\mathcal{O}(g)$ rounds of amplitude amplification increases the success probability to $\Omega(1)$.

It remains to simplify the query complexity. By Theorem~\ref{thm:scaled_kc}, one has
\begin{equation}\label{eq:scaled_condition_bound_call}
    \kappa_C
    \le
    32\sqrt{k}(m+p)C(A)(1+\varepsilon).
\end{equation}
The proof of Theorem~\ref{thm:scaled_kc} also gives the inverse estimate
\begin{equation}\label{eq:scaled_inv_bound_call}
    \big\|
        C_{m,k,p}(Ah)^{-1}
    \big\|
    =
    \mathcal{O}\Big(
    \sqrt{k}(m+p)C(A)
    \Big).
\end{equation}
Combining \eqref{eq:scaled_C_block_params_complexity} and \eqref{eq:scaled_inv_bound_call}, and using $h=T/m$, gives
\begin{align}\label{eq:scaled_eff_cond_simplify}
    \alpha_C
    \big\|
        C_{m,k,p}(Ah)^{-1}
    \big\|
    &=
    \mathcal{O}\Big(
    \max\{\alpha_A T/m,1\}
    \sqrt{k}(m+p)C(A)
    \Big)
    \nonumber\\
    &=
    \mathcal{O}\Big(
    (\alpha_A T+m)
    \sqrt{k}C(A)
    \Big).
\end{align}
Following the asymptotic convention in~\cite{BerryChilds2017ODE}, we suppress ceiling functions and constant lower cutoffs in the complexity estimates. Since $\alpha_A\ge \|A\|$, one has $m=\mathcal{O}(\alpha_A T)$, and equation \eqref{eq:scaled_eff_cond_simplify} implies
\[
\alpha_C\big\|    C_{m,k,p}(Ah)^{-1}\big\|=\mathcal{O}\Big(\alpha_A T C(A)\sqrt{k}\Big).
\]

Substituting this estimate into \eqref{eq:qlsa_raw_cost_scaled}, and including the $\mathcal{O}(g)$ overhead from amplitude amplification, gives
\begin{equation}\label{eq:pre_final_complexity_scaled}
\mathcal{O}\Big(\alpha_A T C(A)g\sqrt{k}\log\Big(\frac{25\sqrt{m}g}{\epsilon}\Big)\Big).
\end{equation}
It remains only to rewrite the logarithmic factors. Since
\[
\Omega=\mathcal{O}\Big(\frac{m^{3/2}g\beta}{\epsilon}\Big),\qquad\beta:=1+\frac{T\e^2\|\bb b\|}{\|\bb x(T)\|}.
\]
Using $m=\mathcal{O}(\alpha_A T)$, we obtain
\[
\log\Omega=\mathcal{O}\Big(\log\frac{\alpha_A T C(A)g\beta}{\epsilon}\Big),\qquad\log\Big(\frac{25\sqrt{m}g}{\epsilon}\Big)=\mathcal{O}\Big(\log\frac{\alpha_A T C(A)g\beta}{\epsilon}\Big).
\]
Therefore the total query complexity is
\[
\mathcal{O}\Big(\alpha_A T C(A)g\log^{3/2}\Big(    \frac{\alpha_A T C(A)g\beta}{\epsilon}\Big)\Big).
\]
This proves the theorem.

\end{document}